\title{\textbf{Simultaneous Inference of a Partially Linear Model in Time Series}}
\author{Jiaqi Li\thanks{Department of Mathematics and Statistics, Washington University in St. Louis, USA. Email: lijiaqi@wustl.edu.}\qquad Likai Chen\thanks{Department of Mathematics and Statistics, Washington University in St. Louis, USA. Email: likai.chen@wustl.edu.}\qquad
Kun Ho Kim\thanks{Department of Finance, John Molson School of Business, Concordia University, Canada. Email: kunhokim8@gmail.com}\qquad
Tianwei Zhou\thanks{Department of Mathematics and Statistics, Washington University in St. Louis, USA. Email: tianweizhou@wustl.edu.\\ We thank Richard Baillie, John Cochrane, Wolfgang H\"{a}rdle, Hira Koul, Jeffrey Racine and Harald Uhlig for their helpful comments. All errors belong to the authors.} 
}
\date{\today}
\begin{document}

\maketitle

\begin{abstract}
We introduce a new methodology to conduct simultaneous inference of the nonparametric component in partially linear time series regression models where the nonparametric part is a multivariate unknown function. In particular, we construct a simultaneous confidence region (SCR) for the multivariate function by extending the high-dimensional Gaussian approximation to dependent processes with continuous index sets. Our results allow for a more general dependence structure compared to previous works and are widely applicable to a variety of linear and nonlinear autoregressive processes. We demonstrate the validity of our proposed methodology by examining the finite-sample performance in the simulation study. Finally, an application in time series, the forward premium regression, is presented, where we construct the SCR for the foreign exchange risk premium from the exchange rate and macroeconomic data.

\vspace{1cm}
\noindent\textit{Keywords:} time series, simultaneous inference, simultaneous confidence region, partially linear model, Gaussian approximation, forward premium regression

\vspace{1cm}

\end{abstract}

\setlength{\parindent}{15pt}

\def\One{\mathbf{1}}

\def\III{\text I}

\def\A{\mathcal A}
\def\B{\mathcal B}
\def\C{\mathcal C}
\def\D{\mathcal D}
\def\F{\mathcal F}
\def\G{\mathcal G}
\def\H{\mathcal H}
\def\I{\mathcal I}
\def\K{\mathcal K}
\def\L{\mathcal L}
\def\M{\mathcal M}
\def\P{\mathcal P}
\def\S{\mathcal S}
\def\T{\mathcal T}
\def\V{\mathcal V}
\def\W{\mathcal W}
\def\Z{\mathcal Z}

\def\EE{\mathbb E}
\def\II{\mathbb I}
\def\NN{\mathbb N}
\def\PP{\mathbb P}
\def\RR{\mathbb R}
\def\ZZ{\mathbb Z}

\def\bbeta{\boldsymbol \beta}

\newtheorem{definition}{Definition}
\newtheorem{assumption}{Assumption}
\newtheorem{theorem}{Theorem}
\newtheorem{proposition}{Proposition}
\newtheorem{corollary}{Corollary}
\newtheorem{lemma}{Lemma}
\newtheorem{remark}{Remark}
\newtheorem{example}{Example}

\newpage

\section{Introduction}\label{sec_intro}

Partially linear models are of interest in many practical problems. For example, in econometrics, \textcite{engle_semiparametric_1986} modeled the electricity sales as the combination of a smooth function of temperature and a linear function of price and income; in materials science, \textcite{green_semi-parametric_1985} used a semi-parametric generalized linear model to analyze the bioassay data for the study of flame retardants; in biology, \textcite{liang_empirical_2009} applied generalized partially linear models to investigate the relationship between viral load and CD4$^+$ cell counts to understand AIDS pathogenesis. See other applications in \textcite{hardle_partially_2000}.

In this paper, we consider a partially linear time series regression model
\begin{equation}
    \label{eq_model}
    Y_i=Z_i^{\top}\bbeta+\mu(X_i)+\sigma(X_i)\epsilon_i,\quad i=1,\ldots,n,
\end{equation}
where $(Z_i,X_i,Y_i)$ are observed stationary processes with $Z_i\in\RR^{l}$, $X_i\in\RR^d$ and $Y_i\in\RR$, for $l,d\ge1$. Here $\bbeta\in\RR^{l}$ is a fixed vector of unknown parameters and $\mu(\cdot)$ [resp. $\sigma^2(\cdot)$] is an unknown smooth regression function (resp. conditional variance or volatility function) from $\RR^d$ to $\RR$. In addition, $\epsilon_i\in\RR$ is an unobserved random error with mean zero, independent of the covariates $X_i$ and $Z_i$. In this work, we'll primarily concentrate on the conditional volatility $\sigma(X_i)$ for clarity's sake although our findings can be extended to $\sigma(X_i,Z_i)$. Compared to completely parametric or nonparametric specifications, a partially linear model in (\ref{eq_model}) enjoys a flexible semi-parametric structure. The parametric components can provide easier interpretations of each variable to better characterize the underlying data-generating mechanism, while the additional nonparametric part allows a data-driven approximation with no specific structures imposed on the true regression function, which can avoid inconsistent estimators and faulty inferences due to model mis-specification in purely parametric models.

In the past decades, much attention has been directed to estimating and testing partially linear models. See, for instance, \textcite{engle_semiparametric_1986,rice_convergence_1986,Robinson,speckman_kernel_1988,schick_root-n_1996} on the $\sqrt{n}$-consistent estimators of $\bbeta$; \textcite{gao_convergence_1995,fan_profile_2005,xie_scad-penalized_2009} on the inferences of $\bbeta$. It is crucial to include the parametric component in equation (\ref{eq_model}) because the parametric component is derived out of relevant theories and it is practically useful to identify the parametric component. One notable example is the Phillips curve with a time-varying natural unemployment rate (\cite{KHK:2014}), where the parameter $\bbeta$ captures the effect of the unemployment rate on the price inflation. Moreover, $\bbeta$ can be employed to measure the impact of various demographic variables, such as age, gender, family size, and the residency type, on the household gasoline consumption, as demonstrated in the U.S. case study by \textcite{kim_simultaneous_2021}. In fact, this paper significantly extends the scope of both \textcite{KHK:2014} and \textcite{kim_simultaneous_2021} by introducing a multivariate, time-dependent, and stochastic nonlinear component into (\ref{eq_model}). Furthermore, the parameter $\bbeta$ in equation (\ref{eq_model}) can represent the factor that determines the efficiency of the foreign currency market, as discussed in Section \ref{sec_app}. Considering these diverse roles played by the parametric component, it is essential and useful to include the linear parametric part for the estimation and the statistical inference of model (\ref{eq_model}), instead of relying solely on the nonparametric part. It is worth noting that the purely nonparametric model corresponds to a special case of model (\ref{eq_model}) where $\bbeta$ equals zero.

The estimation of the nonparametric part has also been intensively studied, including the methods based on kernel, local linear and spline smoothers  (\cite{hamilton_local_1997,yu_penalized_2002,fan_kernel-based_2003,aneiros-perez_local_2008}). Several attempts have been made to the inference of $\mu$ in partially linear models, such as 
the consistency and asymptotic normality for the estimator of $\mu$ by \textcite{liang_asymptotic_1997}, the point-wise confidence intervals of $\mu$ based on empirical likelihood by \textcite{liang_empirical_2009}, and simultaneous confidence bands of multivariate function $\mu$ by \textcite{KHK:2014,kim_simultaneous_2021}. However, all the aforementioned literature focused on the {\it independent} or non-stochastic observations. No previous work investigated the simultaneous inference of $\mu$ in a partially linear time series model with {\it dependence} as (\ref{eq_model}) that is commonly encountered in real data (\cite{hardle_partially_2000}). Further, most of the studies on partially linear models assume the error terms in model (\ref{eq_model}) to be homoskedastic, where the error term $\sigma(X_i)\epsilon_i$ is simply reduced to $\epsilon_i$ and the conditional variance is constant over time. This can be a shortcoming since it would rule out most macroeconomic and financial time series, at least for the application to asset pricing (\cite{nelson_conditional_1991}), where returns may be uncorrelated but feature stochastic volatility. The current paper aims to fill in these gaps by providing theory for the simultaneous inference of the mean trend $\mu$ in (\ref{eq_model}) under a general dependency structure, which also allows for conditional heteroskedasticity.

Specifically, we allow both the errors $\epsilon_i$ and the covariates $X_i$ in (\ref{eq_model}) to be dependent over $i$. Let $X_i=(X_{i1},X_{i2},\ldots,X_{id})^{\top}$ be a stationary process of the form
\begin{equation}
    \label{eq_X_structure}
    X_i = H(\ldots,v_{i-1},v_i),
\end{equation}
where $v_i$ are independent and identically distributed (i.i.d.) random vectors in $\RR^{d'}$ for 
$d'\ge1$ and $H=(H_1,H_2,\ldots,H_d)^{\top}$ is a measurable function such that $X_i$ is a well-defined. The nonlinear Wold representation (\ref{eq_X_structure}) allows a very general class of stationary processes, including linear processes such as vector autoregressive models (VAR) and autoregressive moving average (ARMA) models and nonlinear transforms such as bilinear models, Volterra processes, Markov chain models, threshold/exponential autoregressive models (TAR/EAR) and (generalized) autoregressive conditionally heteroscedastic (ARCH/GARCH) type models, etc. Within this framework, $v_i$ can be viewed as independent inputs of a physical system, and all the dependencies among the outputs $X_i$ result from the underlying data-generating mechanism $H(\cdot)$. 

For the identification of model (\ref{eq_model}), we assume that the error $\epsilon_i$ is independent of both covariates $X_i$ and $Z_i$. In particular, we shall proceed with the conditional expectation $\EE(Y_i\mid X_i,Z_i)=Z_i^{\top}\bbeta + \mu(X_i)$. Following the fixed design case in \textcite{hardle_partially_2000}, we let $Z_i$ be a function of $X_i$ plus another noise term (cf. Assumption \ref{asm_iden}). This means $\EE(Y_i\mid X_i,Z_i)$ cannot be reduced to $\EE(Y_i\mid X_i)$ and indicates that it is nontrivial and also challenging to extend the inference of $\mu(\cdot)$ in a purely nonparamteric model to that under a partially linear setting. If $Z_i=0$, then model (\ref{eq_model}) is simply a nonparametric regression process as a special case, that is, 
\begin{equation}
    \label{eq_model_example}
    Y_i= \mu(X_i)+\sigma(X_i)\epsilon_i,\quad i=1,\ldots,n.
\end{equation}
When $X_i=Y_{i-1}$ and $\epsilon_i$ are i.i.d. random noises, model (\ref{eq_model_example}) incorporates many interesting linear and nonlinear autoregressive processes (AR),  such as AR processes if $\mu(x)=ax$ for some real parameter $a$, and autoregressive conditional heteroscedastic (ARCH) processes if  $\mu(x)=0$ and $\sigma^2(x)=\alpha_0+\alpha_1x^2$ for some non-negative real parameters $\alpha_0,\,\alpha_1\in\RR$.

Many contributions have been made to 
constructing the SCR of $\mu(\cdot)$ in completely nonparametric models. For example, concerning independent data, \textcite{johnston_probabilities_1982} was among the first to investigate the inferences of univariate mean regression functions; \textcite{hardle_asymptotic_1989} derived simultaneous confidence bands for one-dimensional kernel M-estimators; \textcite{HS:2010,GH:2012} constructed uniform confidence bands for conditional quantile and expectile functions, respectively. With regard to dependent cases, see inference of trends in a fixed design with $X_i=i/n$ by \textcite{WZ:2007}; confidence bands for the mean function in functional time series with physical dependence by \textcite{CS2015}; nonlinear regression model with nonstationary regressors by \textcite{Li2017} and a time-varying nonlinear regression model by \textcite{ZW2015}. In particular, \textcite{ZW:2008,LW:2010} proposed inferences of the univariate mean and volatility functions in a similar time series regression model in (\ref{eq_model_example}), where they assumed that the error terms $\epsilon_i$ are i.i.d.. Our work can be viewed as a generalization of their results by extending the dependence structure of $\epsilon_i$ and by including an additional parametric part to accommodate a broader class of data-generating mechanisms.
That is, our work is distinct from \textcite{ZW:2008,LW:2010} in that our model framework is semi-parametric with the multivariate covariate $X_i$ and time-dependent $\epsilon_i$, while the framework in \textcite{ZW:2008,LW:2010} is purely nonparametric with an univariate $X_i$ and an i.i.d. noise $\epsilon_i$.

\textbf{Contributions:} Here we summarize our three main contributions to the literature: Firstly, we extend the dependence structure of the error terms to a more general case by allowing $\epsilon_i$ to be {\it dependent over} $i$, while also accounting for the dependence among the covariates $X_i$ and the conditional heteroscedasticity. Secondly, different from the relevant studies relying on the Gumbel convergence to achieve the asymptotics of the statistics (\cite{zhao_kernel_2006,LW:2010}), we provide a new testing methodology based on the multiplier bootstrap enlightened by \textcite{chernozhukov_central_2017}. This allows one to {\it avoid the notoriously slow convergence issue} associated with the Gumbel distribution.
Thirdly, there is no previous work performing simultaneous inference of the {\it multivariate} $\mu(\cdot)$ in (\ref{eq_model}), allowing the multivariate covariate $X_i\in\mathbb{R}^d$ with $d\geq 2$ under some general {\it time dependence} setting. This paper could be a complement to the non-parametric model validation problem with a general dependence structure and conditional heteroscedasticity, applicable in various multivariate scenarios.

\textbf{Notation:} For a vector $v=(v_1,...,v_d)\in\RR^d$ and $q>0$, we denote $|v|_q=(\sum_{i=1}^d|v_i|^q)^{1/q}$ and $|v|_{\infty}=\max_{1\le i\le d}|v_i|$. For $s>0$ and a random vector $X$, we say $X\in\L^s$ if $\lVert X\rVert_s=[\EE(|X|_2^s)]^{1/s}<\infty$. For two positive number sequences $(a_n)$ and $(b_n)$, we say $a_n=O(b_n)$ or $a_n\lesssim b_n$ (resp. $a_n\asymp b_n$) if there exists $C>0$ such that $a_n/b_n\le C$ (resp. $1/C\le a_n/b_n\le C$) for all large $n$, and say $a_n=o(b_n)$ if $a_n/b_n\rightarrow0$ as $n\rightarrow\infty$. We set $(X_n)$ and $(Y_n)$ to be two sequences of random variables. Write $X_n=O_{\PP}(Y_n)$ if for $\forall \epsilon>0$, there exists $C>0$ such that $\PP(|X_n/Y_n|\le C)>1-\epsilon$ for all large $n$, and say $X_n=o_{\PP}(Y_n)$ if $X_n/Y_n\rightarrow 0$ in probability as $n\rightarrow\infty$. We denote the centered random variable $X$ by $\EE_0(X)$, that is, $\EE_0(X)=X-\EE(X)$. 

\textbf{Roadmap:} The rest of the paper is structured as follows. Section \ref{sec_SCR} introduces the overall methodology to perform simultaneous inference of $\mu(\cdot)$ in (\ref{eq_model}) with $d\geq 1$. The asymptotic properties of the proposed statistics and estimators as well as the implementation are provided in Sections \ref{sec_asym} and \ref{sec_est}. Section \ref{sec_simul} is devoted to a simulation study to evaluate the performance of our methods and Section \ref{sec_app} offers an empirical application, the forward premium anomaly, to demonstrate the validity of the proposed methodology in practice. Section \ref{conclusion} concludes the paper and discusses potential extensions for future research. The technical proofs are deferred to the Supplementary Materials.

\section{Simultaneous Confidence Region (SCR)}\label{sec_SCR}
In this section, we first introduce the definition of the simultaneous confidence region (SCR). Then, we shall follow with the estimator of the nonparametric trend function $\mu(\cdot)$ in (\ref{eq_model}). Further, we illustrate our new methodology on constructing the SCR of $\mu(\cdot)$ based on this estimated $\mu(\cdot)$. The theoretical intuition of the proposed simultaneous inference is also provided.

To conduct simultaneous inference of the trend $\mu(\cdot)$ in model (\ref{eq_model}), we shall construct the nonparametric simultaneous confidence region (SCR) for $\mu(\cdot)$. 
In particular, we consider deriving asymptotic SCR for $\mu(\cdot)$ over the region $\T_d=[T_{11},T_{12}]\times [T_{21},T_{22}]\times \cdots \times [T_{d1},T_{d2}] \subset \RR^d$ with confidence level $100(1-\alpha)\%$, $\alpha\in(0,1)$. To this end, we shall find two functions $l_n(\cdot)$ and $r_n(\cdot)$ based on the observations $(Z_i,X_i,Y_i)$, $1\le i\le n$, such that
\begin{equation}
    \label{eq_UCBdef}
    \lim_{n\rightarrow\infty}\PP\Big(l_n(x)\le \mu(x)\le r_n(x), \text{ for all } x\in\T_d\Big)=1-\alpha.
\end{equation}
Given the SCR for $\mu(\cdot)$, we can verify whether $\mu(\cdot)$ is of some certain parametric form by testing the null hypothesis
\begin{equation}
    \label{eq_testnull}
    \H_0:\,\mu(\cdot)=\mu_{\theta}(\cdot),
\end{equation}
against the alternative $\H_{\A}:\, \mu(\cdot)\neq\mu_{\theta}(\cdot)$, where $\theta\in\Theta$ for some parametric space $\Theta$ and $\mu_{\theta}(\cdot)$ is a multivariate parametric function. Specifically, one can test (\ref{eq_testnull}) by checking whether the condition $l_n(x)\le\mu_\theta(x)\le r_n(x)$ holds for all $x\in\T_d$. If this condition does not hold for some $x\in\T_d$, 
then we reject the null hypothesis at level $\alpha$.
The SCR-based inference is more preferred to other standard inferential procedures utilizing mean-integrated-squared-error (MISE) type statistics, for example, since it is more effective in suggesting the right function form of $\mu(\cdot)$ in (\ref{eq_model}). When the null hypothesis in (\ref{eq_testnull}) is rejected via an MISE-type test statistic, it would be rather difficult to figure out the reason for rejection, which, however, can be easily dealt with under our approach by locating graphically where the SCR is violated by $\mu_{\theta}(\cdot)$ under the null hypothesis.

Next, we provide an estimator for $\mu(\cdot)$ given the observed sample $(Z_i,X_i,Y_i)$, $1\le i\le n$. Let $x=(x_1,\ldots,x_d)^{\top}\in\RR^d$ and set $K(\cdot)\ge0$ to be some kernel function with support $[-1,1]^d$. We consider the following optimization problem:
\begin{equation}
    \label{eq_mu_hat_star_goal}
    \hat\mu^*(x)=\text{argmin}_{\theta}n^{-1}\sum_{i=1}^nK_h(x-X_i)\big(Y_i-Z^{\top}_{i}\hat\bbeta-\theta\big)^2,
\end{equation}
where $K_h(\cdot)=K(\cdot/h)/h^d$, $h$ is a bandwidth parameter with $h\rightarrow0$ and $h^dn\rightarrow\infty$, and $\hat\bbeta$ is a consistent estimator of the unknown parameters $\bbeta$ in (\ref{eq_model}). We shall defer the details of $\hat\bbeta$ to Section \ref{sec_est}. Here in (\ref{eq_mu_hat_star_goal}), we adopt the local constant estimator for the simplicity of notation. One can achieve similar results by applying the local linear estimator introduced in \textcite{fan_local_1996}. By solving (\ref{eq_mu_hat_star_goal}), we can obtain the Nadaraya-Watson estimator for $\mu(\cdot)$ which has the expression
\begin{equation}
    \label{eq_mu_hat_star}
    \hat\mu^*(x) =\sum_{i=1}^nw_h(x,X_i)\big(Y_i-Z^{\top}_{i}\hat\bbeta\big),
\end{equation}
where the weight function $w_h(x,X_i)$ is defined as
\begin{equation}
    \label{eq_weight}
    w_h(x,X_i) = \frac{K_h(x-X_i)}{\sum_{i=1}^nK_h(x-X_i)}.
\end{equation}
Moreover, we denote the consistent estimator of the volatility function $\sigma(\cdot)$ in (\ref{eq_model}) by $\hat\sigma(\cdot)$, and we shall provide the detailed definition and consistency results of $\hat\sigma(\cdot)$ in Section \ref{sec_est}. Given $\hat\bbeta$ and $\hat\sigma(\cdot)$, we consider the statistic $\sup_{x\in\T_d}\big|\hat\mu^*(x)- \mu(x)\big|/\hat\sigma(x)$ to construct the SCR of $\mu(x)$. We shall note that, due to the smoothness of $\mu(\cdot)$ and the consistency of $\hat\bbeta$, this statistic can be approximately written into the supremum of a sum of dependent random fields conditioned on the covariates $X_i$, that is
\begin{equation}
    \label{eq_GA_form}
    \sup_{x\in\T_d}\big|\hat\mu^*(x)- \mu(x)\big|/\hat\sigma(x) \approx \sup_{x\in\T_d}\Big|\sum_{i=1}^nw_h(x,X_i)\sigma(X_i)\epsilon_i\Big|/\hat\sigma(x).
\end{equation}
It is non-trivial to investigate the asymptotic properties of (\ref{eq_GA_form}) when $X_i$ and $\epsilon_i$ are dependent over $i$. \textcite{zhao_kernel_2006,LW:2010} have dealt with the case where the covariates $X_i$ are dependent while the errors $\epsilon_i$ are independent by establishing the Gumbel convergence. To address the more general dependency structure in our study, we propose an extension of the high-dimensional Gaussian approximation theorem introduced by \textcite{chernozhukov_central_2017} to dependent processes with continuous index sets. By this generalized high-dimensional Gaussian approximation, we shall expect the limit distribution of our proposed statistic to be approximated by the one of the maximum of a centered Gaussian random vector $\hat\Z=(\hat \Z_1,\ldots,\hat\Z_n)^{\top}\in\RR^n$, that is
\begin{equation}
    \label{eq_GA_intuitive}
    \PP\big(\sup_{x\in\T_d}\sqrt{h^dn}\big|\hat\mu^*(x)- \mu(x)\big|/\hat\sigma(x)<u\big) \approx \PP\big(\max_{1\le j\le n}|\hat\Z_j|<u\big).
\end{equation}
We defer the detailed definition of the covariance matrix for $\hat\Z$ to (\ref{eq_Q_hat}). Intuitively, the result in (\ref{eq_GA_intuitive}) would enable us to find the critical value of our proposed statistic, and consequently facilitates the construction of the simultaneous confidence region for $\mu(\cdot)$. Specifically, we can approximate $l_n(x)$ and $r_n(x)$ in (\ref{eq_UCBdef}) by the estimators
\begin{equation}
    \label{eq_ln_rn}
    \hat l_n(x) = \hat\mu^*(x) - \hat q_{\alpha}\hat\sigma(x), \quad \hat r_n(x) = \hat\mu^*(x) + \hat q_{\alpha}\hat\sigma(x),
\end{equation}
respectively, where $\hat q_{\alpha}$ is the $(1-\alpha)$-th empirical quantile of $\max_{1\le j\le n}|\hat\Z_j|/\sqrt{h^dn}$ given the significance level $\alpha\in(0,1)$, and it can be evaluated by the multiplier bootstrap (\cite{chernozhukov_central_2017}). Based on the SCR in (\ref{eq_ln_rn}), one can test whether the trend function $\mu(\cdot)$ is of any particular parametric form $\mu_\theta(\cdot)$, such as quadratic or cubic patterns, by evaluating whether $l_n(x)\le\mu_\theta(x)\le r_n(x)$ is satisfied for all $x\in\T_d$. If the SCR fails to {\it entirely} contain this parametric form, then we reject the null hypothesis (\ref{eq_testnull}) at level $\alpha$. We shall provide the detailed steps for implementing the SCR construction at the end of Section \ref{sec_est} after we introduce our main theorems.




\section{Asymptotic Properties}\label{sec_asym}
This section is devoted to our main results on the asymptotic properties for the statistic $\sup_{x\in\T_d}\big|\hat\mu^*(x)- \mu(x)\big|/\hat\sigma(x)$, which provides the theoretical foundation for the construction of SCR. In Section \ref{subsec_GA}, we shall first establish the asymptotic distribution of the proposed statistic under the oracle setting, that is, assuming that the unknown parameters $\bbeta$ and $\sigma(\cdot)$ in the statistic are the true ones. The case with $\bbeta$ and $\sigma(\cdot)$ replaced by their consistent estimators $\hat\bbeta$ and $\hat\sigma(\cdot)$, respectively, are dealt with in Section \ref{sec_est}, where we provide the consistency results for $\hat\bbeta$ and $\hat\sigma(\cdot)$ and show the similar asymptotic distribution of the statistic. 

\subsection{Technical Assumptions}\label{subsec_asm}
We shall start with some regularity conditions which will be useful to establish our main theorems. First, we impose the smoothness condition on the trend function $\mu(\cdot)$ and assume that the volatility function $\sigma(\cdot)$ also varies smoothly and is bounded on the support $\T_d$.
\begin{assumption}[Trend and volatility]\ \\
    \label{asm_sigma}
    (i) (Smoothness)
    Assume that the trend function $\mu(\cdot)$ and volatility function $\sigma(\cdot)$ defined in (\ref{eq_model}) are both Lipschitz continuous on $\T_d$. \\
    (ii) (Bounds) Assume that for some constants $c_{\sigma},c_{\sigma}'>0$, $c_{\sigma}\le \inf_{x\in\T_d}\sigma(x)\le \sup_{x\in\T_d}\sigma(x)\le c_{\sigma}'$.
\end{assumption}
\begin{assumption}[Kernel]\ \\
    \label{asm_kernel}
    (i) The kernel function $K(\cdot)$ in (\ref{eq_mu_hat_star_goal}) is defined on $\II=[-1,1]^d$ and is continuously differentiable up to order two. \\
    (ii) Assume that $\sup_{x\in \II}|K(x)|<\infty$ and $\int_{\II} K(x)dx =1$. Also assume that $K(x)$ has first-order derivative with $ \sup_{x \in \II}\max_{1\le i\le d}|\partial K(x)/\partial x_i|<\infty$. \\
    (iii) Assume that the bandwidth parameter $h\rightarrow0$ and $h^dn\rightarrow\infty$.
\end{assumption}
\begin{remark}[Choice of the bandwidth parameter $h$]
\label{remark_bandwidth}
The bandwidth parameter $h$ can be selected by some data-driven method such as the generalized cross-validation (GCV) introduced by \textcite{CG:1977}. Specifically, the bandwidth $h$ is firstly chosen by GCV and then adjusted downward for undersmoothing, which means $h$ is chosen to converge to 0 faster than the optimal rate yielded by GCV. We defer the detailed discussion of bandwidth selection to Remark \ref{rmk_thm1} after we introduce some additional conditions on $h$ in Theorem \ref{thm1_GA}. Furthermore, we show the results of our simulation study in Section \ref{sec_simul} where we benchmark the effects of multiple bandwidths $h$ on the coverage probabilities.
\end{remark}

Next, we shall specify the dependence structure of the error process $\{\epsilon_i\}_{1\le i\le n}$ and the covariates $X_1,\ldots,X_n$ in model (\ref{eq_model}). In particular, throughout this paper, we assume that $\{\epsilon_i\}_{1\le i\le n}$ is MA($\infty$), which can be formulated as follows:
\begin{equation}
    \label{eq_epsilon}
    \epsilon_i = \sum_{k=0}^{\infty}a_k\eta_{i-k},
\end{equation}
where $\eta_i\in\RR$ are i.i.d. random variables with mean zero and unit variance, independent of $(X_i,Z_i)$ in (\ref{eq_model}). The coefficients $a_k$, $k\ge0$, take values in $\RR$ such that $\epsilon_i$ is a proper random variable. We assume that the innovations $\eta_i$ have finite $q$-th moment, for some $q\ge 4$. We define the absolute sum of the coefficients as $S=\big|\sum_{k\ge0}a_k\big|$. Then, the long-run variance of $\{\epsilon_i\}_{1\le i\le n}$ is 
\begin{equation}
    \label{eq_longrun}
    S^2=\Big(\sum_{k\ge0}a_k\Big)^2=\sum_{k=-\infty}^{\infty}\gamma(k),
\end{equation}
where $\gamma(k)=\EE(\epsilon_i\epsilon_{i+k})$ is the autocovariance of innovations with lag $k\in\ZZ$. 
\begin{assumption}[Finite moment]
    \label{asm_moment}
    Assume that innovations $\{\eta_k\}$ defined in (\ref{eq_epsilon}) are i.i.d. with $\lVert \eta_1\rVert_q<\infty$, for some constant $q\ge4$.
\end{assumption}

\begin{assumption}[Dependence of $\epsilon$]\label{asm_dep_epsilon}
Assume that for any integer $l\ge0$, 
$$\sum_{k\ge l}|a_k|/S=O\big\{(1\vee l)^{-\zeta}\big\},$$
where the constant $\zeta>1$ and $S$ is the long-run standard deviation of $\{\epsilon_i\}_{1\le i\le n}$ defined in (\ref{eq_longrun}). 
\end{assumption}
Assumptions \ref{asm_moment} and \ref{asm_dep_epsilon} post conditions on the moment and dependency structure of errors $\{\epsilon_i\}_{1\le i\le n}$. Specifically, the moment condition in Assumptions \ref{asm_moment} depends on $q$ which characterizes the heavy-tailedness of the noise, and a larger $q$ means a thinner tail. Assumption \ref{asm_dep_epsilon} requires that the dependency strength of $\{\epsilon_i\}_i$ decays at a polynomial rate, which also ensures that the long-run variance of $\{\epsilon_i\}_i$ is finite. 

For the covariates $X_1,\ldots,X_n$, we denote the density function of $X_{i}=(X_{i1},\ldots, X_{id})^{\top}$ by $g(x_1,\ldots, x_d)$. For $s=(s_1,s_2,\ldots,s_d)^{\top} \in \mathbb{R}^d$, we define
\begin{equation}
    \label{eq_density_x}
    g(s\mid \mathcal{F}_{i-1})=\frac{\partial^d\PP(X_{i}\leq s\mid \mathcal{F}_{i-1})}{\partial s_1\ldots \partial s_d},
\end{equation}
where the filtration $\mathcal{F}_i = (\ldots ,v_{i-1},v_{i})$, $i\in \ZZ$.
We let $v_i'$ be an i.i.d. copy of $v_i$ and $\mathcal{F}_{i,\{k\}}$ be $\mathcal{F}_{i}$ with $v_k$ therein replaced by $v_k'$. We define $X_{ij,\{i-k\}}$ as a coupled version of $X_{ij}$ with the form
$$X_{ij,\{i-k\}}=H_j(\ldots ,v_{i-k-1},v'_{i-k},v_{i-k+1},\ldots ,v_i).$$
In addition, we denote $X_{i,\{i-k\}}=(X_{i1,\{i-k\}},\ldots ,X_{id,\{i-k\}})^\top$. We impose assumptions on the moments and dependency structures of covariates $X_1,\ldots,X_n$ as follows.
\begin{assumption}[Covariates $X_i$] \ \\
\label{asm_dep}
(i) (Finite moment). Assume that the conditional density function $g(x\mid\F_{i-1})$ defined in (\ref{eq_density_x}) has finite $s$-th moment, i.e. $\|g(x\mid\F_{i-1})\|_s<\infty$, for some constant $s>2$. \\ 
(ii) (Dependence strength). Following \textcite{Wu:2005}, for all $k\ge1$, we define the physical dependence measure
$$\theta_{k,s} =\sup_{x \in \T_d} \big\lVert g(x\mid \F_{i-1})-g(x\mid  \F_{i-1,\{i-k\}})\big\rVert_s.$$
We assume that for some constant $\xi > 0$ and positive integer $m$,
$$\sum_{k\geq m} \theta_{k,s} =O(m^{-\xi}).$$
\end{assumption}
Roughly speaking, the functional dependence measure $\theta_{k,s}$ defined in Assumption \ref{asm_dep} quantifies the dependence of $X_i$ on $v_{i-k}$ by measuring the distance between $g(x\mid\F_{i-1})$ and its coupled version $g(x\mid\F_{i-1,\{i-k\}})$. The physical dependence measure accounts for any measurable function of cumulative i.i.d. noises, which makes it applicable to a wide range of linear and nonlinear time series processes. To elaborate such conditional dependency structure, we shall consider a simple moving average (MA) example as follows. 

\begin{example}
Suppose that the covariates $(X_1,X_2,\ldots,X_n)$ is a linear process, which takes the form
$$X_{i} = \sum_{l\geq 0}A_{l}v_{i-l},$$
where $v_i \in \mathbb{R}^d$ are i.i.d random vectors with continuous density function $f(\cdot): \RR^d\mapsto\RR$, and $A_l\in\RR^{d\times d}$ are coefficient matrices, $l\ge0$. Without loss of generality, we assume that $A_0$ is an identity matrix, and then, $\theta_{k,s}$ in this case can be written into
$$\theta_{k,s} = \sup_{x \in \T_d} \Big\| f\Big(x-\sum_{l\ge 1}A_l v_{i-l}\Big)-f\Big(x-\sum_{l\ge 1}A_l v_{i-l}+A_k(v_{i-k}-v'_{i-k})\Big)\Big\|_s,$$
which, by the continuity of the density function $f(\cdot)$, can be bounded as follows
\begin{align*}
\theta_{k,s}\lesssim  \max_{1\le j\le d}|A_{k,j,\cdot}|_2\|v_{i-k}-v'_{i-k}\|_s,   
\end{align*}
where $A_{k,j,\cdot}$ is the $j$-th row of matrix $A_k$. In Assumption \ref{asm_dep}(ii), this condition actually indicates an algebraic decay rate of the temporal dependence of $X_i$. Namely, for any $m\geq 1,$ there exists some constant $\xi^*>0,$ such that
  \begin{equation*}
  \max_{1\le j\le d}\sum_{k\geq m} |A_{k,j,\cdot}|_2 \lesssim m^{-\xi^*}.
  \end{equation*}

\end{example}

\begin{assumption}[Bounds and smoothness]
    \label{asm_smooth}
    For some constants $c_g,c_g'>0$, assume that
    $$c_g\leq \inf_{x \in \T_d} g(x)\leq \sup_{x \in \T_d} g(x) \leq c_g',$$
    and 
    $$\sup_{x \in \T_d}\max_{1\leq j\leq d} \big|\partial g(x)/\partial x_j\big| <\infty.$$
\end{assumption}
Assumption \ref{asm_smooth} imposes conditions on the boundary and smoothness of the density function $g(x)$ on the support $\T_d$. In the literature, conditions similar to Assumption \ref{asm_smooth} have been commonly used; see, for instance, \textcite{aneiros-perez_local_2008,kim_simultaneous_2021}.

\subsection{Gaussian Approximation}\label{subsec_GA}

This subsection is devoted to our main results, the asymptotic distribution of the proposed statistic derived by Gaussian approximation. To explicitly illustrate our theory on the simultaneous inference of trend $\mu(\cdot)$, we shall first assume that $\bbeta$ and $\sigma(\cdot)$ in model (\ref{eq_model}) are both known. In particular, we define 
\begin{equation}
    \label{eq_mu_hat_sol}
    \hat\mu(x) =\sum_{i=1}^nw_h(x,X_i)\big(Y_i-Z^{\top}_{i}\bbeta\big),
\end{equation}
and consider the statistic $\sup_{x\in\T_d}\big|\hat\mu(x)- \mu(x)\big|/\sigma(x)$. By Slutsky's theorem, the asymptotic distribution of the statistic still holds when replacing the true $\bbeta$ and $\sigma(\cdot)$ therein by their consistent estimators $\hat\bbeta$ and $\hat\sigma(\cdot)$, respectively. Hence, in this subsection, we shall first proceed under the oracle setting, and we refer to the implementation with theoretical guarantees to Section \ref{sec_est}.

As mentioned in Section \ref{sec_SCR}, we aim to provide the limiting distribution for the statistic $\sup_{x\in\T_d}\big|\hat\mu(x)- \mu(x)\big|/\sigma(x)$ by applying the high-dimensional Gaussian approximation. Specifically, we shall generate a centered Gaussian random field $\{\Z_t\}_{t\in\T_d}$ and utilize the distribution of $\sup_{t\in\T_d}|\Z_t|$ for the approximation. We denote the covariance matrix of $\sqrt{h^dn}\big|\hat\mu(x)- \mu(x)\big|/\sigma(x)$ conditioned on the covariates $X_i$, $1\le i\le n$, by $Q=(Q_{t,s})_{t,s\in\T_d}$, where $Q_{t,s}$ takes the form
\begin{equation}
    \label{eq_cov_Z}
    Q_{t,s}= h^dn\sum_{k=-\infty}^{\infty}\sum_{i=1\vee (1-k)}^{n\wedge (n-k)}c_{t,s,i,k}w_h(x_t,X_i)w_h(x_s,X_{i+k})\gamma(k),
\end{equation}
where $c_{t,s,i,k} = \sigma(x_t)^{-1}\sigma(x_s)^{-1}\sigma(X_i)\sigma(X_{i+k})$. Further, in practical usage, to evaluate the covariance matrix $Q$, we would need to estimate the autocovariance $\gamma(k)$, for each $k\ge0$, while this is unrealistic when $k$ goes to infinity. Therefore, to establish a consistent estimator for $Q$, we define a truncated version of $Q$ as $Q^{(L)}=(Q_{t,s}^{(L)})_{t,s\in\T_d}$, for some large positive integer $L$, where
$Q_{t,s}^{(L)}$ is defined as
\begin{equation}
    \label{eq_cov_Z_truncate}
    Q^{(L)}_{t,s}= h^dn\sum_{k=1-L}^{L-1}\sum_{i=1\vee (1-k)}^{n\wedge (n-k)}c_{t,s,i,k}w_h(x_t,X_i)w_h(x_s,X_{i+k})\gamma(k).
\end{equation}
When applied to real data, the autocovariance $\gamma(k)$ can be estimated by the notable existing methods and we defer the details of this estimation to Section \ref{sec_est}. Now we let $\{\Z_t\}_{t\in\T_d}$ be a Gaussian random field with mean zero and conditional covariance matrix $Q^{(L)}$ given the covariates $X_1,X_2,\ldots,X_n$. The first main theorem is stated as follows.
\begin{theorem}[Gaussian approximation]
    \label{thm1_GA}
    Suppose that Assumptions \ref{asm_sigma}-\ref{asm_smooth} hold. Then, for $$\Delta_1= (h^dn)^{-q/2}n\log^q(n)+1/n, \quad \Delta_2=(h^dn)^{-1/6} \log^{7/6} (n) + (n^{2/q}/(h^dn))^{1/3} \log(n),$$
    $$\Delta_3=h\log(n)+h^{2+d/2}n^{1/2
    }\sqrt{\log(n)}+n^{-\zeta}\log(n), \quad \Delta_4=L^{-\zeta/3}\log^{2/3}(n),$$
    we have
    $$\sup_{u\in\RR}\Big|\PP\big(\sup_{x\in\T_d}\sqrt{h^dn}\big|\hat\mu(x)- \mu(x)\big|/\sigma(x)<u\big) - \PP\big(\sup_{t\in\T_d}|\Z_t|<u\big) \Big| \lesssim \Delta_1+\Delta_2+\Delta_3 +\Delta_4,$$
    where the constants in $\lesssim$ are independent of $n$ and $h$. 
    If in addition,
    \begin{align}
        \label{thm1_o1}
        &(h^d)^{2-q}n^{2-q}\log^{3q}(n)\rightarrow0, \quad h\log(n)\rightarrow0, \quad h^{4+d}n\log(n) \rightarrow0,\nonumber \\ 
        & \qquad n^{-\zeta}\log(n) \rightarrow 0 \quad \text{and} \quad L^{-\zeta}\log^2(n) \rightarrow0,
    \end{align}
    then we have
    $$\sup_{u\in\RR}\Big|\PP\big(\sup_{x\in\T_d} \sqrt{h^dn}\big|\hat\mu(x)- \mu(x)\big|/\sigma(x)<u\big) - \PP\big(\sup_{t\in\T_d}|\Z_t|<u\big) \Big| \rightarrow 0.$$
\end{theorem}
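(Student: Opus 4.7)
The plan is to decompose $\hat\mu(x)-\mu(x)$ into a bias part $B(x)$ and a stochastic part $V(x)$, reduce the stochastic part to a high-dimensional sum of independent noises via MA truncation, and then invoke a Chernozhukov--Chetverikov--Kato (CCK)-style Gaussian approximation after a careful discretization of $\T_d$. Writing $Y_i - Z_i^{\top}\bbeta = \mu(X_i) + \sigma(X_i)\epsilon_i$ and using $\sum_i w_h(x,X_i)=1$, one has
\begin{equation*}
    \hat\mu(x)-\mu(x) = B(x) + V(x),\quad B(x) := \sum_{i=1}^n w_h(x,X_i)\bigl[\mu(X_i)-\mu(x)\bigr],\quad V(x) := \sum_{i=1}^n w_h(x,X_i)\sigma(X_i)\epsilon_i.
\end{equation*}
Lipschitz continuity of $\mu$ and the compact support of $K$ give $|B(x)|\lesssim h$ deterministically; a sharper Taylor expansion combined with the smoothness of $g$ (Assumption \ref{asm_smooth}) and a uniform control of the stochastic fluctuations of the kernel denominator $\sum_i K_h(x-X_i)$ refines this to a deterministic $O(h^2)$ bias plus stochastic deviations of order $h\sqrt{\log n/(h^dn)}$. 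After multiplying by $\sqrt{h^dn}$, the uniform bias bound is of order $h\log n + h^{2+d/2}n^{1/2}\sqrt{\log n} + n^{-\zeta}\log n$, and feeding this into a Gaussian anti-concentration inequality for $\sup_{t\in\T_d}|\Z_t|$ converts the uniform bias bound into the Kolmogorov--Smirnov bound $\Delta_3$; the $n^{-\zeta}\log n$ piece accounts for residual MA-tail effects controlled by Assumption \ref{asm_dep_epsilon}.

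For $V(x)/\sigma(x)$, I would first truncate the MA: set $\epsilon_i^{(L)}=\sum_{k=0}^{L-1}a_k\eta_{i-k}$ and define $V^{(L)}(x)$ analogously. The truncation error $V(x)-V^{(L)}(x)$ has $L^2$-norm dominated by $\sum_{k\ge L}|a_k|\lesssim L^{-\zeta}$ by Assumption \ref{asm_dep_epsilon}; a chaining argument over $\T_d$ followed by Gaussian anti-concentration yields exactly the $\Delta_4=L^{-\zeta/3}\log^{2/3}n$ contribution. Exchanging the order of summation,
\begin{equation*}
    V^{(L)}(x) = \sum_{j}\eta_j\,c_j(x),\qquad c_j(x) := \sum_{k=0}^{L-1}a_k\,w_h(x,X_{j+k})\sigma(X_{j+k}),
\end{equation*}
so that, conditionally on $\mathcal{X}=(X_1,\ldots,X_n)$, $V^{(L)}(x)$ is a linear functional of the i.i.d. noises $\{\eta_j\}$; direct computation then shows that the conditional covariance of $\sqrt{h^dn}\,V^{(L)}(x)/\sigma(x)$ is exactly the truncated kernel $Q^{(L)}$ in (\ref{eq_cov_Z_truncate}), which is precisely the target covariance of $\{\Z_t\}_{t\in\T_d}$.

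I would then discretize $\T_d$ by a grid of $N\asymp n^{a}$ points (for some $a$ chosen below) whose cardinality is large enough that, using Assumption \ref{asm_kernel} and the derivative of $K_h$, the uniform modulus of continuity of $V^{(L)}(x)/\sigma(x)$ makes the discretization error negligible, and apply the CCK high-dimensional Gaussian approximation for maxima of sums of independent random vectors to $(V^{(L)}(x)/\sigma(x))_{x\in\mathrm{grid}}$, conditionally on $\mathcal{X}$. Assumption \ref{asm_moment} provides the $q$-th moment input which, together with the variance scaling $h^dn$, generates the truncation and approximation error terms $\Delta_1+\Delta_2$ of CCK-type. The main obstacle is the interplay between the grid cardinality $N$ and the CCK rate: the bound degrades polynomially in $\log N$, so the grid must be fine enough to make the approximation tight on $\T_d$ while keeping the $\log n$ factors in $\Delta_1,\Delta_2$ manageable, and the dependence in both $\epsilon_i$ and $X_i$ must be tamed simultaneously, which is why we use the MA truncation at level $L$ together with the physical dependence conditions of Assumption \ref{asm_dep} to control the randomness in the coefficients $c_j(x)$ uniformly over $\mathcal{X}$. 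Combining the four sources of error -- bias ($\Delta_3$), MA truncation ($\Delta_4$), moment truncation within CCK ($\Delta_1$), and Gaussian approximation proper ($\Delta_2$) -- yields the stated bound; the second conclusion follows by inspection, since conditions (\ref{thm1_o1}) force each of $\Delta_1,\Delta_2,\Delta_3,\Delta_4$ to vanish.
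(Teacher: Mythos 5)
Your overall architecture matches the paper's: bias/stochastic decomposition, discretization of $\T_d$, truncation of the MA$(\infty)$ noise so that (conditionally on the covariates) the statistic becomes a maximum of sums of independent terms, an application of the Chernozhukov--Chetverikov--Kato proposition, and Nazarov-type anti-concentration to absorb the various perturbations. Your accounting for $\Delta_1$, $\Delta_2$ and the first two terms of $\Delta_3$ is essentially the paper's. However, there is a genuine gap in how you produce $\Delta_4$ and connect to the specific limit field $\Z_t$.

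First, your claim that the conditional covariance of $\sqrt{h^dn}\,V^{(L)}(x)/\sigma(x)$ is \emph{exactly} $Q^{(L)}$ is false. Truncating the MA at level $L$ replaces every autocovariance by $\gamma^{(L)}(k)=\sum_{j=0}^{L-1-|k|}a_ja_{j+|k|}$, whereas $Q^{(L)}$ in (\ref{eq_cov_Z_truncate}) is built from the \emph{true} $\gamma(k)$ restricted to lags $|k|<L$; the discrepancy is of order $\sum_{j\ge0}|a_j|\sum_{l\ge L}|a_l|\lesssim L^{-\zeta}$ in sup-norm and does not vanish. Second, your route to $\Delta_4$ --- chaining plus anti-concentration applied to the truncation error $V-V^{(L)}$ --- would yield a term of the form $L^{-\zeta}\log(n)$ (anti-concentration is linear in the perturbation size times $\sqrt{\log N}$), not $L^{-\zeta/3}\log^{2/3}(n)$. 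The exponents $1/3$ and $2/3$ are the signature of a Gaussian \emph{comparison} inequality (the paper's Lemma \ref{lemma_comparison}): one compares the Gaussian vector delivered by the CCK step, whose covariance is the untruncated $Q$ restricted to the net, with the target $\Z_{t_j}$ whose covariance is $Q^{(L)}$, using $\max_{j_1,j_2}|Q_{j_1,j_2}-Q^{(L)}_{j_1,j_2}|\lesssim L^{-\zeta}$ together with a variance lower bound $\min_j Q_{j,j}\gtrsim c$. Your proposal never invokes such a comparison, so it cannot land on the stated $\Z_t$. Relatedly, you conflate two truncation levels that the paper keeps separate: the MA is truncated at $m=n/2$ (error $n^{-\zeta}\log(n)$, which is where that term in $\Delta_3$ actually comes from --- it is not a bias effect), while $L$ enters only through the definition of $Q^{(L)}$ and hence only through the Gaussian comparison producing $\Delta_4$. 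Your version is salvageable --- truncating at $L$ and then adding the missing comparison step recovers the stated rate --- but as written the derivation of $\Delta_4$ is incorrect and the link to $\Z_t$ is broken. A smaller omission is the high-probability lower bound on the conditional variances (Step 1 of the paper's Lemma \ref{lemma_m_GA}), which is required both for the CCK proposition and for every anti-concentration step you invoke.
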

\begin{remark}[Comments on the convergence rate]
    \label{rmk_thm1}
    In Theorem \ref{thm1_GA}, condition (\ref{thm1_o1}) imposes conditions on the strength of dependency and the bandwidth parameter $h$. The first part $(h^d)^{2-q}n^{2-q}\log^{3q}(n)\rightarrow0$ requires that $h$ should not be too small, while the second and third parts, $h\log(n)\rightarrow0$ and $h^{4+d}n\log(n)$, aim to keep $h$ from being too large. Therefore, one can apply cross-validation to choose the bandwidth $h$, such as the order of $n^{-1/3}$ or $n^{-1/5}$. The last two parts $n^{-\zeta}\log(n) \rightarrow 0$ and $L^{-\zeta}\log^2(n) \rightarrow0$ suggest that the dependency strength of $\epsilon_i$ cannot be too strong, and the integer $L$ in the conditional covariance matrix $Q^{(L)}$ should be large to approximate the true covariance matrix of the proposed statistic, respectively. For a general moving average model MA$(\infty)$ defined in (\ref{eq_epsilon}), we could choose $L$ to be some large positive integer such as $L=\sqrt{n}$. For some particular MA$(p)$ model, one could simply let $L=p$. We refer to Proposition \ref{prop_longrun} and the corresponding discussion for more details.
\end{remark}
As shown by Theorem \ref{thm1_GA}, our approach differs from the one based on the Gumbel convergence in extreme value theory adopted by \textcite{zhao_kernel_2006,LW:2010}. We extend the high-dimensional Gaussian approximation in \textcite{chernozhukov_central_2017} to dependent processes and provide a Gaussian limit distribution for $\sup_{x\in\T_d}|\hat\mu(x)-\mu(x)|/\sigma(x)$. Our method does not need the density estimate of covariate $X_i$ to build the test statistic, and thus offers easier implementation (see details in Section \ref{sec_est}) than the competing ones.

\section{Implementation of SCR}\label{sec_est}
In the previous sections, we assumed that $\bbeta$, $\sigma(\cdot)$ in model (\ref{eq_model}) and the conditional long-run covariance matrix of $\sup_{t\in\T_d}\big|\hat\mu(x)- \mu(x)\big|/\sigma(x)$ are all known, which, however, is not realistic in practice. Hence, we shall introduce the estimators $\hat\bbeta$, $\hat\sigma(\cdot)$ and $\hat Q^{(L)}$ in this subsection, followed by the consistency results and the limit distribution of the statistic built on these estimates. The detailed instructions on how to construct SCR in practice are also provided in Section \ref{subsec_scr_construct}.

\subsection{Consistent Estimators}
The literature on estimating the parametric components in model (\ref{eq_model}) has a long history. \textcite{Robinson} was among the first contributing to this problem, where he derived a least square estimator of $\bbeta$ based on a Nadaraya-Waston kernel estimator of $\mu$ and provided the $\sqrt{n}$-consistency result of the estimate. In this study, we shall consider the same estimator for $\bbeta$, and we will show that the $\sqrt{n}$-consistent rate still holds under our setting. Define the Robinson's estimator $\hat\bbeta$ as follows,
\begin{equation}
    \label{eq_betahat}
    \hat\bbeta=\Big(\sum_{i=1}^n(Z_i-\tilde Z_i)(Z_i-\tilde Z_i)^{\top}\One_{X_i\in\T_d}\Big)^{-1}\Big(\sum_{i=1}^n(Y_i-\tilde Y_i)(Z_i-\tilde Z_i)^{\top}\One_{X_i\in\T_d}\Big),
\end{equation}
where $\tilde Y_i$ and $\tilde Z_i$ are the kernel estimators of $Y_i$ and $Z_i$, respectively, which are
\begin{equation}
    \label{eq_tilde_YZ}
    \tilde Y_i=\sum_{t=1}^nw_h(X_i,X_t)Y_t, \quad \text{and} \quad \tilde Z_i=\sum_{t=1}^nw_h(X_i,X_t)Z_t.
\end{equation}
To establish the consistency result of the Robinson's estimator $\hat\bbeta$, we shall introduce some regularity conditions on the covariates $Z_i$ of the parametric part in (\ref{eq_model}). Specifically, following \textcite{hardle_asymptotic_1989}, we consider $Z_i$ as random design points and related to $X_i$ in the following way.
\begin{assumption}[Random design]\ \\
    \label{asm_iden}
    (i) Assume that there exists some bounded function $h(\cdot)$: $\mathbb{R}^d \rightarrow \mathbb{R}^l$, such that
    $$Z_{i} = h(X_i)+u_i,$$
    where $h(\cdot)$ is Lipschitz continuous on $\T_d$ and $u_i$ are centered random vectors in $\mathbb{R}^l$ independent of $X_i$. The minimum eigenvalue of the covariance matrix $$\Sigma_u = \EE(u_iu_i^{\top})$$ 
    is lower bounded, that is  $\lambda_{\text{min}}(\Sigma_u) \ge c$, for some constant $c>0$. \\
    (ii) Assume that for each $1\le i\le n$ and $1\le j\le l$, $0<\|u_{ij}\|_p<\infty$, for some constant $p\ge4$, where $u_{ij}$ is the $j$-th element of $u_i$.  \\
    (iii) We suppose that for some measurable function $f$, 
    $$u_i=f(\ldots,e_{i-1},e_i), \quad \text{and} \quad u_{i,\{i-k\}}=f(\ldots,e_{i-k-1},e_{i-k}',e_{i-k+1},\ldots,e_i),$$
    where $e_i$ are i.i.d. random vectors in $\RR^l$ independent of $X_i$ and $\epsilon_i$, and $e_i'$ is an i.i.d. copy of $e_i$. We assume that for $p\ge 4$,
    $$\sum_{k\geq 0} \Big\| \max_{1\le j\le l}\big|u_{ij}- u_{ij,\{i-k\}}\big|\Big\|_p <\infty,$$ 
    where $u_{ij,\{k\}}$ is the $j$-th element of $u_{i,\{k\}}$. 
\end{assumption}

Assumption \ref{asm_iden} can be considered as a generalization of the conditions with weaker requirements compared to the ones introduced in \textcite{hardle_asymptotic_1989}, \textcite{gao_convergence_1995} and \textcite{Sun}, where they assumed $u_i$ to be i.i.d. random vectors, while in this paper, we allow $u_i$ to be weakly dependent over $i$. In fact, Assumption \ref{asm_iden}(iii) implies the short-range dependence (SRD) of $u_i$. The following proposition asserts that, under such dependence conditions, we can consistently estimate $\bbeta$ in the time series regression model (\ref{eq_model}) by the Robinson's estimator at the rate of $\sqrt{n}$.
\begin{proposition}[Consistency of $\hat\bbeta$]
    \label{prop1}
    Under Assumptions \ref{asm_sigma}-\ref{asm_iden}, if $h^4n\rightarrow0$, we have
    $$|\hat\bbeta-\bbeta|_{\infty}=O_{\PP}\big\{1/\sqrt{n}\big\}.$$
\end{proposition}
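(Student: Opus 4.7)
The plan is to write $\hat\bbeta-\bbeta=S_n^{-1}T_n$, show that $S_n/n$ converges in probability to a positive-definite matrix, and bound $T_n=O_{\PP}(\sqrt n)$ using the weak-dependence structure; the $\sqrt n$-rate then follows.

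Substituting $Y_i=Z_i^{\top}\bbeta+\mu(X_i)+\sigma(X_i)\epsilon_i$ into (\ref{eq_tilde_YZ}) gives
\[
Y_i-\tilde Y_i=(Z_i-\tilde Z_i)^{\top}\bbeta+B_i^{\mu}+E_i,
\]
with $B_i^{\mu}=\mu(X_i)-\sum_t w_h(X_i,X_t)\mu(X_t)$ and $E_i=\sigma(X_i)\epsilon_i-\sum_t w_h(X_i,X_t)\sigma(X_t)\epsilon_t$. Plugging this into (\ref{eq_betahat}),
\[
\hat\bbeta-\bbeta=S_n^{-1}T_n,\quad S_n=\sum_{i=1}^n(Z_i-\tilde Z_i)(Z_i-\tilde Z_i)^{\top}\One_{X_i\in\T_d},\quad T_n=\sum_{i=1}^n(Z_i-\tilde Z_i)(B_i^{\mu}+E_i)\One_{X_i\in\T_d}.
\]
Assumption \ref{asm_iden}(i) further decomposes $Z_i-\tilde Z_i=(h(X_i)-\tilde h(X_i))+(u_i-\tilde u_i)$, where $\tilde h(X_i)$ and $\tilde u_i$ are the analogous Nadaraya--Watson smoothers. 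Lipschitz continuity of $h(\cdot)$ and Assumption \ref{asm_kernel} give $\sup_i|h(X_i)-\tilde h(X_i)|_\infty=O(h)$, while a standard kernel variance calculation combined with the SRD of $\{u_i\}$ (Assumption \ref{asm_iden}(iii)) yields $\lVert\tilde u_i\rVert_2=O(1/\sqrt{h^dn})$.

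For $S_n$, the leading term after expansion is $\sum_i u_iu_i^{\top}\One_{X_i\in\T_d}$. Because $u_i$ is independent of $X_i$ and $\{u_i\}$ is SRD, a LLN gives $n^{-1}\sum_i u_iu_i^{\top}\One_{X_i\in\T_d}\to_{\PP}\PP(X_1\in\T_d)\Sigma_u$, while the cross and bias remainders are $o_{\PP}(1)$ under $h\to0$ and $h^dn\to\infty$. Since $\lambda_{\min}(\Sigma_u)\ge c$, we conclude that $\lambda_{\min}(S_n/n)$ stays bounded away from $0$ with probability tending to one, so $S_n$ is well-conditioned.

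For $T_n$, I would expand both factors and group the products into four families: (a) the pure-noise term $\sum_i u_i\sigma(X_i)\epsilon_i\One_{X_i\in\T_d}$; (b) bias--noise cross terms such as $\sum_i u_iB_i^{\mu}\One_{X_i\in\T_d}$ and $\sum_i(h(X_i)-\tilde h(X_i))\sigma(X_i)\epsilon_i\One_{X_i\in\T_d}$; (c) the bias--bias term $\sum_i(h(X_i)-\tilde h(X_i))B_i^{\mu}\One_{X_i\in\T_d}$; and (d) residual pieces containing $\tilde u_i$ or $\sum_t w_h(X_i,X_t)\sigma(X_t)\epsilon_t$. Piece (a) is centered because $\epsilon$ is independent of $(u,X)$ and $u$ of $X$; its variance factors as $\sum_{i,j}\gamma(i-j)\EE[u_iu_j^{\top}]\EE[\sigma(X_i)\sigma(X_j)\One_{X_i\in\T_d}\One_{X_j\in\T_d}]=O(n)$ by absolute summability of both the $\gamma(k)$'s (Assumption \ref{asm_dep_epsilon}) and the matrix autocovariances of $\{u_i\}$, yielding (a)$=O_{\PP}(\sqrt n)$. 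Piece (b) is centered since $\EE[u_i\mid X_i]=0=\EE[\epsilon_i]$ and has variance $O(nh^2)$, so size $O_{\PP}(\sqrt n\,h)=o_{\PP}(\sqrt n)$. Piece (c) is deterministically $O(nh^2)$, which is $O(\sqrt n)$ precisely because the hypothesis $h^4n\to0$ forces $nh^2=o(\sqrt n)$. Piece (d) is handled by Cauchy--Schwarz combined with the $O(1/\sqrt{h^dn})$ bounds on the averaged noises. Hence $|T_n|_\infty=O_{\PP}(\sqrt n)$, and combining with the $S_n$ bound gives $|\hat\bbeta-\bbeta|_\infty=O_{\PP}(1/\sqrt n)$.

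The most delicate step is the variance bound for piece (a): the summand mixes three jointly dependent stationary processes, so factoring the joint fourth-order moment into separately tractable autocovariances relies crucially on the mutual-independence structure of model (\ref{eq_model}) together with Assumption \ref{asm_iden}(i). One must also translate the physical-dependence bound in Assumption \ref{asm_iden}(iii) into absolute summability of $\EE[u_iu_{i+k}^{\top}]$, a standard but necessary conversion. The bandwidth condition $h^4n\to0$ in the hypothesis is exactly what the bias--bias term (c) requires.
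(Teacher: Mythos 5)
Your proposal follows essentially the same route as the paper's proof: the same normal-equation decomposition $\hat\bbeta-\bbeta=S_n^{-1}T_n$, the same split $Z_i-\tilde Z_i=(h(X_i)-\tilde h(X_i))+(u_i-\tilde u_i)$ from Assumption \ref{asm_iden}, the Gram matrix bounded below via $\lambda_{\min}(\Sigma_u)$, bias terms of order $nh^2$ controlled exactly by $h^4n\rightarrow0$, and centered noise terms of order $\sqrt n$ via summable autocovariances. The only place you are looser than the paper is piece (d): a naive Cauchy--Schwarz using the pointwise $O_{\PP}(1/\sqrt{h^dn})$ bound on the smoothed noises yields only $O_{\PP}(\sqrt{n/h^d})$, so to push these residual terms below $\sqrt n$ you still need to exploit the centering of $\epsilon_i$ and $u_i$ through a variance computation of the kind the paper carries out with the projection operator $\P_k(\cdot)$.
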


Concerning the conditional volatility function $\sigma(X_i)$ in (\ref{eq_model}), for each $1\le i\le n$, we define the local constant estimator
\begin{equation}
    \label{eq_sigmax_est}
    \hat\sigma^2(X_i)=\sum_{t=1}^nw_h(X_i,X_t)(Y_t-Z_t^{\top}\hat\bbeta  - \hat\mu^*(X_i))^2.
\end{equation}
Note that the bandwidth parameter $h$ in (\ref{eq_sigmax_est}) can be selected differently from the one used in the estimator for $\mu(\cdot)$. Similar methods have been applied in the existing studies; see, for example, 
\textcite{fan_local_1996,zhao_kernel_2006}. In this paper, we adopt the same bandwidth parameters in both $\hat\mu^*(\cdot)$ and $\hat\sigma(\cdot)$ for brevity. The consistency result of $\hat\sigma(\cdot)$ is stated as follows.
\begin{proposition}[Consistency of $\hat\sigma$]
    \label{prop2}
    Assume that the conditions in Theorem \ref{thm1_GA} hold. Then,
    we have
    $$\sup_{x\in\T_d}|\hat\sigma^2(x)-\sigma^2(x)|=O_{\PP}\Big\{h+\frac{1}{n}+\sqrt{\frac{\log(n)}{h^dn}}\Big\}.$$
\end{proposition}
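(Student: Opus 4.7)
The plan hinges on the algebraic identity
$$\hat\sigma^2(x)=\sum_{t=1}^n w_h(x,X_t)\,e_t^2-\hat\mu^*(x)^2,\qquad e_t:=Y_t-Z_t^{\top}\hat\bbeta,$$
which follows from $\sum_t w_h(x,X_t)=1$ and $\sum_t w_h(x,X_t)e_t=\hat\mu^*(x)$, so that the two $\hat\mu^*(x)$ contributions in the expansion of $(e_t-\hat\mu^*(x))^2$ collapse to $-\hat\mu^*(x)^2$. Writing $\delta:=\hat\bbeta-\bbeta$ and $e_t=\mu(X_t)+\sigma(X_t)\epsilon_t-Z_t^{\top}\delta$, expanding $e_t^2$ produces a handful of Nadaraya--Watson averages plus cross terms in $\delta$, all of which I would bound separately and then recombine.

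The three principal averages are $A_1(x)=\sum_t w_h(x,X_t)\mu^2(X_t)$, $A_2(x)=\sum_t w_h(x,X_t)\sigma^2(X_t)\epsilon_t^2$, and $A_3(x)=2\sum_t w_h(x,X_t)\mu(X_t)\sigma(X_t)\epsilon_t$. For $A_1$, Lipschitz continuity of $\mu$ from Assumption \ref{asm_sigma} gives bias $O(h)$ on the support of the kernel, while a uniform Bernstein-type maximal inequality for kernel sums of the physical-dependence process $X_t$ (Assumption \ref{asm_dep}) contributes stochastic error $O_{\PP}(\sqrt{\log(n)/(h^dn)})$. I would split $A_2$ as $\sum_t w_h(x,X_t)\sigma^2(X_t)$, handled like $A_1$, plus $\sum_t w_h(x,X_t)\sigma^2(X_t)(\epsilon_t^2-\EE\epsilon_t^2)$, which is a weighted sum of centered dependent variables that I would control by truncating the MA$(\infty)$ tail via Assumption \ref{asm_dep_epsilon}, applying Bernstein on a polynomial-order $\delta$-net of $\T_d$, and absorbing the approximation error in $x$ through the Lipschitz kernel of Assumption \ref{asm_kernel}; the same device gives the same rate for $A_3$. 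The cross terms involving $\delta$ yield $|\delta|_{\infty}^2\cdot O_{\PP}(1)=O_{\PP}(1/n)$ from the quadratic-in-$Z$ piece---this is the source of the $1/n$ in the stated rate---and $|\delta|_{\infty}\cdot O_{\PP}(1)=O_{\PP}(n^{-1/2})$ from each of the $\mu\cdot Z^{\top}\delta$ and $\sigma\epsilon\cdot Z^{\top}\delta$ interactions, both of which are dominated by $\sqrt{\log(n)/(h^dn)}$ since $h\to 0$. Finally, $\hat\mu^*(x)^2-\mu^2(x)=(\hat\mu^*(x)-\mu(x))(\hat\mu^*(x)+\mu(x))$ is controlled by the uniform rate $\sup_{x\in\T_d}|\hat\mu^*(x)-\mu(x)|=O_{\PP}(h+\sqrt{\log(n)/(h^dn)})$, which is available from the arguments underlying Theorem \ref{thm1_GA} together with the elementary bound $\hat\mu^*(x)-\hat\mu(x)=-[\sum_t w_h(x,X_t)Z_t]^{\top}\delta=O_{\PP}(n^{-1/2})$ uniformly in $x$, using Proposition \ref{prop1}. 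The $\mu^2(x)$ pieces from $A_1$ and from $\hat\mu^*(x)^2$ cancel up to lower order, and assembling the remaining terms produces the announced rate.

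The main obstacle is establishing the uniform stochastic rate for the centered kernel sum $\sum_t w_h(x,X_t)\sigma^2(X_t)(\epsilon_t^2-\EE\epsilon_t^2)$ when both $X_t$ and $\epsilon_t$ are non-i.i.d.\ processes. Chaining through the functional dependence measure of Assumption \ref{asm_dep} for $X_t$ while simultaneously truncating the MA$(\infty)$ representation of $\epsilon_t$ via Assumption \ref{asm_dep_epsilon} requires careful bookkeeping of the two dependence rates $\xi$ and $\zeta$, and one must check that the truncation error is at most of the leading order. The moment condition $q\ge 4$ of Assumption \ref{asm_moment} is just enough to put $\epsilon_t^2-\EE\epsilon_t^2$ into $L^2$, which supplies the $\sqrt{\log(n)}$ price of uniformization via Bernstein while keeping every remainder below the target rate $h+1/n+\sqrt{\log(n)/(h^dn)}$.
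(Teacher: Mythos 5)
Your proof is correct and, despite the different algebraic starting point, it is essentially the paper's argument. The paper expands $\hat\sigma^2(x)=\sum_{t}w_h(x,X_t)\big(Y_t-Z_t^{\top}\hat\bbeta-\hat\mu^*(x)\big)^2$ directly and compares it with $\tilde\sigma^2(x)=\sum_{t}w_h(x,X_t)\sigma^2(X_t)$, splitting the error into $|\hat\bbeta-\bbeta|_2^2\big|\sum_t w_h(x,X_t)Z_t\big|_2^2$, $\sum_t w_h(x,X_t)\big[\mu(X_t)-\hat\mu^*(x)\big]^2$ and $\sum_t w_h(x,X_t)\sigma^2(X_t)(\epsilon_t^2-1)$, plus the Lipschitz bias $|\tilde\sigma^2(x)-\sigma^2(x)|=O(h)$; your identity $\hat\sigma^2(x)=\sum_t w_h(x,X_t)e_t^2-\hat\mu^*(x)^2$ is a reorganization of exactly the same expansion. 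The substantive ingredients coincide: the $O(h)$ bias from Assumption \ref{asm_sigma}, the $O_{\PP}(1/n)$ contribution from Proposition \ref{prop1} applied to the quadratic-in-$Z$ piece, and---the only genuinely delicate step---the uniform $O_{\PP}\big(\sqrt{\log(n)/(h^dn)}\big)$ bound for the centered weighted sum of $\epsilon_t^2-1$ under the joint dependence of $X_t$ and $\epsilon_t$, which the paper obtains via the projection operator, Assumption \ref{asm_dep_epsilon}, Freedman's inequality (Lemma \ref{lemma_freedman}) and a $\delta$-net with $\delta=h$, $N=n$, precisely as you outline. The only cost of your bookkeeping is the extra cross term $A_3=2\sum_t w_h(x,X_t)\mu(X_t)\sigma(X_t)\epsilon_t$ and the explicit use of the uniform rate of $\hat\mu^*-\mu$; both are available from the machinery of Theorem \ref{thm1_GA} and are correctly claimed to be within the stated order, whereas the paper's arrangement keeps $\hat\mu^*(x)$ inside the squared residual and absorbs the corresponding cross terms by Cauchy--Schwarz.
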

Recall that in Theorem \ref{thm1_GA}, $\Z_t$, $t\in\T_d$, is a Gaussian random field with mean zero and conditional covariance matrix $Q^{(L)}$ as defined in (\ref{eq_cov_Z}). In practice, one shall estimate $Q^{(L)}$ by $\hat Q^{(L)}=(\hat Q_{j,j'}^{(L)})_{1\le j,j'\le n}$, that is 
\begin{equation}
    \label{eq_Q_hat}
    \hat Q_{j,j'}^{(L)}=h^dn\sum_{k=1-L}^{L-1}\sum_{i=1\vee(1-k)}^{n\wedge (n-k)}\hat c_{j,j',i,k}w_h(X_j,X_i)w_h(X_{j'},X_{i+k})\hat\gamma(k),
\end{equation}
where $\hat c_{j,j',i,k} = \hat\sigma^{-1}(X_j)\hat\sigma^{-1}(X_{j'})\hat\sigma(X_i)\hat\sigma(X_{i+k})$, and $\hat\gamma(k)$ is a consistent estimate of $\gamma(k)$ which takes the form
\begin{equation*}
    \hat\gamma(k)=\sum_{i=1}^{n-k}\hat\epsilon_i\hat\epsilon_{i+k}/n,
\end{equation*}
with $\hat\epsilon_i=(Y_i-Z_i^{\top}\hat\bbeta-\hat\mu^*(X_i))/\hat\sigma(X_i)$. The precision of the estimated autocovariance $\hat\gamma(k)$ has been well investigated in the literature; see, for example, \textcite{shumway}. Also, note that we adopt the estimator $\hat Q_{j,j'}^{(L)}$ which includes the term $\hat c_{j,j',i,k}$ as the estimator of $c_{j,j',i,k}$ appeared in the true $Q_{t,s}^{(L)}$ defined in (\ref{eq_cov_Z}). Since $\hat\sigma(\cdot)$ is a consistent estimator of $\sigma(\cdot)$ by Proposition \ref{prop2} and $\sigma(\cdot)$ is bounded on the support $\T_d$ according to Assumption \ref{asm_sigma}, it can be shown that $\hat Q_{j,j'}^{(L)}$ is a consistent estimator of $Q_{j,j'}^{(L)}$. We state this result and provide the exact consistency rate as follows.

\begin{proposition}[Precision of long-run covariance matrix]
    \label{prop_longrun}
    Suppose that Assumptions \ref{asm_sigma}-\ref{asm_iden} hold. Then, for some large positive integer $L\rightarrow\infty$ satisfying $hL^2/n\rightarrow0$, $L^3/n^2\rightarrow0$ and $L^4\log(n)/(h^dn^3)\rightarrow0$, we have
    $$\max_{1\le j,j'\le n}\big|\hat Q_{j,j'}^{(L)} - Q_{j,j'}\big| = O_{\PP}\Big\{\frac{L^2}{n}\Big(h+\sqrt{\log(n)/h^dn} + 1/\sqrt{n}\Big) + L/n + L^{-\zeta}\Big\},$$
    where the constant $\zeta>1$ is defined in Assumption \ref{asm_dep_epsilon}.
\end{proposition}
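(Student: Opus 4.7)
The plan is to insert two intermediate quantities into $\hat Q^{(L)}_{j,j'}-Q_{j,j'}$ and bound the three resulting pieces separately, uniformly in $(j,j')$. Let $Q^{(L)}_{j,j'}$ denote the truncated sum in (\ref{eq_cov_Z_truncate}) with the true $c$ and true $\gamma$, and let $\tilde Q^{(L)}_{j,j'}$ denote its version with $\hat c$ substituted for $c$ while keeping $\gamma$. Then
\begin{equation*}
\hat Q^{(L)}_{j,j'} - Q_{j,j'} = \bigl(Q^{(L)}_{j,j'} - Q_{j,j'}\bigr) + \bigl(\tilde Q^{(L)}_{j,j'} - Q^{(L)}_{j,j'}\bigr) + \bigl(\hat Q^{(L)}_{j,j'} - \tilde Q^{(L)}_{j,j'}\bigr),
\end{equation*}
which isolates the truncation error, the $\sigma$-estimation error, and the $\gamma$-estimation error. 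A preliminary lemma I would establish is that, uniformly in $j,j',k$, the kernel-weighted inner sum $h^dn\sum_i w_h(X_j,X_i)w_h(X_{j'},X_{i+k})$ is $O_{\PP}(1)$ when $k=0$ and $O_{\PP}(h^d)$ when $k\neq 0$; this follows from standard kernel-moment calculations under Assumption \ref{asm_smooth} together with the physical-dependence control of $X_i$.

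For the truncation piece, I would start from the MA$(\infty)$ identity $\gamma(k)=\sum_{m\ge 0} a_m a_{m+|k|}$ and swap the order of summation to obtain $\sum_{|k|\ge L}|\gamma(k)|\le 2(\sum_m |a_m|)(\sum_{l\ge L}|a_l|)=O(L^{-\zeta})$ by Assumption \ref{asm_dep_epsilon}; combined with the uniform inner-sum bound this yields the $L^{-\zeta}$ contribution. For the $\sigma$-piece, Proposition \ref{prop2} gives $\sup_x|\hat\sigma(x)-\sigma(x)|=O_{\PP}(h+n^{-1}+\sqrt{\log n/(h^dn)})$, and Assumption \ref{asm_sigma}(ii) keeps $\hat\sigma^{-1}$ bounded on a high-probability event, so $|\hat c_{j,j',i,k}-c_{j,j',i,k}|\lesssim \sup_x|\hat\sigma-\sigma|$; since $\sum_k|\gamma(k)|<\infty$, this piece contributes at most that rate. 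For the $\gamma$-piece, split $\hat\gamma(k)-\gamma(k)$ into the sample-autocovariance deviation $(1/n)\sum_i\epsilon_i\epsilon_{i+k}-\gamma(k)$, which is uniformly $O_{\PP}(\sqrt{\log n/n})$ over $|k|\le L$ by fourth-moment maximal inequalities under Assumptions \ref{asm_moment}--\ref{asm_dep_epsilon} together with the $(|k|/n)\gamma(k)$ bias (generating the $L/n$ term after summing), and the substitution error $(1/n)\sum_i(\hat\epsilon_i\hat\epsilon_{i+k}-\epsilon_i\epsilon_{i+k})$. The latter I would handle by expanding $\hat\epsilon_i-\epsilon_i$ in terms of $\hat\bbeta-\bbeta$, $\hat\mu^*-\mu$ and $\hat\sigma-\sigma$, and invoking Propositions \ref{prop1}, \ref{prop2} and the uniform consistency of $\hat\mu^*$ implicit in Theorem \ref{thm1_GA}; the resulting rate $O_{\PP}(h+n^{-1/2}+\sqrt{\log n/(h^dn)})$, after summing the three cross-terms over $2L-1$ lags and combining with the $h^dn$ multiplier and inner-sum bounds, produces the $L^2/n$ prefactor.

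The main obstacle will be the uniform-in-$(j,j',k)$ control of the substitution error $\sum_i w_h(X_j,X_i)w_h(X_{j'},X_{i+k})(\hat\epsilon_i\hat\epsilon_{i+k}-\epsilon_i\epsilon_{i+k})$: the two kernel weights are anchored at distinct random design points and the product $\hat\epsilon_i\hat\epsilon_{i+k}$ couples residual errors across lag $k$, so naive bounding destroys the cross-lag cancellations needed for the stated rate. Overcoming this will require both the physical-dependence control of $X_i$ from Assumption \ref{asm_dep} (to replace $X_{i+k}$-conditional averages by their stationary limits with controlled error) and the moment condition $q\ge 4$ from Assumption \ref{asm_moment}, so that moment-based maximal inequalities can be applied over the $n^2(2L-1)$ triples $(j,j',k)$. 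The lag/bandwidth conditions $hL^2/n\to 0$, $L^3/n^2\to 0$ and $L^4\log(n)/(h^dn^3)\to 0$ are precisely what is needed so that these maximal bounds, together with the $\hat\mu^*$ and $\hat\sigma$ plug-in errors, do not overwhelm the final $L^2/n$, $L/n$, and $L^{-\zeta}$ rates on the right-hand side.
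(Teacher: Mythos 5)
Your overall architecture coincides with the paper's: the error is split into the truncation piece (giving $L^{-\zeta}$ from the tail $\sum_{|k|\ge L}|\gamma(k)|$), the $\hat c$ versus $c$ substitution (controlled by Proposition \ref{prop2} and the boundedness of $\sigma$), and the $\hat\gamma$ versus $\gamma$ substitution, the last split further into the $(k/n)\gamma(k)$ bias, the pure-noise deviation of the sample autocovariance, and the residual-substitution term handled through $\|\hat\epsilon_i-\epsilon_i\|_2\lesssim h+\sqrt{\log(n)/(h^dn)}+1/\sqrt{n}$ via Propositions \ref{prop1} and \ref{prop2}. Up to bookkeeping this is exactly the paper's route in (\ref{eq_prop3_Lpart})--(\ref{eq_prop3_autodeviation}), with your preliminary inner-sum lemma playing the role of Lemma \ref{lemma_weight}.

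The genuine gap is in the pure-noise deviation. What must be bounded is the \emph{signed} sum $\big|\sum_{k=0}^{L-1}\frac1n\sum_i[\epsilon_i\epsilon_{i+k}-\EE(\epsilon_i\epsilon_{i+k})]\big|$, and you propose to control each lag uniformly by $O_{\PP}(\sqrt{\log n/n})$ and then add up the $2L-1$ lags. That yields $O_{\PP}(L\sqrt{\log n/n})$, which is not within the stated rate: for the recommended choice $L=\sqrt{n}$ it equals $\sqrt{\log n}$ and does not even vanish, and it is not dominated by $\frac{L^2}{n}\sqrt{\log(n)/(h^dn)}$ unless $L\gtrsim nh^{d/2}$, which is incompatible with $hL^2/n\rightarrow0$ over the admissible bandwidth range. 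The paper avoids this by bounding the sum over $k$ \emph{jointly}: it expands $\epsilon_i\epsilon_{i+k}-\EE(\epsilon_i\epsilon_{i+k})$ in the innovations $\eta_l$, reorganizes the entire triple sum as a sum of martingale differences $D_{l_1,k}$ indexed by the innovation time $l_1$, and applies Burkholder's inequality (Lemma \ref{lemma_burkholder}) to the aggregate, so that cancellation across lags delivers the $O(L/n)$ contribution. Without some such joint treatment your per-lag uniform bound cannot reach the claimed rate. A smaller caution: you assert that summing the residual-substitution cross-terms over the lags ``produces the $L^2/n$ prefactor,'' but the direct Cauchy--Schwarz count $\frac1n\cdot L\cdot n\cdot\big(h+\sqrt{\log(n)/(h^dn)}+1/\sqrt{n}\big)$ gives an $L$ prefactor; you need to make explicit where the additional factor of $L/n$ comes from, since the statement of the proposition hinges on it.
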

In real-data applications, one can simply take $L=\sqrt{n}$ to obtain a consistent estimator $\hat Q^{(L)} =(\hat Q_{j_1,j_2}^{(L)})_{1\le j_1,j_2\le n}$ to estimate the covariance matrix of the statistic $\sqrt{h^dn}|\hat\mu^*(x)-\mu(x)|/\hat\sigma(x)$ conditioned on the covariates $X_i$, $1\le i\le n$. In next section, we shall provide the Gaussian approximation for the test statistic $\sup_{x\in\T_d}|\hat\mu^*(x)-\mu(x)|/\hat\sigma(x)$ with the estimated covariance matrix applied.

\subsection{Gaussian Approximation for Implementation}
Provided the consistent estimated long-run covariance matrix $\hat Q^{(L)}$ in Proposition \ref{prop_longrun}, we now introduce a sequence of Gaussian random variables $\hat\Z_j$, $1\le j\le n$, with mean zero and the conditional covariance matrix $\hat Q^{(L)}$, i.e., $\hat Q_{j,j'}^{(L)} = \EE(\hat\Z_j\hat\Z_{j'})$, for each $1\le j,j'\le n$. We shall show that the Gaussian approximation result in Theorem \ref{thm1_GA} still holds when we replace $Q_{t,s}^{(L)}$ therein by its estimator $\hat Q_{j,j'}^{(L)}$. In addition, Propositions \ref{prop1} and \ref{prop2} ensure that even when $\bbeta$ and $\sigma(x)$ in model (\ref{eq_model}) are both unknown, one shall still achieve the similar asymptotic distribution in Theorem \ref{thm1_GA} for the test statistic $\sup_{x\in\T_d}\big|\hat\mu^*(x)- \mu(x)\big|/\hat\sigma(x)$ by employing the proposed estimators $\hat\bbeta$ and $\hat\sigma(x)$. As a result, combining Propositions \ref{prop1}--\ref{prop_longrun} facilitates the construction of SCR for practical use. For the theoretical guarantee, we show the Gaussian approximation result with $\hat\bbeta$, $\hat\sigma(x)$ and $\hat Q^{(L)}$ plugged in as follows.
\begin{theorem}[Gaussian approximation for implementation]
    \label{thm2_GA}
    Under the conditions in Theorem \ref{thm1_GA} and Propositions \ref{prop1}--\ref{prop_longrun}, for the same $\Delta_1$, $\Delta_2$ and $\Delta_3$ defined in Theorem \ref{thm1_GA} and
    $$\Delta_4'=\Big(hL^2/n+L^2\sqrt{\log(n)/(h^dn^3)} + L/n + L^{-\zeta}\Big)^{1/3}\log^{2/3}(n),$$
    we have
    $$\sup_{u\in\RR}\Big|\PP\big(\sup_{x\in\T_d}\sqrt{h^dn}\big|\hat\mu^*(x)- \mu(x)\big|/\hat\sigma(x)<u\big) - \PP\big(\max_{1\le j\le n}|\hat\Z_j|<u\big) \Big| \lesssim \Delta_1+\Delta_2+\Delta_3+\Delta_4',$$
    where the constants in $\lesssim$ are independent of $n$ and $h$.
    If in addition, the conditions in expression (\ref{thm1_o1}) hold, and
    \begin{align}
        \label{thm2_o1}
        hL^2n^{-1}\log^2(n)\rightarrow0, \quad L^4h^{-2d}n^{-6}\log^5(n)\rightarrow0, \quad Ln^{-1}\log^2(n)\rightarrow0,
    \end{align}
    then we have
    $$\sup_{u\in\RR}\Big|\PP\big(\sup_{x\in\T_d}\sqrt{h^dn}\big|\hat\mu^*(x)- \mu(x)\big|/\hat\sigma(x)<u\big) - \PP\big(\max_{1\le j\le n}|\hat\Z_j|<u\big) \Big| \rightarrow 0.$$
\end{theorem}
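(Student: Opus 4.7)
The plan is to reduce Theorem \ref{thm2_GA} to Theorem \ref{thm1_GA} via three successive replacements: $\hat\bbeta \to \bbeta$, $\hat\sigma(\cdot) \to \sigma(\cdot)$ at the level of the test statistic, and $\hat Q^{(L)} \to Q^{(L)}$ at the level of the Gaussian field. Each replacement is controlled by Propositions \ref{prop1}, \ref{prop2} and \ref{prop_longrun}, respectively, and the resulting pointwise perturbations are upgraded into Kolmogorov-distance bounds through a Nazarov-type anti-concentration inequality for maxima of Gaussian processes, as used in \textcite{chernozhukov_central_2017}.

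For the parametric replacement, the identity
$$\hat\mu^*(x) - \mu(x) = [\hat\mu(x) - \mu(x)] - \sum_{i=1}^n w_h(x,X_i)\, Z_i^\top(\hat\bbeta-\bbeta)$$
combined with Proposition \ref{prop1} and the boundedness of $Z_i$ on $\T_d$ implied by Assumption \ref{asm_iden} shows that the second summand is $O_{\PP}(1/\sqrt{n})$ uniformly in $x$; after multiplication by $\sqrt{h^d n}/\hat\sigma(x)$ this contributes $O_{\PP}(\sqrt{h^d})=o(1)$. For the volatility replacement, writing $|\hat\mu(x)-\mu(x)|/\hat\sigma(x) = (|\hat\mu(x)-\mu(x)|/\sigma(x))\cdot(\sigma(x)/\hat\sigma(x))$, Proposition \ref{prop2} yields $\sup_x|\sigma(x)/\hat\sigma(x)-1| = O_{\PP}(h+\sqrt{\log(n)/(h^d n)})$, while Theorem \ref{thm1_GA} ensures the oracle statistic is $O_{\PP}(\sqrt{\log n})$. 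Anti-concentration then translates the joint perturbation into a Kolmogorov cost that is dominated by $\Delta_2+\Delta_3$ under the bandwidth conditions in (\ref{thm1_o1}) and (\ref{thm2_o1}).

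For the covariance replacement and the passage from the continuous supremum $\sup_{t\in\T_d}|\Z_t|$ to the discrete maximum $\max_{1\le j\le n}|\hat\Z_j|$, I would apply a Gaussian comparison inequality conditional on $(X_i)_{i=1}^n$:
$$\sup_u\bigl|\PP(\max_{j}|\Z_{X_j}|\le u) - \PP(\max_{j}|\hat\Z_j|\le u)\bigr| \lesssim \bigl(\max_{j,j'}|Q^{(L)}_{X_j,X_{j'}} - \hat Q_{j,j'}^{(L)}|\bigr)^{1/3}\log^{2/3}(n),$$
and Proposition \ref{prop_longrun} substituted into the right-hand side produces exactly $\Delta_4'$. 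To close the continuous-to-discrete gap I would use that, because the kernel has compact support and $h\to 0$, the design $\{X_j\}$ forms an $O(h)$-net of $\T_d$ with high probability (Assumption \ref{asm_smooth}), and Lipschitzness of $t\mapsto \Z_t$ in an appropriate intrinsic metric yields a discretization error that is negligible after anti-concentration.

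The main obstacle is the covariance step: one must verify that every term in the Proposition \ref{prop_longrun} rate survives the cube-root and the extra $\log^{2/3}(n)$ factor to match the stated $\Delta_4'$, and that this estimation error composes cleanly with the oracle bounds of Theorem \ref{thm1_GA} without inflating $\Delta_1,\Delta_2$ or $\Delta_3$. The conditions in (\ref{thm2_o1}) are precisely what ensures each individual piece vanishes asymptotically, so no additional assumption beyond what is stated should be needed to close the argument.
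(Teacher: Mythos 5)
Your proposal follows essentially the same route as the paper: the paper's proof also splits the error into the statistic perturbation $|T_n^*-T_n|$ (handled in its Lemma \ref{lemma_est} via exactly your identity for the $\hat\bbeta$ replacement and your multiplicative decomposition for the $\hat\sigma$ replacement, then absorbed through Nazarov's anti-concentration inequality) plus a Gaussian comparison step in which Proposition \ref{prop_longrun} is fed into the cube-root comparison bound of Lemma \ref{lemma_comparison} to produce $\Delta_4'$, with a $\delta$-net discretization closing the continuous-to-discrete gap. The only cosmetic difference is that the paper discretizes over a deterministic $\delta$-net rather than the design points, which does not change the argument.
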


\begin{remark}[Comparison of convergence rates in Theorems \ref{thm1_GA} and \ref{thm2_GA}]
    Note that the convergence rate in Theorem \ref{thm2_GA} differs from the one in Theorem \ref{thm1_GA} only in the last part $\Delta_4'$ (appeared as $\Delta_4$ in Theorem \ref{thm1_GA}). This difference results from the replacement of the covariance matrix $Q^{(L)}$ by its consistent estimator $\hat Q^{(L)}$, which leads to an additional approximation error due to the Gaussian comparison as shown in Lemma \ref{lemma_comparison} in the Supplementary Material. The other approximation errors introduced by the estimators $\hat\bbeta$ and $\hat\sigma(x)$ can be shown to be negligible compared to $\Delta_1$--$\Delta_3$ and $\Delta_4'$. In particular, when we let $L=\sqrt{n}$, the convergence rates in Theorems \ref{thm1_GA} and \ref{thm2_GA}
    are of the same magnitude up to a logarithmic factor. 
We refer to a rigorous proof of these convergence rates to the Supplementary Material. Based on Theorem \ref{thm2_GA}, we can construct the $(1-\alpha)$-th percentile SCR defined in (\ref{eq_ln_rn}) by using the $(1-\alpha)$-th quantile of $\max_{1\le j\le n}|\hat\Z_j|/\sqrt{h^dn}$.
\end{remark}

\begin{remark}[The necessity of partially linear structure]
Generally, the linear part in (\ref{eq_model}) is derived out of its underlying theories, instead of being added to the model in an ad-hoc fashion; see, for example, the demand equation in \textcite{kim_simultaneous_2021} and the forward premium regression in (\ref{fama}). Secondly, the purely nonparametric approach
suffers from the curse of dimensionality with a large number of model predictors, while our partially linear model is obviously more efficient in dealing with the issue  (\cite{wang_conditional_2016}) than the purely nonparametric counterpart. Thirdly, it is nontrivial to extend the asymptotic theory of the purely nonparametric model to the partially linear case, because $Z_i$ in (\ref{eq_model}) is a function of $X_i$ plus a random noise $u_i$ (i.e. Assumption \ref{asm_iden}). As a consequence, establishing the consistency results for $\hat\bbeta$ and $\hat\sigma$ is different than in the purely nonparametric case.  
\end{remark}

\subsection{SCR Construction}\label{subsec_scr_construct}
Recall the upper and lower bounds of the simultaneous confidence region, i.e., $\hat l_n(x)$ and $\hat r_n(x)$ defined in (\ref{eq_ln_rn}). For the convenience of implementation, we provide the detailed steps of our proposed method for the construction of SCR as follows. Suppose that we have observed a sample $(Z_i,X_i,Y_i)$, $1\le i\le n$. Given the significance level $\alpha\in(0,1)$, we aim to construct the $(1-\alpha)$-th percentile SCR for $\mu(\cdot)$ in model (\ref{eq_model}).
\begin{itemize}
    \item \textbf{Step 1.} Calculate the weight function $w_h(x,X_t)$ at each covariate $X_i$:
    $$w_h(X_i,X_t) = \frac{K_h(X_i-X_t)}{\sum_{t=1}^nK_h(X_i-X_t)},$$
    where $K_h(x-X_t)=K\big((x-X_t)/h\big)/h$. One can choose the kernel function $K(\cdot)$ to be any commonly used kernels, such as the Gaussian kernel $K(u)=(2\pi)^{-1/2}e^{-u^2/2}$ and the Epanechnikov kernel $K(u)=3/4(1-u^2)\One_{|u|\le1}$.
    \item \textbf{Step 2.} Estimate the coefficients $\bbeta$ of the linear part: 
    $$\hat\bbeta= \Big(\sum_{i=1}^n(Z_i-\tilde Z_i)^2\Big)^{-1}\Big(\sum_{i=1}^n(Y_i-\tilde Y_i)(Z_i-\tilde Z_i)\Big),$$
    where $\tilde Y_i=\sum_{t=1}^nw_h(X_i,X_{t})Y_{t}$, and $\tilde Z_i=\sum_{t=1}^nw_h(X_i,X_{t})Z_{t}$.
    \item \textbf{Step 3.} Calculate the point estimator of the nonparametric function $\mu(\cdot)$ at each covariate $X_i$:
    $$\hat\mu^*(X_i) = \sum_{t=1}^nw_h(X_i,X_t)(Y_t-Z_t^{\top}\hat\bbeta).$$
    \item \textbf{Step 4.} Estimate the conditional variance/volatility function $\sigma(\cdot)$ at each covariate $X_i$:
    $$\hat\sigma^2(X_i)=\sum_{t=1}^nw_h(X_i,X_t)(Y_t-Z_t^{\top}\hat\bbeta  - \hat\mu^*(X_i))^2.$$

    \item \textbf{Step 5.} Let $N=1000$ (or some other large positive number). Apply the multiplier bootstrap to generate a sequence of centered Gaussian random variables $\hat\Z_j$, $1\le j\le N$, with covariance matrix $\hat Q^{(L)}=(\hat Q_{j,j'}^{(L)})_{1\le j,j'\le N}$, where
    $$\hat Q_{j,j'}^{(L)}=h^dn\sum_{j-j'=-L}^L\sum_{i=1}^nc_{j,j',i}\hat\gamma(|j-j'|),$$
    with $c_{j,j',i}=w_h(X_j,X_i)w_h(X_{j'},X_{i+j-j'})$, $\hat\gamma(k)=\sum_{i=1}^{n-k}\hat\epsilon_i\hat\epsilon_{i+k}/n$, and $\hat\epsilon_i=(Y_i-Z_i^{\top}\hat\bbeta-\hat\mu^*(X_i))/\hat\sigma(X_i)$.
    Regarding the choice of $L$, see Proposition \ref{prop_longrun}, and we can choose $L$ to be some large positive integer such as $L=\sqrt{n}$. In some special case such as MA(2), one can simply choose $L$ to be the maximum lag, i.e., $L=2$.
    \item \textbf{Step 6.} Construct the $(1-\alpha)$-th percentile SCR for $\mu(\cdot)$:
    \begin{equation}
        \label{eq_SCR_final}
        \hat\mu^*(X_i)-\hat q_{\alpha}\hat\sigma(X_i) \le \mu(X_i)\le \hat\mu^*(X_i) + \hat q_{\alpha}\hat\sigma(X_i), \quad 1\le i\le n,
    \end{equation}
    where $\hat q_{\alpha}$ is the $(1-\alpha)$-th empirical quantile of $\max_{1\leq j\leq n}|\hat\Z_j|/\sqrt{h^dn}$.
\end{itemize}
\begin{remark}[Dynamic width of SCR]
Our proposed SCR in (\ref{eq_SCR_final}) takes into account the conditional heteroskedasticity in model (\ref{eq_model}). The conditional heteroskedasticity allows the conditional variance of the error to be time-varying, and also accounts for the dependence between the covariate $X_i$ and the model error $\sigma(X_i)\epsilon_i$ (\cite{zhao_inference_2013}). Additionally, the conditional heteroskedasticity allows the width of the confidence band to vary over the covariate term, and thus, the relative width can indicate the different amounts of information used for different regions of the band. For example, the lack of information at each tail of the curve, due to fewer observations, is implied by the relatively wide confidence intervals around the tails. This dynamic-width SCR is adopted by \textcite{kim_simultaneous_2021} for the i.i.d. data. We extend it to the time series setting, so that one can accommodates macroeconomic and/or financial time series models such as the forward premium regression; see Section \ref{sec_app}. 
\end{remark}

\section{Simulation Studies}\label{sec_simul}
We present the simulation study with four different time series models in this section to illustrate the performance of our proposed SCR. Consider the following data-generating process:
\begin{eqnarray}\label{smodel}
y_i=\beta z_i+\mu(x_i)+\sigma(x_i)\epsilon_i,~~~~~~~~i=1,\cdots,n
\end{eqnarray}
where $\beta=0.5$, $\mu(x)=0.3+0.4\,x$ and $\sigma(x)=\big(0.1+0.1\,x^2\big)^{1/2}$. We shall compute the {\it coverage probability} of the SCR for $\mu(\cdot)$ in (\ref{smodel}) to evaluate our method. We set the covariate terms $x_i$ and $z_i$ to be
\begin{eqnarray*}
x_i&=&\sum_{k=0}^{\infty}0.1^k\,\delta_{i-k}\\
z_i&=&0.2+0.4 x_i+u_i,~~~~~~~~i=1,\cdots,n,
\end{eqnarray*}
where $\delta_i$ and $u_i$ are both i.i.d. random variables following the standard normal distribution, and $\delta_i$ are independent of $u_i$. Note that this setting is in accordance with Assumption \ref{asm_iden}. For the specifications of $\epsilon_i$ in (\ref{smodel}), we consider four commonly used models: (i) i.i.d. standard normal, (ii) an autoregressive (AR) process, (iii) a moving-average (MA) process and (iv) an autoregressive moving-average (ARMA) process, which take the forms of:
\begin{eqnarray*}
&&(i)\,\,\mbox{Std. Normal}:\,\epsilon_i=\eta_i,\\
&&(ii)\,\,\mbox{AR(1)}:\,\epsilon_i=0.1\,\epsilon_{i-1}+\eta_i,\\
&&(iii)\,\,\mbox{MA(1)}:\,\epsilon_i=\eta_i+0.2\eta_{i-1},\\
&&(iv)\,\,\mbox{ARMA(1,1)}:\,\epsilon_i=0.1\,\epsilon_{i-1}+\eta_i+0.2\,\eta_{i-1},~~~~~~~~i=1,\cdots,n
\end{eqnarray*} 
where $\eta_i$ are i.i.d. random variables following the standard normal distribution. For a range of bandwidths (see Table \ref{tbl_cov_prob}), we follow Steps 1--6, the procedures described in the previous section, to construct the 95\% SCR of $\mu(\cdot)$ in (\ref{smodel}). The process is repeated $M$ times for each chosen bandwidth parameter. Given these $M$ different SCRs, we count how many of those SCRs contain the true $\mu(\cdot)$ in them, which gives us the desired coverage probabilities. Here we let $M=1,000$ (i.e. number of iterations) and $n=200$ (i.e. sample size). 
The simulation results under (i)--(iv) are summarized by Table \ref{tbl_cov_prob}.  

\begin{table}[tbp]
\centering
\caption[] {Coverage probability of 95\% SCR for $\mu(\cdot)$} 
\label{tbl_cov_prob}
\renewcommand{\arraystretch}{1.3} 
\begin{tabular}{cccccccccc}
\hline\hline
bandwidth && \multicolumn{1}{c}{(i) Std. Normal} && \multicolumn{1}{c}{(ii) AR(1)}  && \multicolumn{1}{c}{(iii) MA(1)}  && \multicolumn{1}{c}{(iv) ARMA(1,1)} &  \\ \hline 
0.30      && 0.874      && 0.858   && 0.860   && 0.852    &  \\ 
0.32      && 0.898      && 0.890   && 0.884   && 0.888    &  \\ 
0.34      && 0.918      && 0.898   && 0.920   && 0.902    &  \\ 
0.36      && 0.938      && 0.914   && 0.932   && 0.918    &  \\ 
0.38      && 0.948      && 0.924   && 0.934   && 0.936    &  \\ 
0.40      && 0.950      && 0.934   && 0.942   && 0.948    &  \\ 
0.42      && 0.958      && 0.950   && 0.956   && 0.954    &  \\ 
0.44      && 0.963      && 0.951   && 0.958   && 0.958    &  \\ 
0.46      && 0.970      && 0.960   && 0.960   && 0.960    &  \\ 
0.48      && 0.970      && 0.958   && 0.964   && 0.964    &  \\ 
0.50      && 0.972      && 0.964   && 0.964   && 0.964    &  \\ \hline
\end{tabular}
\end{table}

From Table \ref{tbl_cov_prob}, we see that the coverage probabilities under various bandwidths are reasonably close to $0.95$, which is the nominal coverage rate at a 5 percent level. For some bandwidths, we have the coverage probabilities that are almost identical to the nominal level $95\%$. However, as the bandwidth gets extreme, the coverage probability deviates from the nominal $0.95$, as we can observe from Table \ref{tbl_cov_prob}.

\section{Application: Forward Premium Regression}\label{sec_app}

One of the useful applications for (\ref{eq_model}) in time series analysis is the {\it forward premium anomaly}. Consider the celebrated forward premium regression (\cite{Fm:1984}):
\begin{equation}\label{fama}
s_{t+1}-s_t=\mu +\beta\cdot(f_{1,t}-s_{t})+u_{t+1},
\end{equation}
where $s_{t}$ is log of monthly {\it spot} exchange rate at time $t$ and $f_{1,t}$ is log of monthly {\it forward} exchange rate with one-month maturity at time $t$, respectively. Here $u_t$ is a model error. A vast amount of literature illustrates that the constant term $\mu$ in (\ref{fama}), the foreign exchange risk premium, is related to {\it fundamentals} (\cite{DH:1985,HS:1986,Hod:1989,Mark:1995,BK:2006,Alv:2009,LRV:2014,BK:2015,KKK:2022}), such as money growth, output growth, interest rates, conditional variance of money growth and so on. Given this, one can possibly model (\ref{fama}) in the following flexible manner:
\begin{equation}\label{fama2}
s_{t+1}-s_t=\mu(x_{1,t},\,x_{2,t}) + \beta\cdot(f_{1,t}-s_{t}) + \sigma(x_{1,t},\,x_{2,t})\,\epsilon_{t+1},
\end{equation}
where
\begin{equation*}
x_{1,t}=m_t-m^*_t,\,\,\,\,\,\,\,x_{2,t}=y_t-y^*_t.
\end{equation*}
Here, $m_t$ and $y_t$ are log of domestic money stock and log of domestic production, respectively, and $m^*_t$ and $y^*_t$ are their foreign counterparts. Note here that $\mu$ in (\ref{fama}) is replaced by $\mu(x_{1,t},\,x_{2,t})$, and that the conditional heteroscedasticity is introduced in (\ref{fama2}). Interestingly, Assumption \ref{asm_iden} for (\ref{fama2}) implies, quite intuitively, that the forward premium $f_{1,t}-s_t$ does depend on the fundamentals $x_{1,t}$ and $x_{2,t}$.

Although the theory of Uncovered Interest Parity (UIP) in international finance implies that $\mu(\cdot)=0$, numerous empirical studies actually show $\mu(\cdot)\not=0$. We refer to \textcite{FT:1990} and \textcite{Eg:1996} for an excellent review of the literature on this. Recently, \textcite{BDKK:2023} revisited the issue and tested the UIP hypothesis to show the effect of modeling lagged spot returns and lagged forward premiums in (\ref{fama}). Interestingly, (\ref{fama2}) is a specification of our partially linear model framework in (\ref{eq_model}). Hence, the methodology developed in this study can be readily applied to decide whether or not the UIP condition holds for (\ref{fama2}).

The exchange rate data used in this study are monthly spots and 30-day forward exchange rate data for Australian Dollar (AUD), British Pound (GBP) and U.S. Dollar (USD), where USD is the numeraire currency. That is, we employ the AUD/USD and GBP/USD currency pairs in this empirical study. For the money and the output variables, we employ monthly M1 and monthly industrial production index home (i.e. U.S.) and abroad (i.e. Australia and U.K.). Note that the usual GDP series cannot be used here because the sampling frequency is monthly in this application. End-of-month observations from December 1988 through December 2016, including the 2008 financial crisis, are used in this study.

The estimation and inference results, including the simultaneous confidence region, are reported by Figures \ref{fig_aud_mean_vol}--\ref{fig_gbp_y}. First of all, the semi-parametric Robinson estimate of the UIP coefficient $\beta$ in (\ref{fama2}) and its standard error are $\hat\beta_R=0.362$ and $s.e.(\hat\beta_R)=0.753$ for AUD/USD, and $\hat\beta_R=0.802$ and $s.e.(\hat\beta_R)=0.776$ for GBP/USD. In contrast, the traditional OLS estimates of $\beta$ in (\ref{fama}) are $\hat\beta_{OLS}=-0.029$ and $s.e.(\hat\beta_{OLS})=0.718$ for AUD/USD, and $\hat\beta_{OLS}=0.689$ and $s.e.(\hat\beta_{OLS})=0.641$ for GBP/USD. Interestingly, our semi-parametric estimates of the UIP coefficient for both AUD/USD and GBP/USD are closer to one, the value implied by UIP, with lower $t$-statistics than their OLS counterparts.

Figure \ref{fig_aud_mean_vol} shows the estimates of $\hat{\mu}(\cdot,\cdot)$ and $\sigma(\cdot,\cdot)$ (i.e. the nonlinear surfaces) for AUD/USD, while Figure \ref{fig_gbp_mean_vol} shows the mean and volatility estimates for GBP/USD. As we can see from Figures \ref{fig_aud_mean_vol} and \ref{fig_gbp_mean_vol}, the estimates of both the mean and the volatility functions for each currency pair are {\it highly nonlinear}, which implies that (\ref{fama}) may not be an appropriate parametric form for the underlying processes.

Figure \ref{fig_aud_m} represents the two-dimensional relationship between $\mu(\cdot,\cdot)$ and the relative output growth for AUD/USD when the relative money growth is fixed at a certain percentile. Similarly, Figure \ref{fig_aud_y} represents the relationship between $\mu(\cdot,\cdot)$ and the relative money growth for AUD/USD when the relative output growth is fixed at a certain percentile. Each panel in Figures \ref{fig_aud_m} and \ref{fig_aud_y} also includes the 95\% SCR of the corresponding function estimate. The horizontal line represents the null hypothesis of $\mu(\cdot,\cdot)=0$.

In order to not reject the null hypothesis of zero risk premium at a 95\% confidence level, the horizontal line fixed at zero must be {\it entirely} contained by $all$ SCRs in Figures \ref{fig_aud_m} and \ref{fig_aud_y}. Clearly, Panels (a)--(c) in Figure \ref{fig_aud_m} and Panels (a)--(c) in Figure \ref{fig_aud_y} fail to contain the horizontal line entirely. That is, all of the panels in Figures \ref{fig_aud_m} and \ref{fig_aud_y} fail to contain the horizontal line entirely. Hence, the null hypothesis gets rejected at a 5\% level, which is the key implication of Figures \ref{fig_aud_m} and \ref{fig_aud_y}. Even a more general null hypothesis of {\it constant} risk premium is also rejected clearly at a 5 percent level because the SCRs in Figure \ref{fig_aud_y} cannot entirely contain any horizontal line in them.

The main reason for the rejection is because of the tendency that the local-linear estimate of $\mu(\cdot,\cdot)$ changes $nonlinearly$ in both the output growth and the money growth. Similarly, Figures \ref{fig_gbp_m} and \ref{fig_gbp_y} show the results corresponding to the GBP/USD pair. Again, the null hypothesis gets rejected at a 5\% level because the 95\% SCRs in all panels of Figures \ref{fig_gbp_m} and \ref{fig_gbp_y}, except for Panel (b) of Figure \ref{fig_gbp_m}, fail to entirely contain the horizontal line fixed at zero.

Interestingly, these findings highlight the relative advantage of our SCR-based inference because, with the usual $p$-value of a MISE-type test statistic (\cite{HM:1993}), it is rather difficult to figure out why the null gets rejected. Under our approach, however, we can easily locate exactly where the SCR fails to contain the null, so that we can propose an appropriate function for the underlying process more easily.

Intuitively, the trend estimates in this study make sense because either a relative increase in the domestic money supply (i.e. an increase in $x_{1,t}$) or a relative decrease in the domestic production (i.e. a decrease in $x_{2,t}$) will cause the foreign currency to appreciate, yielding a decrease in the risk premium for holding the appreciating currency. Hence the risk premium, $\mu(\cdot,\cdot)$, should increase with a relative increase in the domestic production or a relative decrease in the domestic money supply. This intuition appears to be generally in line with the results in Figures \ref{fig_aud_mean_vol}--\ref{fig_gbp_y}.

\section{Concluding Remarks}
\label{conclusion}				
This paper proposed a new methodology to conduct simultaneous inference of trend in a semi-parametric partially linear time series model and the number of covariate terms in the nonlinear part is allowed to be two or higher. We generalized the high-dimensional Gaussian approximation theory (\cite{chernozhukov_central_2017}) to construct the simultaneous confidence region (SCR) for the multivariate unknown trend in time series. This work can be viewed as an extension of the two-dimensional uniform confidence band (\cite{johnston_probabilities_1982,KHK:2014}) and a generalization to the {\it time series} setting compared to \textcite{kim_simultaneous_2021}. The relating asymptotic properties of the introduced methodology are investigated and the finite-sample properties are studied through a simulation experiment. The developed methodology is applied to the forward premium regression (\cite{Fm:1984}). The empirical analysis based on simultaneous inference confirms that the zero-risk-premium hypothesis for the AUD/USD and GBP/USD currency pairs is rejected at a 5$\%$ level, mainly due to the underlying nonlinear nature, as shown by Figures \ref{fig_aud_mean_vol}--\ref{fig_gbp_y}.    			

The current study  can be extended to the case where all the model parameters are functions of random processes. For instance, the UIP coefficient in (\ref{fama2}) can be modeled as a function of fundamentals as well. Similarly, both the pricing error and the beta coefficient from a standard factor pricing model in finance can be modeled as functions of relevant state variables. The methodology developed in this study can be easily extended to handle such a framework. Additionally, one can conduct simultaneous inference of the unknown function in (\ref{eq_model}) when the covariate terms are non-stationary processes. This would significantly generalize the results in this study. Further insight can be gained by extending the current work in these and other directions, and the authors are currently working on these issues.

\begin{figure}[tbp]
\centering
\begin{tabular}{ccc}
\setlength{\itemsep}{-1.0cm} \psfig{file = 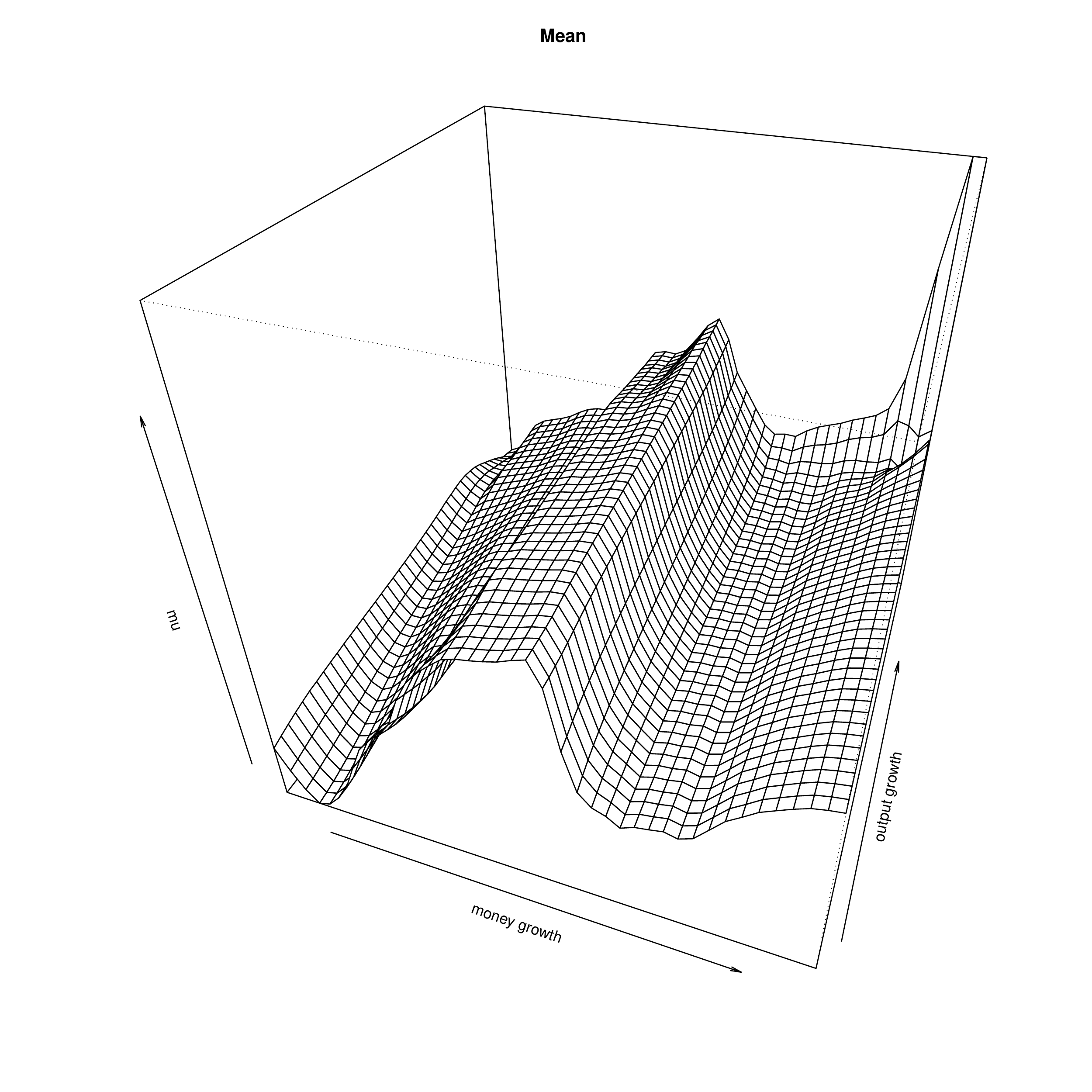, width = 8.0cm,
angle=0} & \psfig{file = 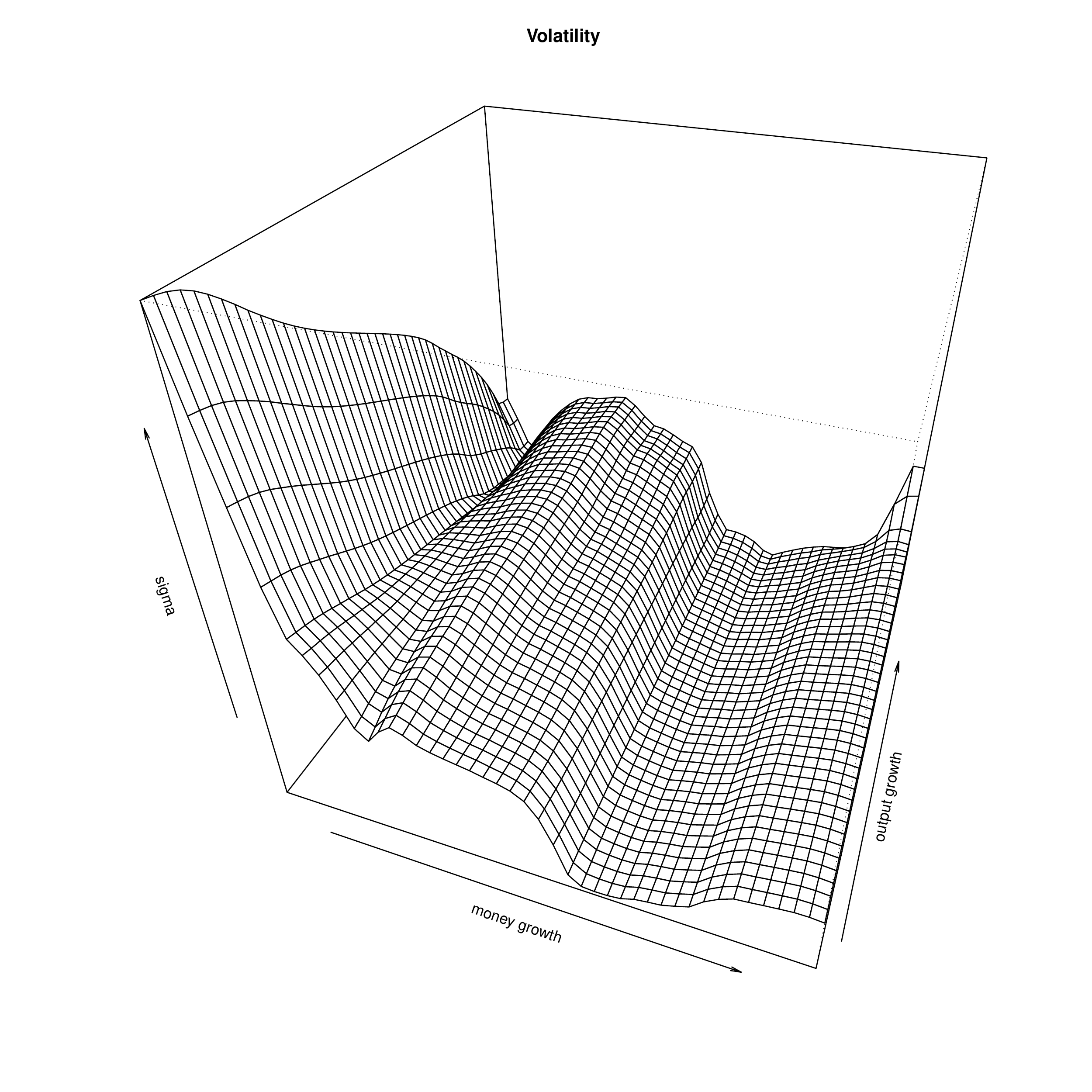, width = 8.0cm, angle=0} &  \\
(a) Conditional mean & (b) Conditional volatility &  \\
&  &
\end{tabular}
{\small {} }
\caption{Conditional mean and volatility in the money growth and the output growth for Australian Dollar (AUD)--US Dollar(USD); The bandwidth is obtained through under-smoothing of the GCV-chosen one.}
\label{fig_aud_mean_vol}
\end{figure}

\begin{figure}[tbp]
\centering
\begin{tabular}{ccc}
\setlength{\itemsep}{-1.0cm} \psfig{file = 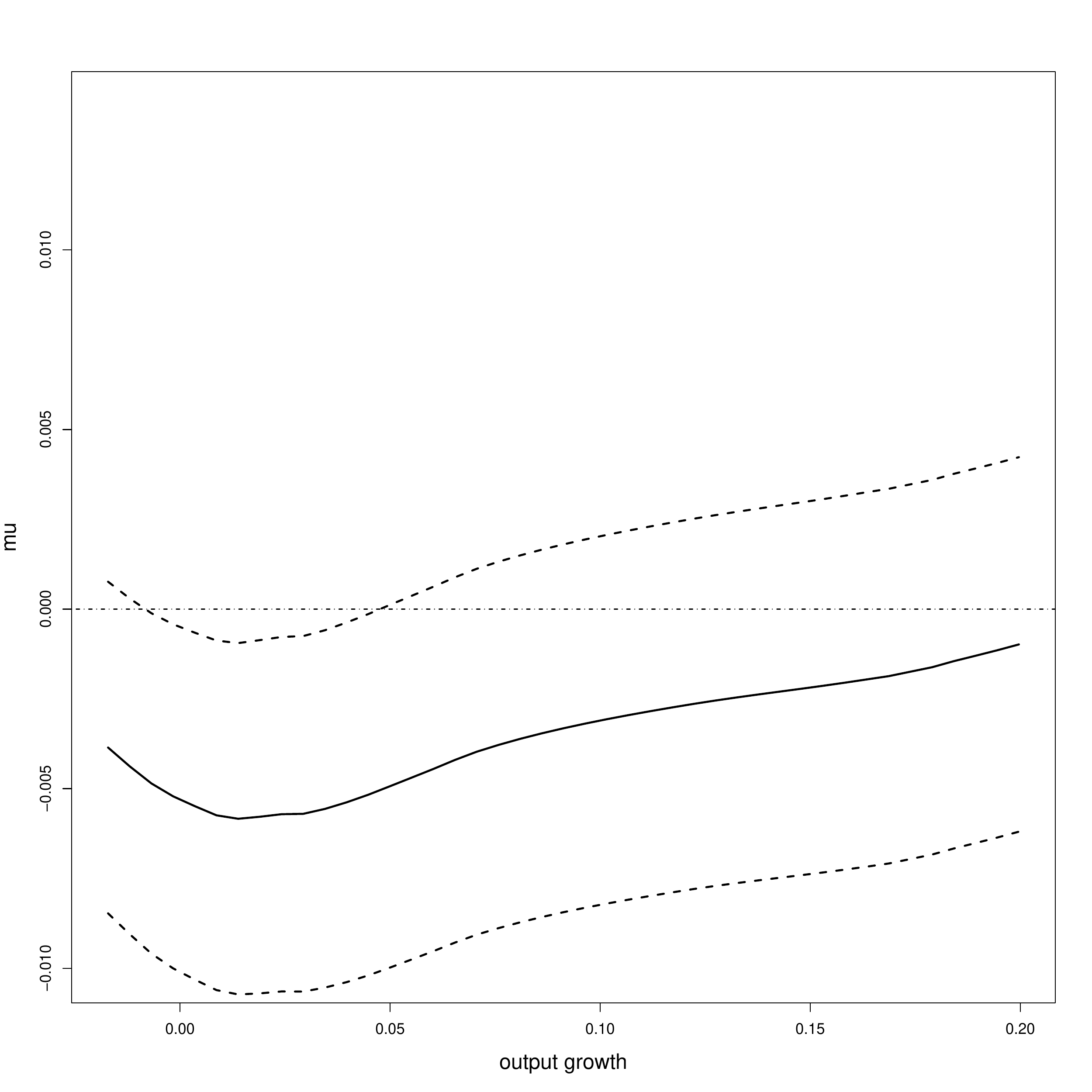, width = 6.0cm,
angle=0} & \psfig{file = 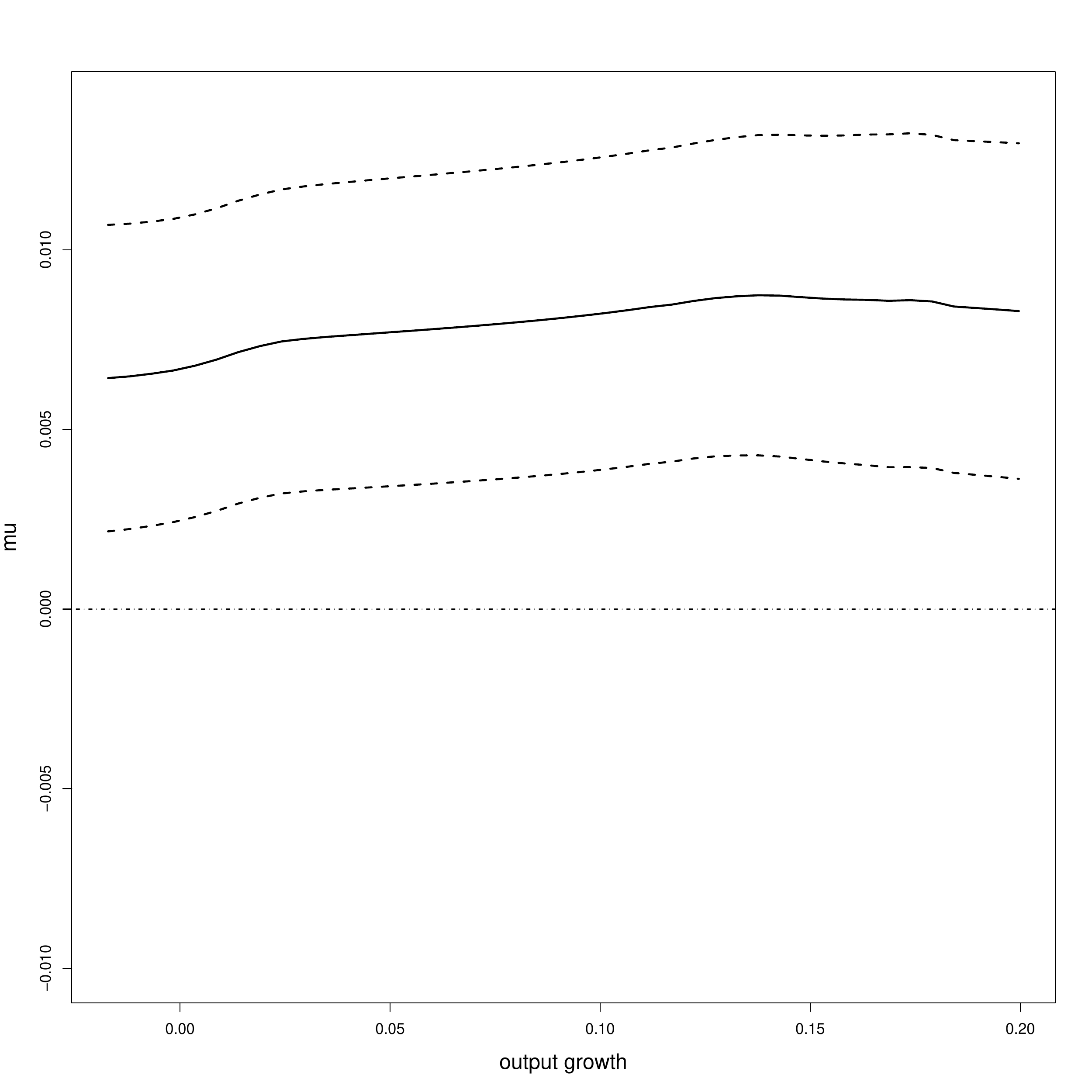, width = 6.0cm, angle=0} &  \\
(a) 25th-percentile of money growth & (b) 50th-percentile of money growth &  \\
\psfig{file = 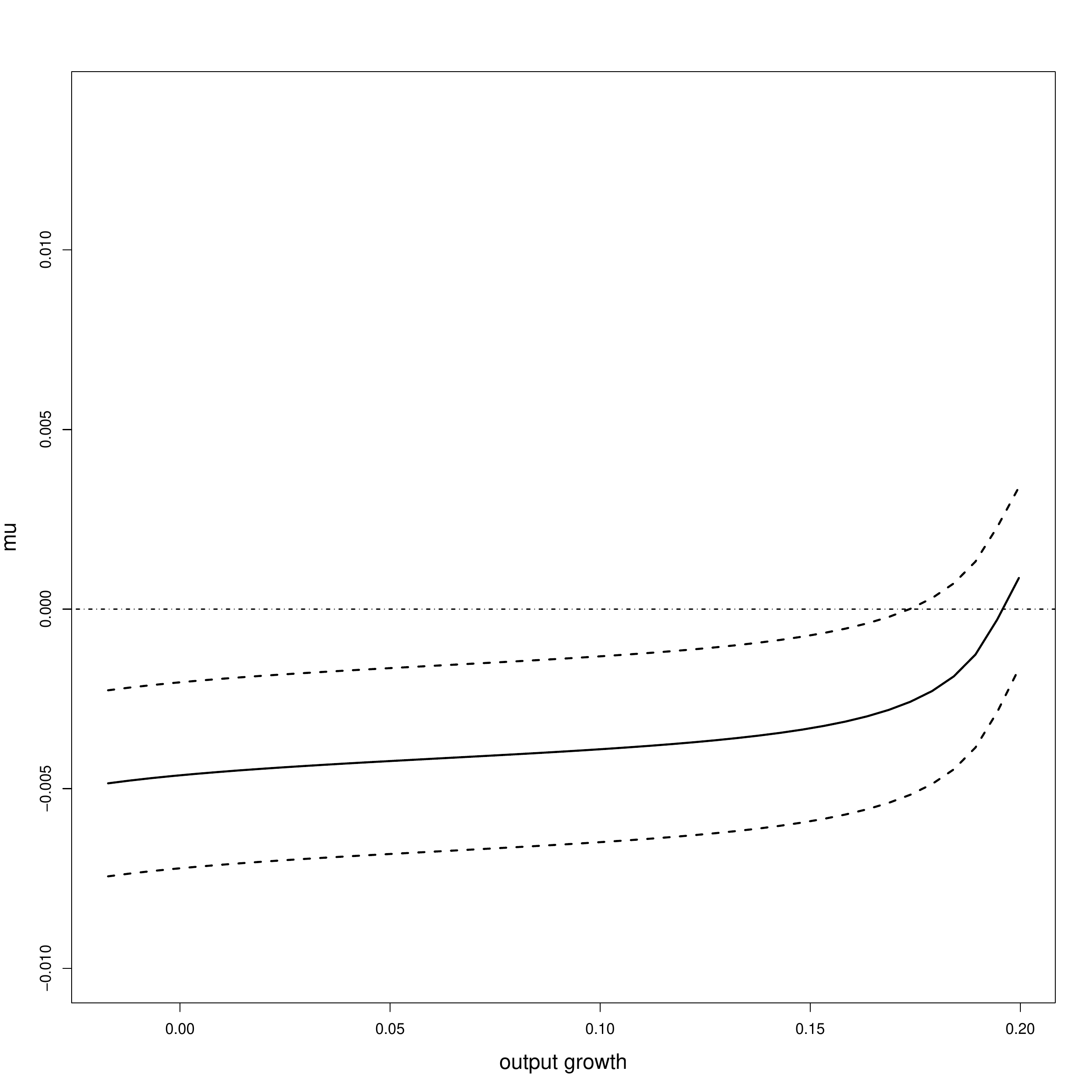, width = 6.0cm, angle=0} &  \\
(c) 75th-percentile of money growth &  \\
&  &
\end{tabular}
{\small {} }
\caption{The $solid$ curve is the local-linear regression estimate of $\mu(\cdot)$ for AUD/USD. The $dashed$ band is the 95\% SCR of $\mu(\cdot)$ and the dotted horizontal line is $H_0:\,\mu(\cdot)=0$. The bandwidth is obtained through under-smoothing of the GCV-chosen one.}
\label{fig_aud_m}
\end{figure}

\begin{figure}[tbp]
\centering
\begin{tabular}{ccc}
\setlength{\itemsep}{-1.0cm} \psfig{file = 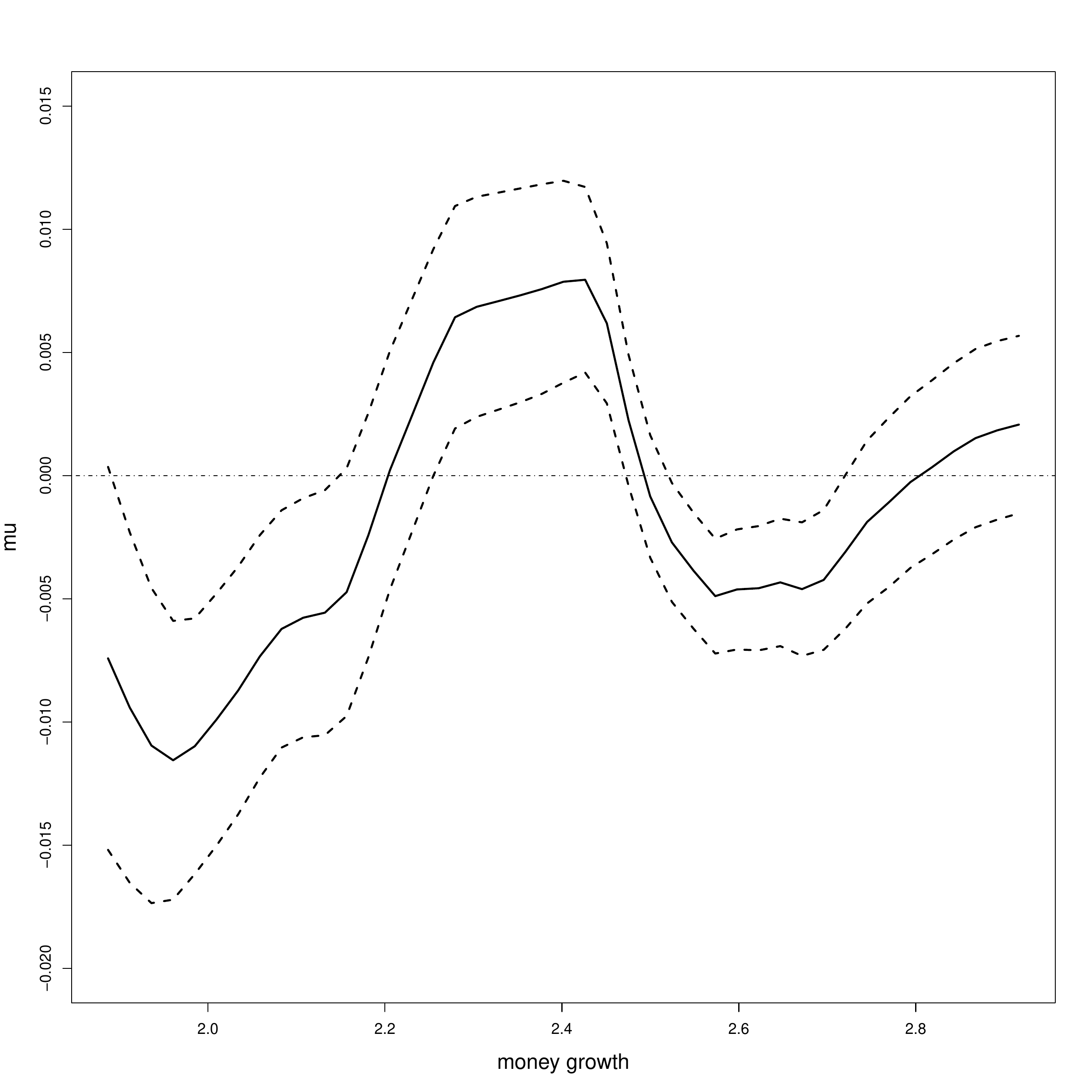, width = 6.0cm,
angle=0} & \psfig{file = 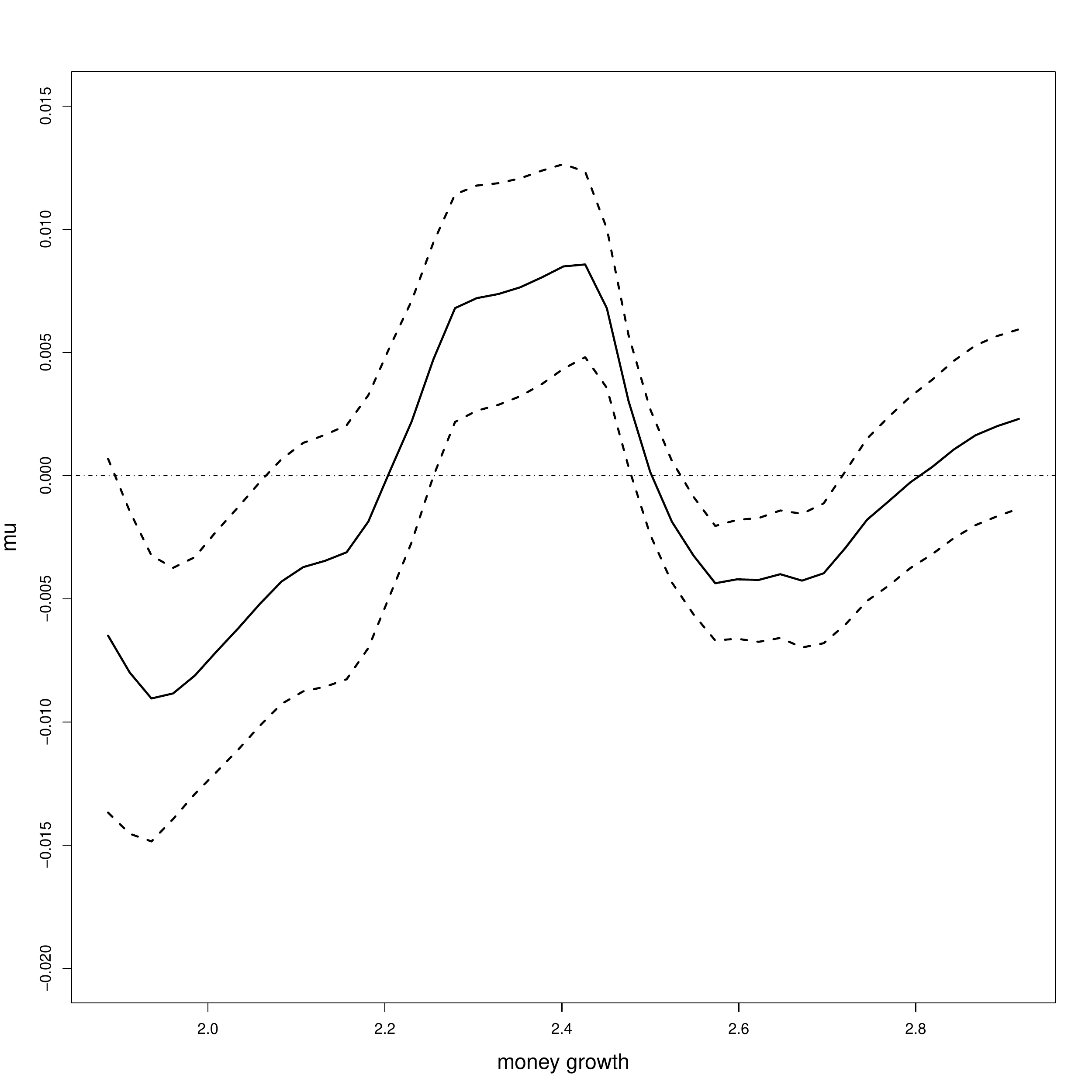, width = 6.0cm, angle=0} &  \\
(a) 25th-percentile of output growth & (b) 50th-percentile of output growth &  \\
\psfig{file = 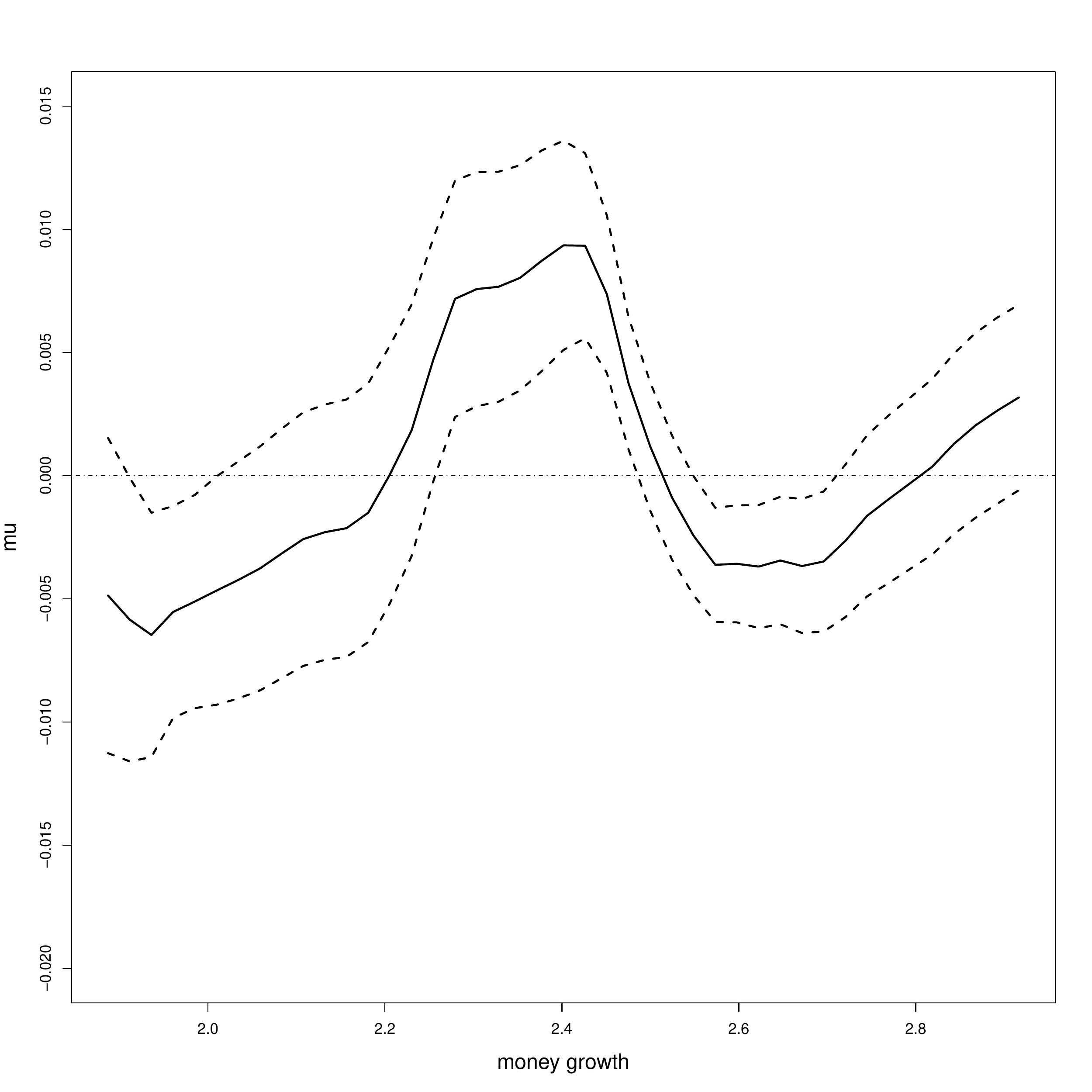, width = 6.0cm, angle=0} &  \\
(c) 75th-percentile of output growth &  \\
&  &
\end{tabular}
{\small {} }
\caption{The $solid$ curve is the local-linear regression estimate of $\mu(\cdot)$ for AUD/USD. The $dashed$ band is the 95\% SCR of $\mu(\cdot)$ and the dotted horizontal line is $H_0:\,\mu(\cdot)=0$. The bandwidth is obtained through under-smoothing of the GCV-chosen one.}
\label{fig_aud_y}
\end{figure}

\begin{figure}[tbp]
\centering
\begin{tabular}{ccc}
\setlength{\itemsep}{-1.0cm} \psfig{file = 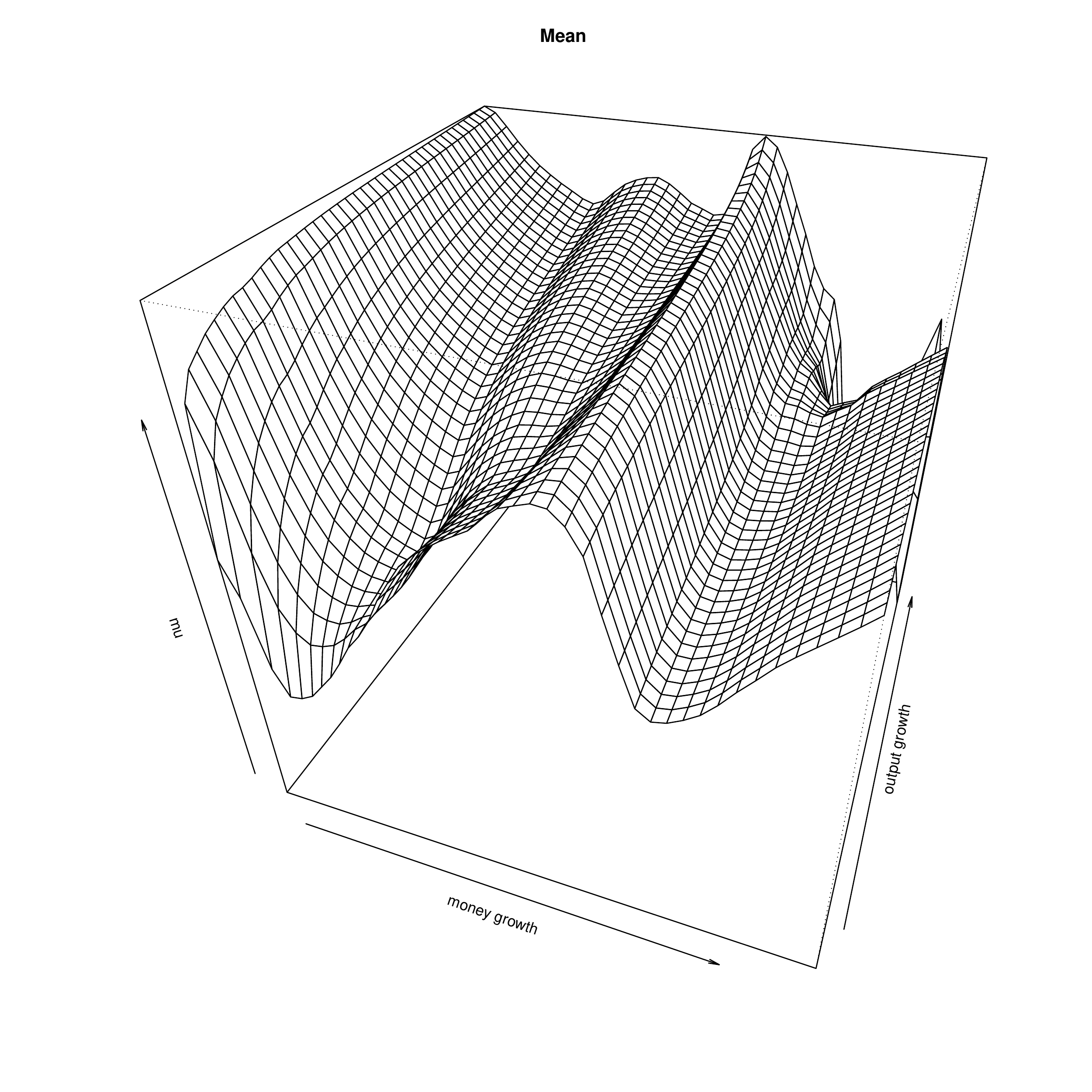, width = 8.0cm,
angle=0} & \psfig{file = 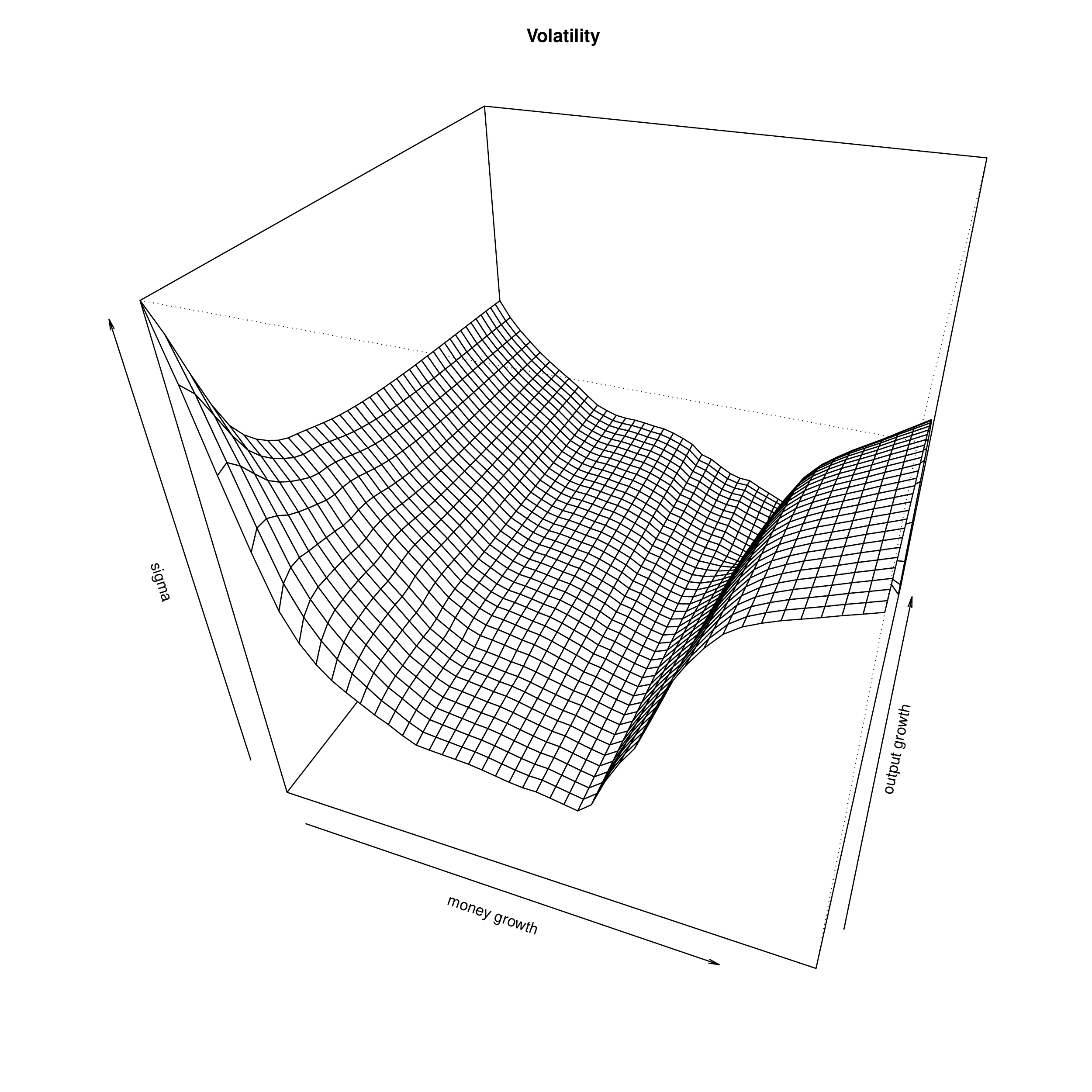, width = 8.0cm, angle=0} &  \\
(a) Conditional mean & (b) Conditional volatility &  \\
&  &
\end{tabular}
{\small {} }
\caption{Conditional mean and volatility in the money growth and the output growth for British Pound (GBP)--US Dollar (USD); The bandwidth is obtained through under-smoothing of the GCV-chosen one.}
\label{fig_gbp_mean_vol}
\end{figure}

\begin{figure}[tbp]
\centering
\begin{tabular}{ccc}
\setlength{\itemsep}{-1.0cm} \psfig{file = 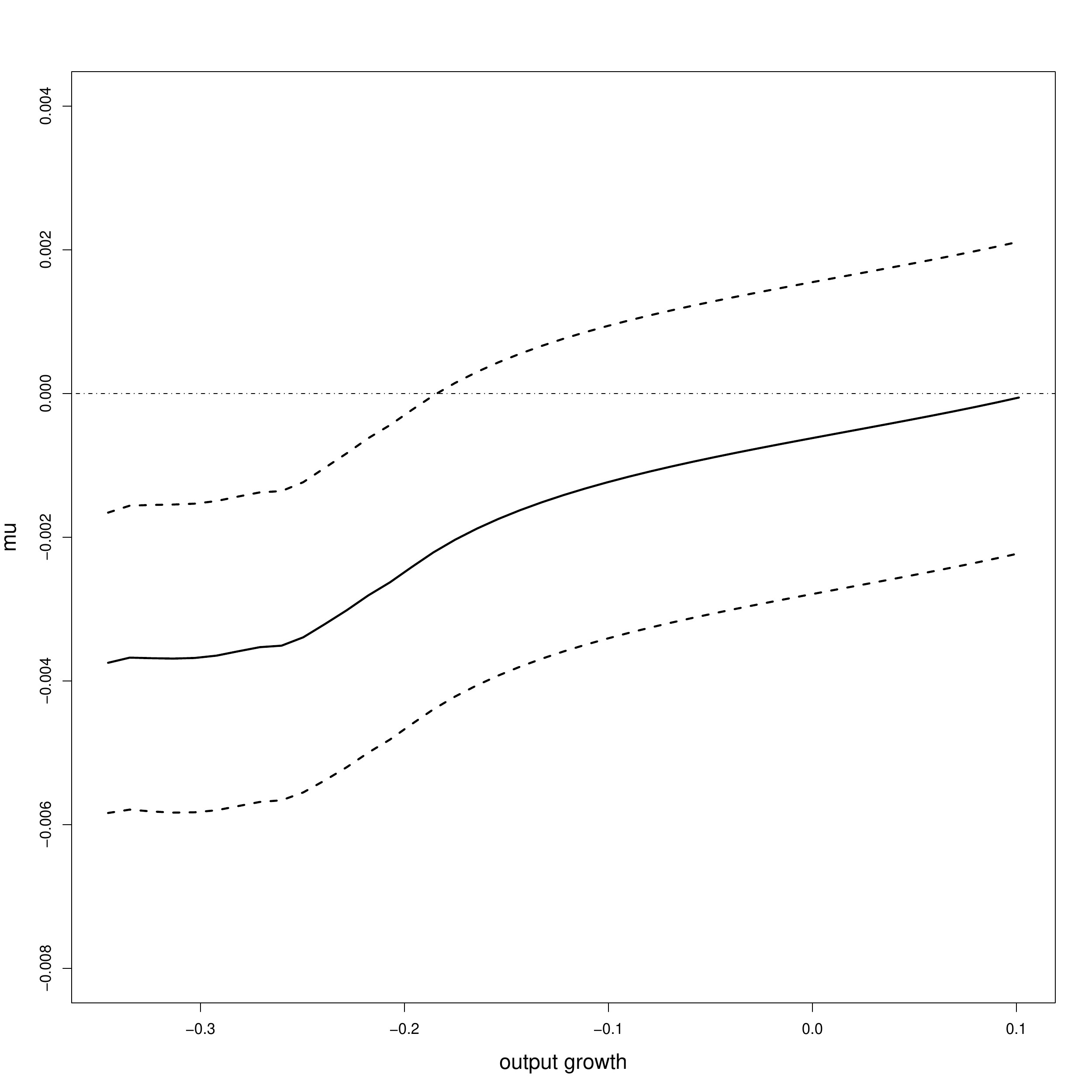, width = 6.0cm,
angle=0} & \psfig{file = 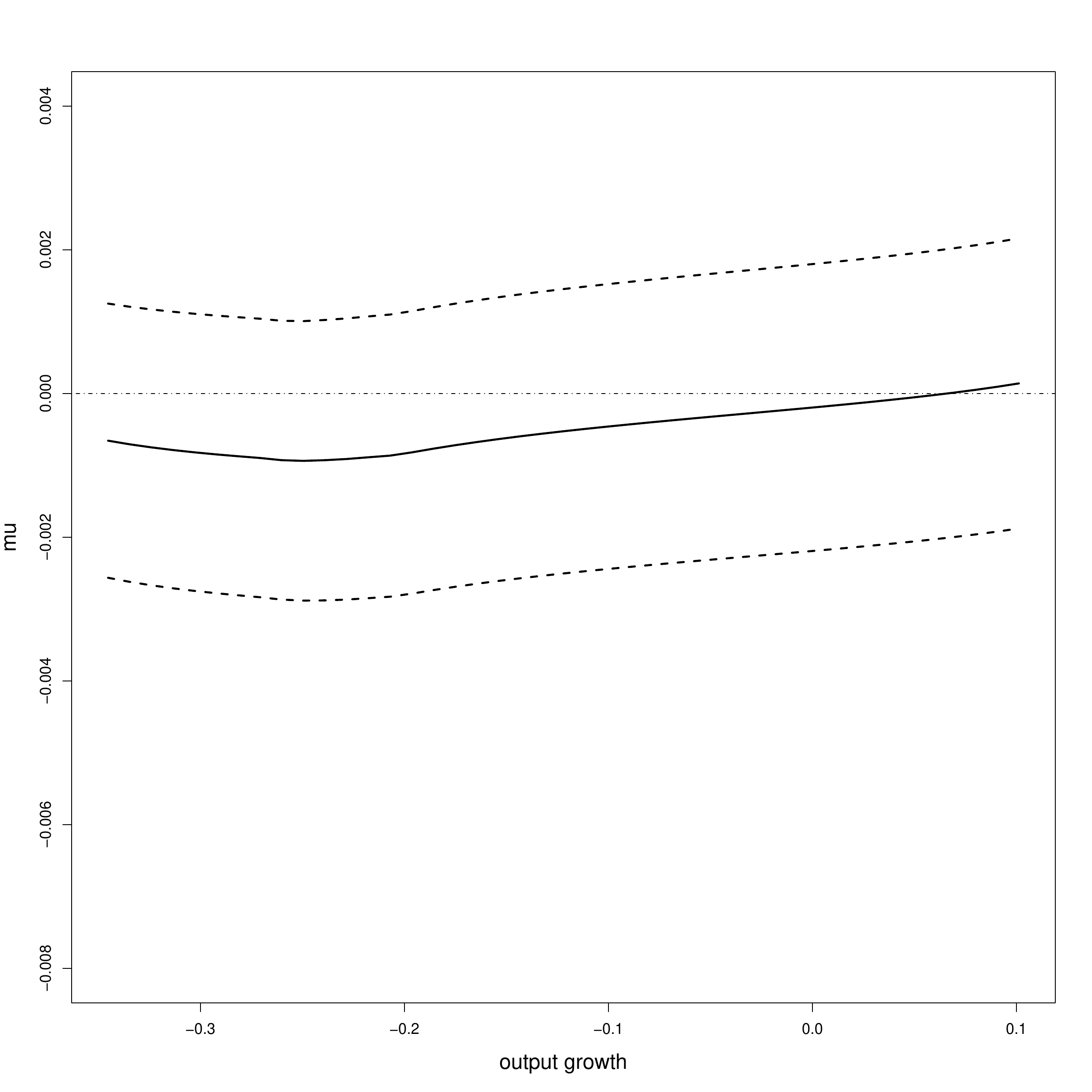, width = 6.0cm, angle=0} &  \\
(a) 25th-percentile of money growth & (b) 50th-percentile of money growth &  \\
\psfig{file = 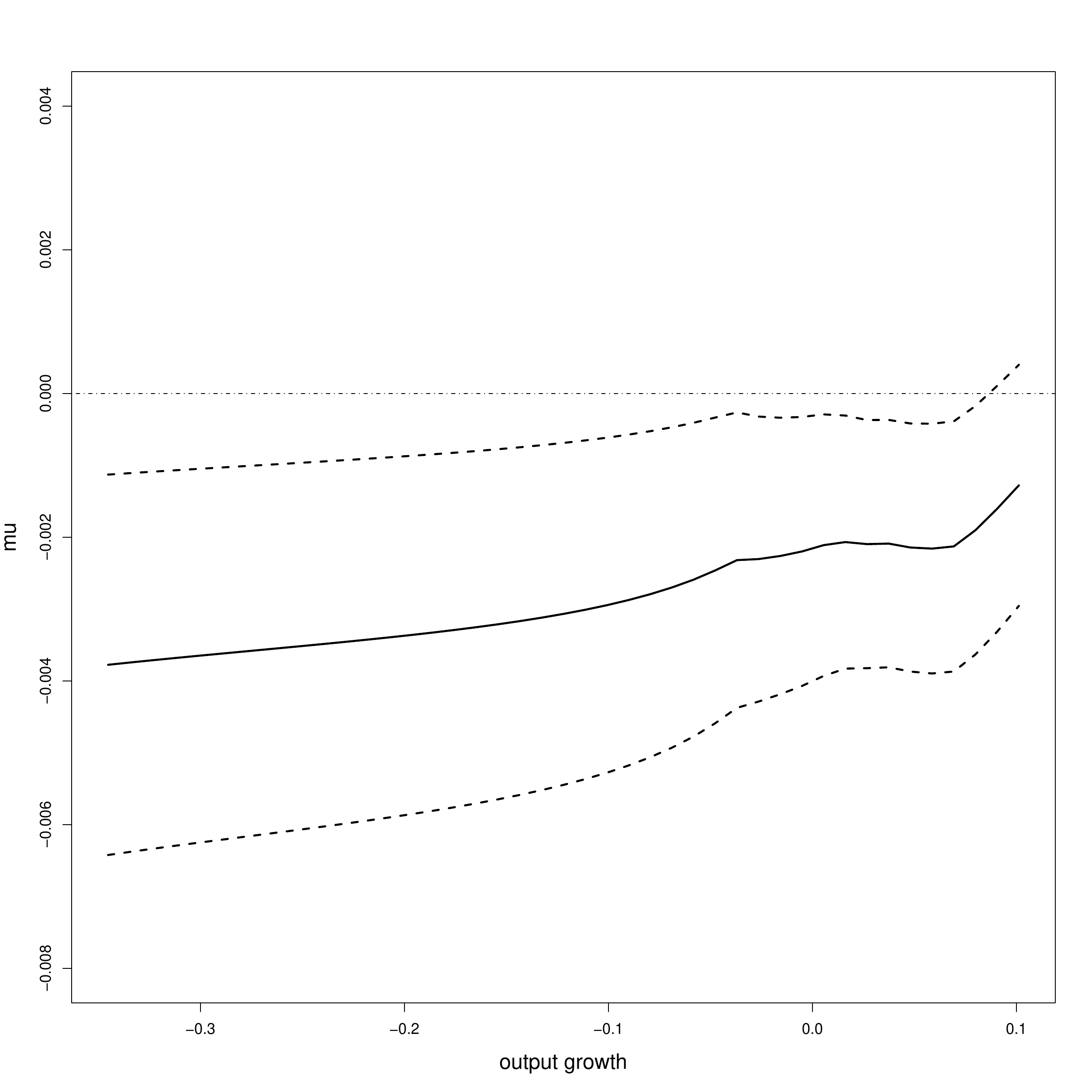, width = 6.0cm, angle=0} &  \\
(c) 75th-percentile of money growth &  \\
&  &
\end{tabular}
{\small {} }
\caption{The $solid$ curve is the local-linear regression estimate of $\mu(\cdot)$ for GBP/USD. The $dashed$ band is the 95\% SCR of $\mu(\cdot)$ and the dotted horizontal line is $H_0:\,\mu(\cdot)=0$. The bandwidth is obtained through under-smoothing of the GCV-chosen one.}
\label{fig_gbp_m}
\end{figure}

\begin{figure}[tbp]
\centering
\begin{tabular}{ccc}
\setlength{\itemsep}{-1.0cm} \psfig{file = 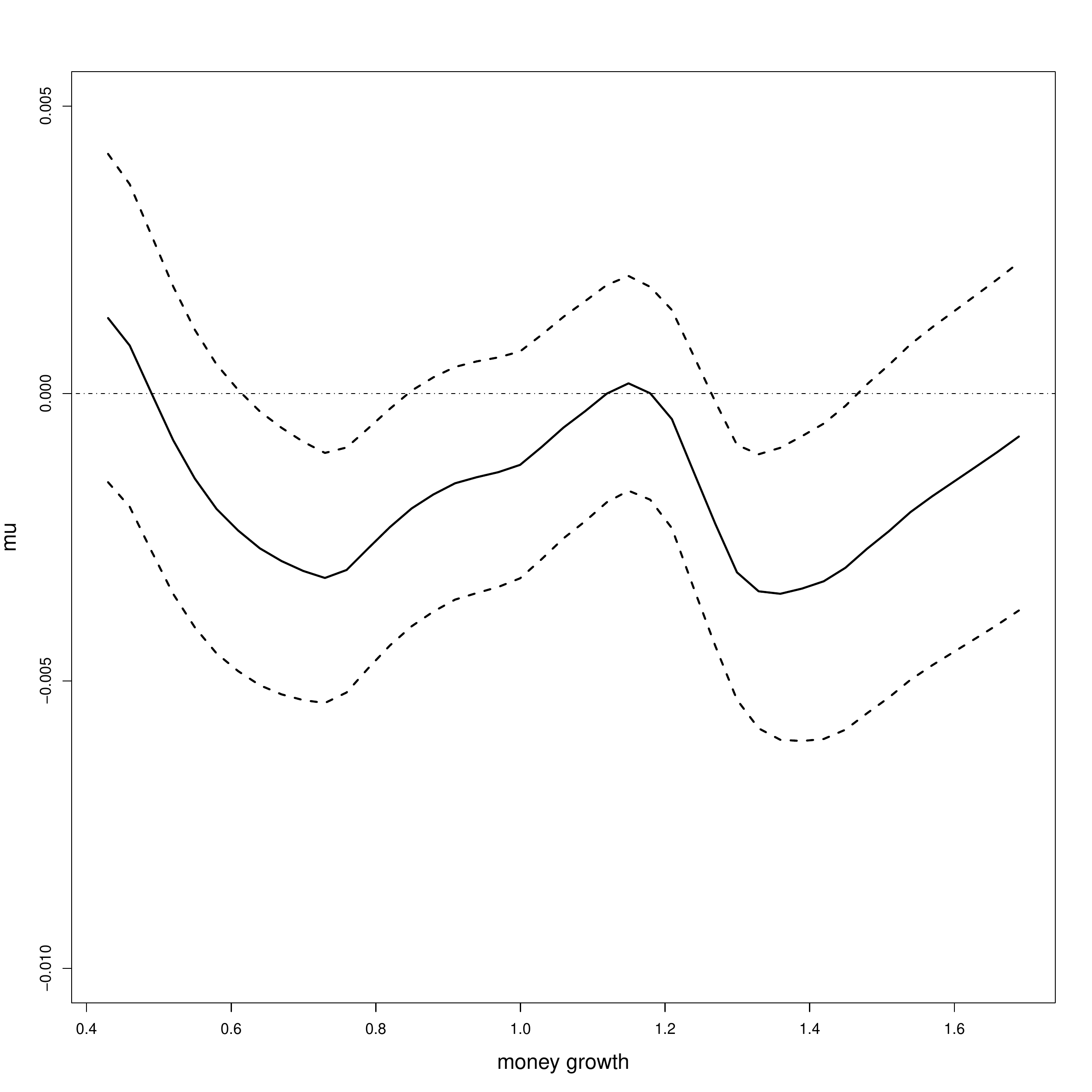, width = 6.0cm,
angle=0} & \psfig{file = 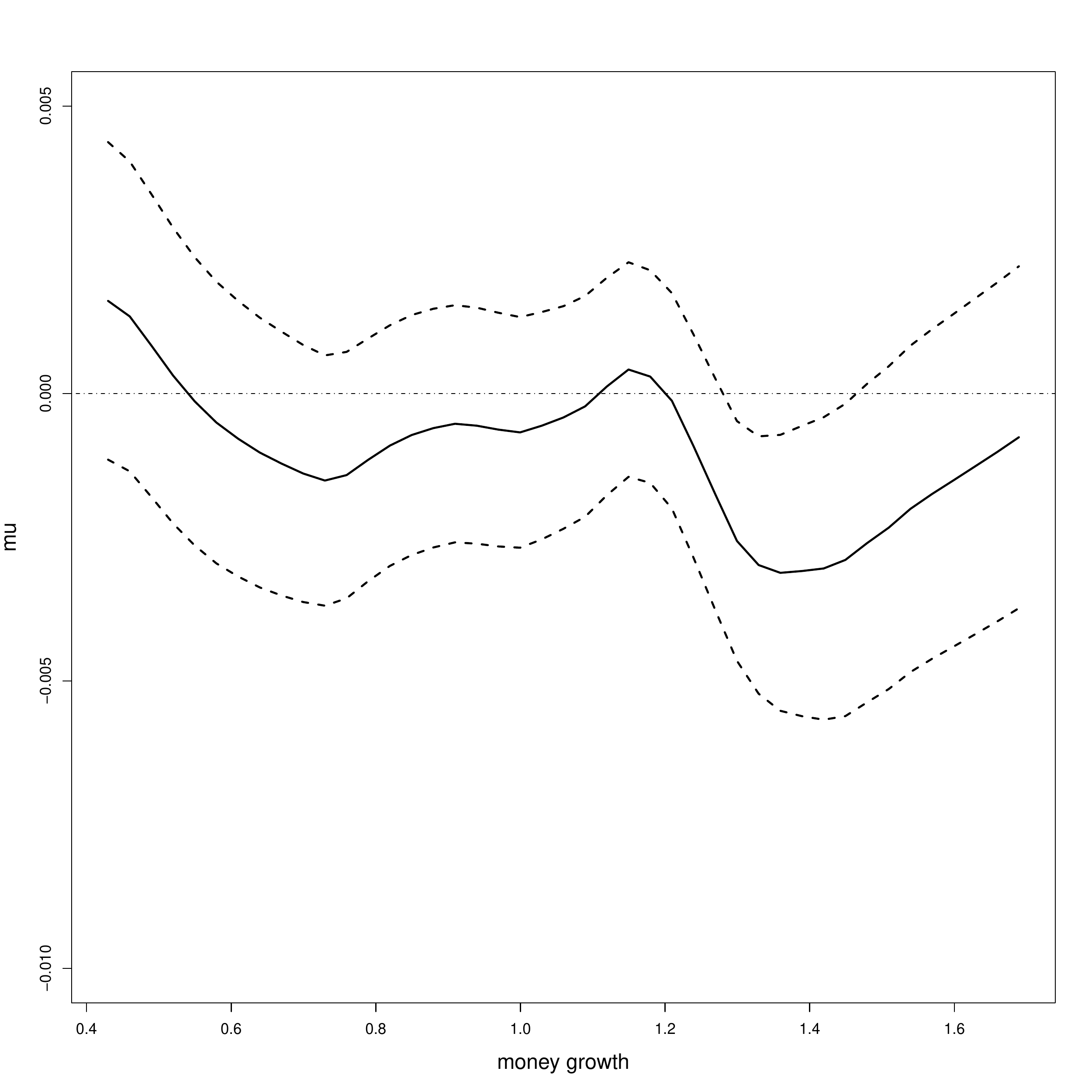, width = 6.0cm, angle=0} &  \\
(a) 25th-percentile of output growth & (b) 50th-percentile of output growth &  \\
\psfig{file = 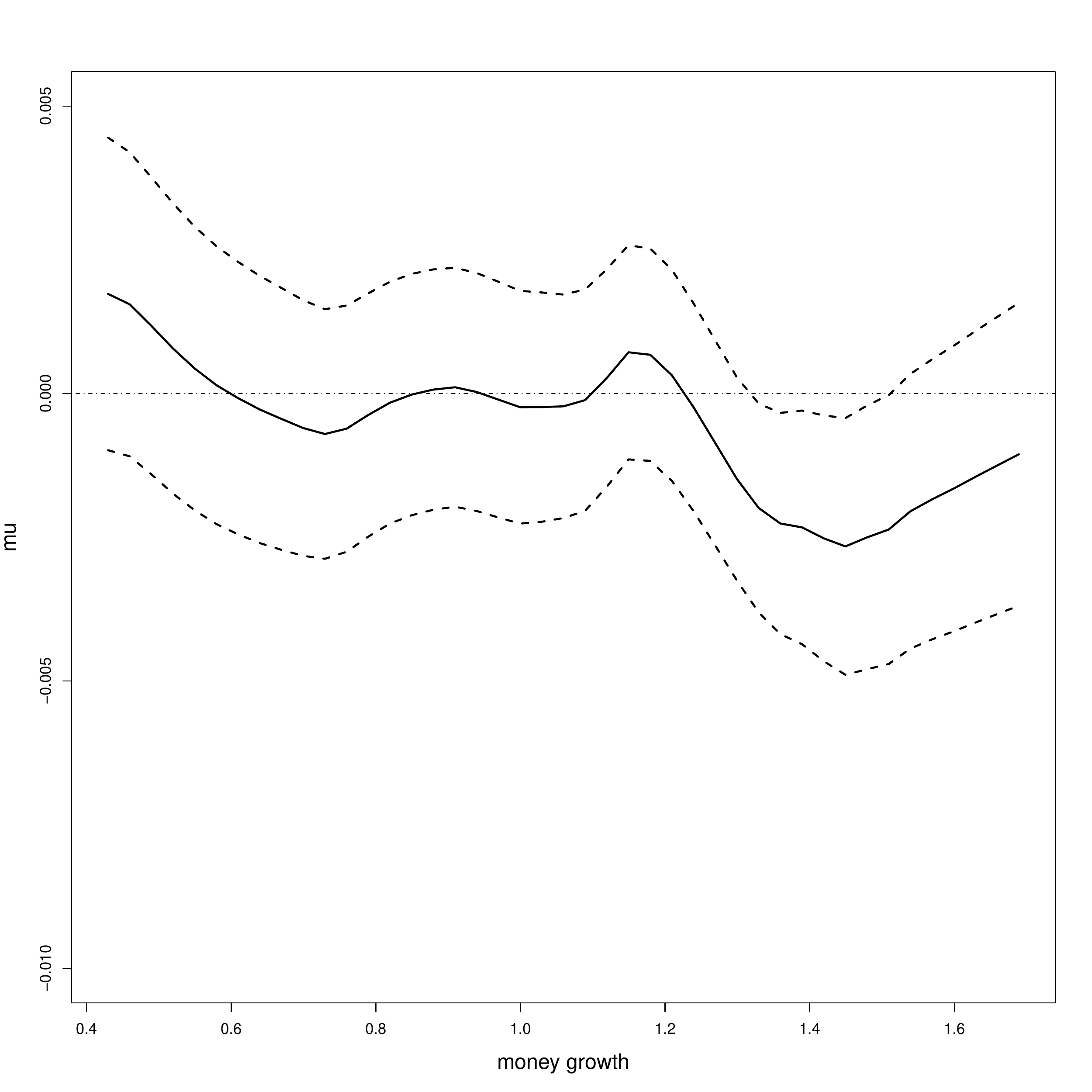, width = 6.0cm, angle=0} &  \\
(c) 75th-percentile of output growth &  \\
&  &
\end{tabular}
{\small {} }
\caption{The $solid$ curve is the local-linear regression estimate of $\mu(\cdot)$ for GBP/USD. The $dashed$ band is the 95\% SCR of $\mu(\cdot)$ and the dotted horizontal line is $H_0:\,\mu(\cdot)=0$. The bandwidth is obtained through under-smoothing of the GCV-chosen one.}
\label{fig_gbp_y}
\end{figure}

\clearpage

\printbibliography

\clearpage

\section*{Supplementary Materials}

\subsection*{S.1 Some Useful Lemmas}
\begin{lemma}[\cite{Burkholder,Rio}]
    \label{lemma_burkholder}
    Let $q>1,q'=\min\{q,2\}$, and $M_T=\sum_{t=1}^{T}\xi_t$, where $\xi_t\in\L^q$ are martingale differences. Then 
    $$\lVert M_T\rVert_q^{q'}\le K_{q}^{q'}\sum_{t=1}^{T}\lVert \xi_T\rVert_q^{q'},\quad\text{where } K_q=\max\{(q-1)^{-1},\sqrt{q-1}\}.$$
\end{lemma}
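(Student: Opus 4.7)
The inequality is the classical Burkholder--Rio moment bound for martingale sums, with sharp constants. The plan is to split into two regimes according to whether $q \in (1,2]$ (so $q' = q$ and $K_q = (q-1)^{-1}$) or $q > 2$ (so $q' = 2$ and $K_q = \sqrt{q-1}$), and in each case combine a sharp one-step inequality with either iteration or Minkowski. Throughout, let $\F_t = \sigma(\xi_1,\ldots,\xi_t)$ denote the natural filtration so that $\EE[\xi_t\mid\F_{t-1}]=0$ by hypothesis.

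\textbf{Case $1 < q \leq 2$.} Here I would proceed by induction on $T$, using a sharp convexity bound for $x\mapsto|x|^q$. The key one-step ingredient (Rio) is an inequality of the form
$$|a+b|^q \;\leq\; |a|^q + q|a|^{q-1}\operatorname{sgn}(a)\,b + (q-1)^{1-q}|b|^q,\qquad a,b\in\RR,$$
or an equivalent variant carrying the sharp coefficient $K_q^q = (q-1)^{-q}$ on $|b|^q$ after iteration. Setting $a = M_{T-1}$ and $b = \xi_T$, taking conditional expectation given $\F_{T-1}$ annihilates the linear cross term by the martingale-difference property, and a further unconditional expectation produces the recursion
$$\EE|M_T|^q \;\leq\; \EE|M_{T-1}|^q + K_q^q\,\EE|\xi_T|^q.$$
Iterating from $T$ down to $1$ and restating in $\|\cdot\|_q^q$ norms yields the conclusion with $q' = q$.

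\textbf{Case $q > 2$.} Here I would reduce to the quadratic variation $[M]_T := \sum_{t=1}^T \xi_t^2$ via Burkholder's sharp Burkholder--Davis--Gundy upper bound,
$$\|M_T\|_q \;\leq\; \sqrt{q-1}\,\bigl\|[M]_T^{1/2}\bigr\|_q,$$
and then square. Since $q/2 \geq 1$, Minkowski's inequality in $L^{q/2}$ applied to the sum $\sum_t \xi_t^2$ gives
$$\bigl\|[M]_T^{1/2}\bigr\|_q^{2} \;=\; \bigl\|[M]_T\bigr\|_{q/2} \;=\; \Bigl\|\sum_{t=1}^T \xi_t^2\Bigr\|_{q/2} \;\leq\; \sum_{t=1}^T \bigl\|\xi_t^2\bigr\|_{q/2} \;=\; \sum_{t=1}^T \|\xi_t\|_q^{2}.$$
Combining the two steps delivers $\|M_T\|_q^2 \leq (q-1)\sum_{t=1}^T \|\xi_t\|_q^2$, matching $K_q^2 = q-1$ and $q' = 2$.

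\textbf{Main obstacle.} The structural argument in both cases (induction plus conditional expectation for $q\leq 2$, sharp BDG plus Minkowski for $q>2$) is elementary once the pointwise/norm inequalities are in hand. The real difficulty, which I would treat as a black box and cite from Burkholder (1988) and Rio (2009), is obtaining the \emph{sharp} constants $(q-1)^{-1}$ and $\sqrt{q-1}$. Both rest on Burkholder's ``good function'' method, in which one constructs a biconcave/biconvex majorant of $|y|^q - K_q^q|x|^q$ on $\RR^2$ whose associated submartingale property forces the optimal constant; reproving these is well outside the scope of this lemma and so they are invoked as stated.
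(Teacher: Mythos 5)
The paper does not actually prove this lemma: it is imported verbatim, with citations to Burkholder and Rio, as a known moment inequality, so there is no in-paper argument to compare against. Your outline for the case $1<q\le 2$ (a sharp one-step convexity inequality for $x\mapsto|x|^q$ whose linear term is annihilated by conditioning on $\F_{T-1}$, followed by telescoping) is the standard route to the constant $(q-1)^{-1}$ and is fine modulo the cited pointwise inequality.

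The branch $q>2$, however, contains a genuine gap: the inequality $\lVert M_T\rVert_q\le\sqrt{q-1}\,\bigl\lVert [M]_T^{1/2}\bigr\rVert_q$ that you invoke as Burkholder's ``sharp BDG upper bound'' is not a theorem. Burkholder's sharp square-function constant for $q\ge2$ is $q-1$, not $\sqrt{q-1}$, and by Davis's theorem the optimal constant in $\lVert M\rVert_q\le C_q\lVert [M]^{1/2}\rVert_q$ is bounded below by the largest positive zero of the Hermite function $He_q$ (attained in the limit by discretized stopped Brownian motion); that zero equals $\sqrt{3}\approx 1.73>\sqrt{2}$ at $q=3$ and $\sqrt{3+\sqrt{6}}\approx 2.33>\sqrt{3}$ at $q=4$, and grows like $2\sqrt{q}$. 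So your intermediate step fails for every $q>2$. Indeed, your route would establish something strictly stronger than the lemma, since Minkowski gives $\bigl\lVert [M]_T^{1/2}\bigr\rVert_q^2=\bigl\lVert \sum_{t}\xi_t^2\bigr\rVert_{q/2}\le\sum_{t}\lVert\xi_t\rVert_q^2$; that stronger statement is false even though the lemma is true. The correct argument (Rio, 2009, Theorem 2.1) does not pass through the square function at all: one proves the one-step bound $\lVert M_T\rVert_q^2\le\lVert M_{T-1}\rVert_q^2+(q-1)\lVert\xi_T\rVert_q^2$ directly, by examining $\varphi(t)=\lVert M_{T-1}+t\,\xi_T\rVert_q^2$, whose derivative at $t=0$ vanishes because $\EE[\xi_T\mid\F_{T-1}]=0$ and whose second derivative is controlled by $2(q-1)\lVert\xi_T\rVert_q^2$ via H\"older's inequality; telescoping then yields the claim with $q'=2$ and $K_q=\sqrt{q-1}$.
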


\begin{lemma}[\cite{Nazarov2003}]
    \label{lemma_nazarov}
    Let $X=(X_1,\ldots ,X_s)^{\top}$ be a centered Gaussian vector in $\RR^s$. Assume $\EE(X_i^2)\ge b$ for some $b>0$, $1\le i\le s$. Then for any $e>0$ and $d\in\RR^s$, 
    \begin{equation}
        \label{eq_nazarov}
        \sup_{x\in \mathbb{R}}\PP\Big(\big||X+d|_{\infty}-x\big|\le e\Big) \le  ce\big(\sqrt{2\log(s)}+2\big),
    \end{equation}
    where $c$ is a constant depending only on $b$.
\end{lemma}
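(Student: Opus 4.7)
The statement is the classical Nazarov-type anti-concentration inequality for the $\ell^\infty$-norm of a Gaussian vector, cited here as a tool for the Gaussian comparison step in the supplement. My plan is to reduce the claim to a supremum bound on the Lebesgue density of $V := |X+d|_\infty$ and then control that density via log-concavity of the Gaussian measure together with Gaussian concentration.

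\textbf{Reduction to a density bound.} Write $V = \max_{1\le k\le N} W_k$ with $N = 2s$, where $\{W_k\}$ is the list of Gaussian variables $\pm(X_i+d_i)$, $1\le i\le s$, each having variance at least $b$. Then $V$ admits a Lebesgue density $f_V$ on $(0,\infty)$, and
$$\PP\bigl(\bigl||X+d|_\infty - x\bigr|\le e\bigr) \le 2e \sup_{v>0} f_V(v),$$
so it suffices to prove $\sup_{v>0} f_V(v) \le C(b)(\sqrt{2\log s}+2)$, where $C(b)$ depends only on $b$.

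\textbf{Ingredients.} Three tools drive the density bound. First, the standard formula
$$f_V(v) = \sum_{k=1}^{N} \phi_{W_k}(v)\, \PP\bigl(W_j\le v,\ \forall j\ne k\bigm| W_k=v\bigr),$$
together with the pointwise bound $\phi_{W_k}(v)\le(2\pi b)^{-1/2}$ on each marginal Gaussian density. Second, the log-concavity of the CDF $F_V(v) := \PP(V\le v)$, which follows from Prekopa-Leindler applied to the nested convex boxes $S_v := \{y : |y+d|_\infty \le v\}$ (these satisfy $\lambda S_{v_1} + (1-\lambda)S_{v_2} \subseteq S_{\lambda v_1 + (1-\lambda) v_2}$, so the Gaussian measures $\gamma(S_v)$ are log-concave in $v$). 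Third, the Borell-TIS concentration inequality combined with the maximal inequality $\EE V \lesssim \sqrt{\log s}$, which locates the median $m_V$ of $V$ at order $\sqrt{\log s}$ with sub-Gaussian fluctuations of unit order.

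\textbf{Main obstacle.} The crux is controlling the sum $\sum_k \PP(W_j\le v\ \forall j\ne k \mid W_k=v)$: the trivial union bound yields only $N = 2s$, which is off by a factor of $\sqrt{s/\log s}$. I would resolve this by splitting on $v$. On the bulk window $|v - m_V| = O(1)$---a window of constant width centered near $\sqrt{\log s}$---log-concavity of $F_V$, equivalently monotonicity of $f_V/F_V = (\log F_V)'$, combined with $F_V(m_V) = 1/2$ and $\int f_V = 1$ yields the uniform estimate $f_V(v) \lesssim \sqrt{\log s}+1$. Outside the bulk, the Borell-TIS sub-Gaussian tail on $V - m_V$ controls $f_V$ by a quantity decaying in $|v-m_V|$ faster than any algebraic factor in $s$ can spoil. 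Gluing the two regimes yields the uniform density bound, and then integrating $f_V$ over $[x-e, x+e]$ gives the claimed inequality. The sharpness-gaining step is entirely in the bulk estimate: without the log-concavity of $F_V$ one would only recover a trivial $\sqrt{s}$ rate, so this is the only step where I expect genuine technical difficulty.
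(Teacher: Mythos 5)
The paper itself offers no proof of this lemma: it is imported verbatim from Nazarov (2003) (it is the anti-concentration inequality also stated as Lemma A.1 in Chernozhukov--Chetverikov--Kato), so the only question is whether your sketch would actually establish it. It would not, for two reasons. First, your localization of the problem is incompatible with the hypotheses. The lemma assumes only a \emph{lower} bound $\EE(X_i^2)\ge b$ and lets $d\in\RR^s$ be arbitrary, so neither the maximal inequality $\EE V\lesssim\sqrt{\log s}$ nor the claim that $V$ has ``sub-Gaussian fluctuations of unit order'' around a median of order $\sqrt{\log s}$ is available: with $s=1$, $X_1\sim N(0,100)$ and $d_1=1000$ the median of $V=|X_1+d_1|$ is near $1000$ and the Borell--TIS scale is $10$. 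Your bulk/tail split centered at $m_V\asymp\sqrt{\log s}$ with an $O(1)$ window therefore does not exist in the generality required, and the constant would end up depending on $\max_i\EE(X_i^2)$ and $|d|_\infty$, whereas the lemma requires it to depend only on $b$. (Also, a tail bound on $\PP(V>v)$ does not by itself bound the density $f_V(v)$.)

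Second, and more fundamentally, the bulk estimate --- the step you correctly identify as the crux --- is asserted rather than proved, and the three facts you invoke do not imply it. Log-concavity of $F_V$ (which Prekopa does give you), $F_V(m_V)=1/2$, and $\int f_V=1$ are all satisfied by the uniform distribution on an interval of length $\varepsilon$, whose density is $1/\varepsilon$; so no bound on $f_V(m_V)$ of the form $C(b)(\sqrt{\log s}+1)$ can follow from these alone. The quantitative input has to come from the Gaussian marginals, and the actual mechanism in Nazarov's argument is different from anything in your sketch: one writes $f_V(v)=\sum_k\phi_{W_k}(v)\,\PP(W_j\le v\ \forall j\ne k\mid W_k=v)$, observes that the events $B_k=\{W_j\le v\ \forall j\ne k,\ W_k>v\}$ are pairwise disjoint so that $\sum_k\PP(B_k)\le 1$, splits the indices according to whether the standardized level $(v-\EE W_k)/\sigma_k$ exceeds $\sqrt{2\log(2s)}$ (the high-level terms sum to $O(b^{-1/2})$ by the Gaussian density bound), and for the remaining indices converts $\phi_{W_k}(v)\,\PP(\cdot\mid W_k=v)$ into a multiple of $\PP(B_k)$ using the Mills ratio together with log-concavity of $u\mapsto\PP(W_j\le v\ \forall j\ne k,\ W_k\le u)$. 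That per-coordinate comparison, not a global median argument, is what produces the factor $\sqrt{2\log s}+2$ with a constant depending only on $b$; your proposal is missing it entirely.
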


\begin{lemma}[Comparison of Gaussian vectors (\cite{chen_inference_2022})]
    \label{lemma_comparison}
    Let $X=(X_1,X_2,\ldots,X_s)^{\top}$ and $Y=(Y_1,Y_2,\ldots,Y_s)^{\top}$ be two centered Gaussian vectors in $\RR^s$ and let $d=(d_1,d_2,\ldots,d_s)^{\top}$ be a fixed vectors in $\RR^s$. Define $\Delta=\max_{1\le i,j\le s}|\sigma_{i,j}^X-\sigma_{i,j}^Y|$, where $\sigma_{i,j}^X$ (resp. $\sigma_{i,j}^Y$) is set to be $\EE(X_iX_j)$ (resp. $\EE(Y_iY_j)$). Assume that $Y_i$'s have the same variance $\sigma^2>0$. Then,
    $$\sup_{x\in\RR}\Big|\PP\big(|X+d|_{\infty}\le x\big) - \PP\big(|Y+d|_{\infty}\le  x\big)\Big| \lesssim  \Delta^{1/3}\log^{2/3}(s),$$
    where the constants in $\lesssim$ only depends on $\sigma$.
\end{lemma}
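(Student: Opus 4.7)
The plan is to prove this in the Chernozhukov--Chetverikov--Kato style: replace the non-smooth indicator that defines the Kolmogorov-type distance by a smooth surrogate, interpolate between $X$ and $Y$ through a Gaussian path, bound the interpolation error via Stein's identity, and finally absorb the smoothing error using Nazarov's anti-concentration bound (Lemma \ref{lemma_nazarov}).

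First I would rewrite $|v+d|_\infty = \max_{1\le k\le 2s} \widetilde v_k$, where $\widetilde v = (v_1+d_1,-v_1-d_1,\ldots,v_s+d_s,-v_s-d_s)^\top$, so that the $|\cdot|_\infty$ functional becomes a max of $2s$ linear functionals. Then introduce the log-sum-exp smoothing $F_\beta(\widetilde v) = \beta^{-1}\log \sum_{k=1}^{2s} e^{\beta \widetilde v_k}$, which satisfies $0 \le F_\beta(\widetilde v) - \max_k \widetilde v_k \le \beta^{-1}\log(2s)$ and has the well-known derivative bounds $\sum_k |\partial_k F_\beta| = 1$ and $\sum_{k,l} |\partial_k\partial_l F_\beta| \lesssim \beta$. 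Next, pick a thrice-differentiable function $g_{x,\epsilon}:\RR\to[0,1]$ with $g_{x,\epsilon}(t) = 1$ for $t \le x$ and $g_{x,\epsilon}(t)=0$ for $t\ge x+\epsilon$, with $\|g_{x,\epsilon}^{(r)}\|_\infty \lesssim \epsilon^{-r}$ for $r\le 3$, and set $m(v) = g_{x,\epsilon}(F_\beta(v))$. By construction $\mathbf{1}\{|v+d|_\infty \le x - \beta^{-1}\log(2s)\} \le m(v) \le \mathbf{1}\{|v+d|_\infty \le x+\epsilon\}$, so Kolmogorov differences of the indicators are controlled by $\EE m(X) - \EE m(Y)$ plus an anti-concentration term.

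For the interpolation, define $Z(t) = \sqrt{t}\,X + \sqrt{1-t}\,Y$ for $t\in[0,1]$, so that $Z(0)=Y$, $Z(1)=X$, and $Z(t)$ is centered Gaussian with covariance $t\Sigma^X + (1-t)\Sigma^Y$. By Stein's identity applied to Gaussian vectors one obtains
\begin{equation*}
\frac{d}{dt}\EE\, m(Z(t)) = \frac{1}{2}\sum_{i,j=1}^{s} \big(\sigma^X_{i,j} - \sigma^Y_{i,j}\big)\,\EE\bigl[\partial_i\partial_j m(Z(t))\bigr].
\end{equation*}
Using the chain rule on $m = g_{x,\epsilon}\circ F_\beta$ and the derivative bounds above, the sum $\sum_{i,j}|\partial_i\partial_j m(v)|$ is controlled by $\|g''_{x,\epsilon}\|_\infty (\sum_i|\partial_i F_\beta|)^2 + \|g'_{x,\epsilon}\|_\infty \sum_{i,j}|\partial_i\partial_j F_\beta| \lesssim \epsilon^{-2} + \beta\epsilon^{-1}$. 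Integrating $t$ from $0$ to $1$ yields
\begin{equation*}
\bigl|\EE m(X) - \EE m(Y)\bigr| \lesssim \Delta\bigl(\epsilon^{-2} + \beta\epsilon^{-1}\bigr).
\end{equation*}

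Finally, the smoothing error is absorbed via Nazarov's inequality (Lemma \ref{lemma_nazarov}). Because the $Y_i$ all have variance $\sigma^2>0$, the anti-concentration bound applies to $|Y+d|_\infty$ with $b=\sigma^2$, giving
\begin{equation*}
\PP\bigl(x-\beta^{-1}\log(2s) \le |Y+d|_\infty \le x+\epsilon\bigr) \lesssim \bigl(\epsilon + \beta^{-1}\log(2s)\bigr)\sqrt{\log(s)}.
\end{equation*}
Combining the two bounds, the total error is $\Delta\bigl(\epsilon^{-2} + \beta\epsilon^{-1}\bigr) + \bigl(\epsilon + \beta^{-1}\log s\bigr)\sqrt{\log s}$. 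The last step is to optimize: choose $\beta = \log(s)/\epsilon$, which balances the two smoothing scales, and then pick $\epsilon = \Delta^{1/3}\log^{1/6}(s)$, which equates the Stein contribution $\Delta\log(s)\epsilon^{-2}$ with the anti-concentration contribution $\epsilon\sqrt{\log s}$; both terms then equal $\Delta^{1/3}\log^{2/3}(s)$, yielding the claimed bound. Taking the supremum in $x$ on both sides preserves the estimate.

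The main obstacle is making the combination of the two smoothing layers (the Gumbel-type softmax $F_\beta$ and the mollified indicator $g_{x,\epsilon}$) work cleanly in the Stein step: one must verify that the Hessian sum $\sum_{i,j}|\partial_i\partial_j m|$ obeys the ``gradient of max is a probability vector'' property, otherwise the clean $\beta\epsilon^{-1} + \epsilon^{-2}$ rate degrades. The other subtle point is the appeal to Nazarov: it requires a uniform lower bound on $\mathrm{Var}(Y_i)$, which is precisely the hypothesis $\EE(Y_i^2)=\sigma^2>0$, and this is why the constant in $\lesssim$ may depend on $\sigma$.
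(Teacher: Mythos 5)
Your proof is correct: the paper does not actually prove this lemma but imports it by citation from \textcite{chen_inference_2022}, and your argument is a faithful reconstruction of the standard Chernozhukov--Chetverikov--Kato machinery (softmax smoothing, Slepian/Stein interpolation, Nazarov anti-concentration) that underlies the cited result, with the correct parameter choices $\beta=\log(s)/\epsilon$ and $\epsilon=\Delta^{1/3}\log^{1/6}(s)$ yielding the stated $\Delta^{1/3}\log^{2/3}(s)$ rate. The only cosmetic omissions are that you should take $X$ and $Y$ independent (so that $Z(t)=\sqrt{t}\,X+\sqrt{1-t}\,Y$ has covariance $t\Sigma^X+(1-t)\Sigma^Y$ and the interpolation identity holds) and note explicitly that the one-sided sandwich must be run in both directions after shifting $x$ by $\beta^{-1}\log(2s)$; neither affects the validity of the argument.
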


The following lemma extends the concentration inequality in \textcite{freedman1975} to a high-dimensional version.
\begin{lemma}[Freedman's inequality]
    \label{lemma_freedman}
    Suppose that $\A$ is an index set with $|\A|<\infty$, and for each $a\in\A$, we let $\{\xi_{a,i}\}_{i=1}^n$ be a sequence of martingale differences with respect to the filtration $\{\F_i\}_{i=1}^n$. Let $M_a=\sum_{i=1}^n\xi_{a,i}$ and $V_a=\sum_{i=1}^n\EE[\xi_{a,i}^2\mid \F_{i-1}]$. Then, for any $z,u,v>0$,
    \begin{align*}
        \PP\big(\max_{a\in\A}|M_a|\ge z\big) \le \sum_{i=1}^n\PP\big(\max_{a\in\A}|\xi_{a,i}|\ge u\big) + 2\PP\big(\max_{a\in\A}V_a\ge v\big) + 2|\A|e^{-z^2/(2zu+2v)}.
    \end{align*}
\end{lemma}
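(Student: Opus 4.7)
The plan is to reduce to the classical scalar Freedman inequality by combining truncation with a union bound over the finite index set $\A$. First, I would introduce two ``bad'' events that capture where the hypotheses of the scalar inequality fail: $E_1 = \{\max_{1 \le i \le n} \max_{a \in \A} |\xi_{a,i}| \ge u\}$ and $E_2 = \{\max_{a \in \A} V_a \ge v\}$. A union bound over $i$ gives $\PP(E_1) \le \sum_{i=1}^n \PP(\max_a |\xi_{a,i}| \ge u)$, while $\PP(E_2) = \PP(\max_a V_a \ge v)$ by definition. Splitting the total probability yields $\PP(\max_a |M_a| \ge z) \le \PP(E_1) + \PP(E_2) + \PP(\max_a |M_a| \ge z, E_1^c \cap E_2^c)$.

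Second, to exploit the boundedness and variance control on $E_1^c \cap E_2^c$ while preserving the martingale structure, I would work with the centered truncations $\xi'_{a,i} = \xi_{a,i}\mathbf{1}\{|\xi_{a,i}| \le u\} - \EE[\xi_{a,i}\mathbf{1}\{|\xi_{a,i}| \le u\} \mid \F_{i-1}]$. These are martingale differences with $|\xi'_{a,i}| \le 2u$ almost surely, and their predictable quadratic variation satisfies $\sum_i \EE[(\xi'_{a,i})^2 \mid \F_{i-1}] \le V_a$. On $E_1^c$ the uncentered truncations coincide with the original $\xi_{a,i}$, so the discrepancy $M_a - M'_a$ reduces to a cumulated conditional-expectation correction whose magnitude is controlled by $V_a/u$ via Cauchy--Schwarz, and is absorbed into the slack of the bound on $E_2^c$.

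Third, I would apply the classical scalar Freedman inequality to each $M'_a = \sum_i \xi'_{a,i}$: with $|\xi'_{a,i}| \le 2u$ and the predictable quadratic variation bounded by $v$ on $E_2^c$, one gets $\PP(|M'_a| \ge z, E_2^c) \le 2\exp(-z^2/(2uz + 2v))$, possibly up to absolute constants absorbed in the factor-of-two prefactor on $\PP(E_2)$. A union bound over $a \in \A$ supplies the multiplicative $|\A|$, and combining with the bounds on $E_1$ and $E_2$ yields the stated inequality.

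The principal obstacle is handling the truncation without breaking the martingale property: the event $\{|\xi_{a,i}| \le u\}$ is $\F_i$-measurable but not $\F_{i-1}$-predictable, so one cannot simply stop at the first violation and retain a genuine martingale. The centered-truncation device circumvents this at the cost of a factor-of-two loss in the effective increment bound (and a symmetric bookkeeping loss in the variance), which is exactly what the prefactor $2$ on $\PP(\max_a V_a \ge v)$ in the lemma is built to absorb. Keeping track of these constants so that $2uz + 2v$ appears in the exponent (rather than e.g.\ $4uz + 2v$) is the most delicate calculation; it is handled by a suitable rescaling of $u$ inside the truncation or by a slightly refined application of Bennett's moment estimate within the standard supermartingale argument.
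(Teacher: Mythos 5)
The paper does not actually prove this lemma: it is stated in Section S.1 as an extension of \textcite{freedman1975} and then used as a black box, so there is no in-paper argument to compare yours against. Your overall strategy --- split off the two bad events $E_1$ and $E_2$, pass to the centered truncations $\xi'_{a,i}$, apply the scalar Freedman/Bernstein inequality to each $M'_a$ with the variance condition kept inside the probability, and union bound over $a\in\A$ --- is the standard route and is essentially the right one. Your observation that the Bernstein-form exponent $\exp\{-z^2/(2v+2cz/3)\}$ with increment bound $c=2u$ yields $\exp\{-z^2/(2v+4uz/3)\}\le\exp\{-z^2/(2v+2uz)\}$ correctly disposes of the factor-of-two loss coming from centering the truncation.

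The genuine gap is the drift term. On $E_1^c$ you have $M_a-M'_a=-\sum_{i=1}^n\EE[\xi_{a,i}\One\{|\xi_{a,i}|>u\}\mid\F_{i-1}]$, bounded in absolute value by $V_a/u$ and hence by $v/u$ on $E_2^c$; but this is an additive shift of the deviation level, so what your argument actually delivers is $2|\A|\exp\{-(z-v/u)^2/(2v+2u(z-v/u))\}$ in place of the stated $2|\A|e^{-z^2/(2zu+2v)}$. The claim that this is ``absorbed into the slack of the bound on $E_2^c$'' does not hold up: the prefactor $2$ on $\PP(\max_{a}V_a\ge v)$ is a multiplicative constant on a probability and cannot compensate for a shift in $z$; nor does rescaling $u$ inside the truncation remove it, since the drift bound $V_a/u$ only grows as $u$ shrinks. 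In the regime $u\ll z\le v/u$ the shifted bound is vacuous while the stated bound is not, so the lemma with these exact constants is not reachable by this argument (and it is doubtful that it holds verbatim for all $z,u,v>0$). Your proof closes if you either state the conclusion with $z$ replaced by $z-v/u$ in the exponential term, or impose $v/u\le z/2$ (equivalently, adjust the absolute constants in the exponent). The caveat is harmless in every application in the paper --- e.g.\ in Lemma \ref{lemma_m} one takes $u=z/\log(n)$ and $v\lesssim m^{-2\zeta}$, so $v/u=o(z)$ --- but it should be made explicit rather than waved away.
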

We introduce some basic properties of the weights $w_h(x,X_i)$ defined in (\ref{eq_weight}) in the following lemma.
\begin{lemma}[Basic properties of weights]
    \label{lemma_weight}
    Under Assumptions \ref{asm_sigma}-\ref{asm_dep}, on some set $\A_n$ with
\begin{align*}
    \PP(\A_n) \geq \begin{cases}
    1-O(n^{-q/2+1}) & \text{if}\ \xi>1/2-1/q\\
    1-O(n^{-\xi q}) & \text{if}\ 0<\xi <1/2-1/q,
    \end{cases}
\end{align*}
we have
\begin{equation}
    \label{lm51}
    \sup_{x\in\T_d}|\hat g(x) - g(x)| \lesssim \sqrt{\log(n)/(h^dn)}+h^2,
\end{equation}
where the constants in $\lesssim$ here are independent of $h$ and $n$, and $\hat g(x)$ is defined as the nonparametric estimate of the density function $g(x)$, that is
\begin{equation}
    \label{eq_hat_g}
    \hat g(x)=n^{-1}\sum_{i=1}^nK_h(x-X_i).
\end{equation}
This result also indicates that on the set $\A_n$,
\begin{equation}
    \label{lm52}
    \sup_{x\in\T_d}\Big| \frac{1}{ng(x)}K_h(x-X_i) - w_h(x,X_i)\Big| \lesssim \Big(\sqrt{\log(n)/(h^dn)}+h^2\Big)/(h^dn),
\end{equation}
and further,
\begin{equation}
    \label{lm53}
    h^dn \max_{1\leq i\leq n}|w_h(x,X_i)| \leq  c_w \quad \text{and} \quad \sum_{i=1}^n|w_h(x,X_i)|=O(1),
\end{equation}
uniformly over $x$ and $h$, where $c_w$ is a constant independent of $h,n,N$.
\end{lemma}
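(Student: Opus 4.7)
The three claims form a natural hierarchy: (\ref{lm51}) is the core result, and both (\ref{lm52}) and (\ref{lm53}) follow immediately from it via algebra once one knows that $\hat g(x)$ is bounded below on $\A_n$ (which holds because $g(x)\ge c_g$ by Assumption \ref{asm_smooth} together with (\ref{lm51})). My plan is therefore to establish (\ref{lm51}) first and deduce the other two as corollaries. Split $\hat g(x) - g(x) = B_n(x) + V_n(x)$ into a deterministic bias $B_n(x) = \EE[\hat g(x)] - g(x)$ and a stochastic fluctuation $V_n(x) = \hat g(x) - \EE[\hat g(x)]$. For the bias, a change of variables $u=(x-y)/h$ followed by Taylor expansion yields $B_n(x) = \int K(u)[g(x-hu) - g(x)]\,du$, and combining Assumption \ref{asm_smooth} with the kernel properties in Assumption \ref{asm_kernel} gives $\sup_{x\in\T_d}|B_n(x)| \lesssim h^2$ uniformly in $x$.

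The substantial work goes into the stochastic term. Write $V_n(x) = n^{-1}\sum_{i=1}^n[K_h(x-X_i) - \EE K_h(x-X_i)]$ and apply Wu's projection decomposition $K_h(x-X_i) - \EE K_h(x-X_i) = \sum_{k\ge 0}\P_{i-k}K_h(x-X_i)$, where $\P_j[\cdot] = \EE[\cdot\mid\F_j] - \EE[\cdot\mid\F_{j-1}]$. For each fixed $k$, $\{\P_{i-k}K_h(x-X_i)\}_{i=1}^n$ is a martingale difference sequence adapted to $\{\F_i\}$. The physical dependence measure $\theta_{k,s}$ from Assumption \ref{asm_dep}(ii) controls the $\L^s$-size of each summand via the identity $\EE[K_h(x-X_i)\mid\F_{j}] = \int K(u) g(x-hu\mid\F_j)\,du$ together with the associated coupling bound. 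I would then truncate the infinite projection sum at a lag $m$, apply the high-dimensional Freedman inequality (Lemma \ref{lemma_freedman}) over a $\delta$-net of $\T_d$ with $\delta$ chosen small (for instance $\delta\asymp h^{d+1}/n$) so that the Lipschitz property of $K_h$ from Assumption \ref{asm_kernel}(ii) controls oscillations between grid points, and union-bound over the net to obtain $\sup_{x\in\T_d}|V_n(x)| \lesssim \sqrt{\log(n)/(h^dn)}$ on the event $\A_n$.

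The hard part is calibrating $m$, $q$, and $\xi$ to produce the dichotomous probability bound stated in the lemma. Freedman's inequality requires uniform control over individual summands (which, via a Markov-type bound on $\max_{i\le n}|K_h(x-X_i)-\EE K_h(x-X_i)|$ using Assumption \ref{asm_moment}, yields a tail of order $n^{-q/2+1}$) and over the quadratic variation (handled by second-moment computations involving $\theta_{k,s}^2$). Competing against these bounds, the truncation error $\sum_{k\ge m}\theta_{k,s} = O(m^{-\xi})$ from Assumption \ref{asm_dep}(ii) forces a trade-off: choosing the optimal $m$ gives the two regimes stated. When $\xi > 1/2 - 1/q$, dependence decays fast enough that the moment tail dominates and $\PP(\A_n^c) = O(n^{-q/2+1})$; when $0 < \xi < 1/2 - 1/q$, the dependence tail dominates and $\PP(\A_n^c) = O(n^{-\xi q})$. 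This balancing of moment and dependence parameters is the technical crux of the lemma.

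Finally, (\ref{lm52}) follows from the algebraic identity
\[
\frac{K_h(x-X_i)}{n g(x)} - w_h(x,X_i) = \frac{K_h(x-X_i)\bigl(\hat g(x) - g(x)\bigr)}{n\, g(x)\, \hat g(x)},
\]
using $|K_h(x-X_i)| \le \|K\|_\infty h^{-d}$, the lower bound $g(x) \ge c_g$, and $\hat g(x) \ge c_g/2$ on $\A_n$ (valid since $\sqrt{\log(n)/(h^dn)} + h^2 = o(1)$ under Assumption \ref{asm_kernel}(iii)), together with (\ref{lm51}) applied to the numerator. For (\ref{lm53}), the first inequality follows from $\max_i |w_h(x,X_i)| \le \|K\|_\infty/(h^d n\, \hat g(x)) \le c_w/(h^d n)$ on $\A_n$, while the second uses $\sum_i w_h(x,X_i) = 1$ combined with $K\ge 0$, so that the weights form a convex combination and their absolute values sum to one.
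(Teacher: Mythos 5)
Your proposal is correct in outline and lands on the same rates and the same dichotomous probability bound, but the way you handle the stochastic term differs from the paper. The paper uses a two-term ``M/R'' decomposition: $M_i(x)=K(\tfrac{x-X_i}{h})-\EE[K(\tfrac{x-X_i}{h})\mid\F_{i-1}]$ is a single martingale difference sequence treated by Freedman's inequality plus a chaining argument over $\T_d$, while the remainder $R_i(x)=\EE[K(\tfrac{x-X_i}{h})\mid\F_{i-1}]-\EE[K(\tfrac{x-X_i}{h})]\approx h^d\,\EE_0[g(x\mid\F_{i-1})]$ is controlled by citing an external concentration inequality (Lemma 5.8 of the Zhang--Wu reference) for $\sup_x|\sum_t g(x\mid\F_{t-1})|$ under Assumption \ref{asm_dep}; it is that cited tail bound, $n^{1\vee(q/2-\xi q)}\log^{q/2}(n)/z^q+e^{-z^2/n}$, that produces the two regimes in $\PP(\A_n)$. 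You instead expand the fluctuation via the full lag-indexed projection decomposition $\sum_{k\ge0}\P_{i-k}K_h(x-X_i)$, truncate at lag $m$, and balance the truncation error $O(m^{-\xi})$ against the moment tail yourself; this is essentially a from-scratch derivation of the result the paper outsources, and it is the standard way such lemmas are proved, so the two routes are equivalent in substance. Your treatment of (\ref{lm52}) via the identity $K_h/(ng)-w_h=K_h(\hat g-g)/(ng\hat g)$ and of (\ref{lm53}) matches the paper, and your observation that $K\ge0$ makes the weights a convex combination gives the second half of (\ref{lm53}) more cleanly than the paper's ``by similar arguments.'' One caveat you share with the paper: the $O(h^2)$ bias requires either a symmetric kernel or second-order smoothness of $g$, neither of which is explicitly stated in Assumptions \ref{asm_kernel} and \ref{asm_smooth}, so your Taylor step is no more (and no less) justified than the original.
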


\begin{proof}[Proof of Lemma \ref{lemma_weight}]
By the definition of $\hat g(x)$ in (\ref{eq_hat_g}), we have
\begin{align}
    \label{eq_lemma3_goal1}
    \sup_{x\in\T_d}|\hat g(x) - g(x)| = & \sup_{x\in\T_d}\frac{1}{h^dn}\Big|\sum_{i=1}^nK\Big(\frac{x-X_i}{h}\Big) - h^dn g(x)\Big|.
\end{align}
For brevity, we define
\begin{equation}
    \label{eq_lemma3_M}
    M_i(x) = K\Big(\frac{x - X_i}{h}\Big) - \EE\Big\{K\Big(\frac{x - X_i}{h}\Big)\mid
    \F_{i-1}\Big\}
\end{equation}
and
\begin{equation}
    \label{eq_lemma3_R}
    R_i(x) =\EE\Big\{K\Big(\frac{x - X_i}{h}\Big\}\mid
    \F_{i-1}\Big)-\EE\Big\{K\Big(\frac{x - X_i}{h}\Big)\Big\},
\end{equation}
which can decompose the difference between $\sum_{i=1}^nK\Big(\frac{x-X_i}{h}\Big)$ and its expectation into two parts, that is
\begin{equation}
    \label{eq_lemma3_MR}
    \EE_0\Big[\sum_{i=1}^nK\Big(\frac{x-X_i}{h}\Big)\Big] = M_n(x)+R_n(x), \quad \text{where } M_n(x)=\sum_{i=1}^n M_i(x) \,\text{ and }\, R_n(x)=\sum_{i=1}^n R_i(x).
\end{equation}
Recall $\EE_0(X)=X-E(X)$. Hereafter, we shall call expression (\ref{eq_lemma3_MR}) the M/R decomposition as introduced in \textcite{ZW:2008}. Since $\{M_i(x)\}_{i=1}^n$ are martingale differences with respect to $\F_i$ and $|M_i(x)|\le \sup_{x\in\T_d}|K(x)|$, it follows that
\begin{equation}
    \label{eq_lemma3_part11}
    \EE M_i^2(x) \leq \int_{\RR^{d}} K^2\Big(\frac{x-y}{h}\Big) g(y)dy = h^d\int_{\II^{d}} K^2(z)g(x-hz)dz \lesssim h^d\big(g(x)+O(h)\big).
\end{equation}
Therefore, by Freedman's inequality in Lemma \ref{lemma_freedman} and the chain argument in Lemma 4 in \textcite{zhao_kernel_2006}, it can be shown that
\begin{equation}
    \label{eq_lemma3_part11_result}
    \PP\Big(\sup_{x\in\T_d}|M_n(x)|\ge z \Big) \lesssim Ne^{-z^2/(z+h^dn)},
\end{equation}
where the constants in $\lesssim$ and $O(\cdot)$ here and the remaining proofs are independent of $n$ and $h$. As a direct consequence, with probability greater than $1-O(n^{-q})$, we have
\begin{equation}
    \label{eq_lemma3_part11_result2}
    \sup_{x\in\T_d}|M_n(x)| \lesssim \sqrt{h^dn\log(n)}.
\end{equation}
For the part $R_n(x)$, note that by Taylor's expansion, we have
\begin{equation}
    \label{eq_lemma3_part12}
    \EE\Big\{K\Big(\frac{x - X_i}{h}\Big) \mid \F_{i-1}\Big\} = h^d\int_{\II^d} K(z)g(x-hz \mid \F_{i-1})dz \lesssim h^d\big(g(x\mid \F_{i-1})+O(h^2)\big).
\end{equation}
Under Assumption \ref{asm_kernel}, it follows from the chain argument in Lemma 4 in \textcite{zhao_kernel_2006} and Lemma 5.8 in \textcite{zhang2015gaussian} that, for $z\gtrsim \sqrt{n\log(n)}$ and $\xi>0$, 
\begin{equation}
    \label{eq_lemma3_part21}
    \PP\Big(\sup_{x\in\T_d}\Big|\sum_{t=1}^n  g(x|\mathcal{F}_{t-1})\Big|\geq z\Big) \lesssim n^{1\vee (q/2-\xi q)}\log^{q/2}(n)/z^q + e^{-z^2/n},
\end{equation}
which yields, for $\xi>0$, with probability greater than $1-O(n^{-(\xi q)\wedge(-q/2+1)})$,  
\begin{equation}
    \label{eq_lemma3_part21_result}
    \sup_{x\in\T_d}|R_n(x)|\lesssim h^d\sqrt {n\log (n)}.
\end{equation}
Combining expressions (\ref{eq_lemma3_part11_result2}) and (\ref{eq_lemma3_part21_result}), with probability greater than $1-O(n^{-(\xi q)\wedge(-q/2+1)})$, we have
\begin{equation}
    \label{eq_lemma3_result1}
    \sup_{x\in\T_d}|M_n(x)+R_n(x)|\lesssim \sqrt{h^dn\log(n)} +h^d\sqrt {n\log (n)}.
\end{equation}
Note that for the expectation part, by Assumption \ref{asm_kernel} and Taylor's expansion, we have the approximation as follows
$$\sum_{i=1}^n\EE\Big\{K\Big(\frac{x - X_i}{h}\Big)\Big\}-h^dng(x) = O(h^{d+2}n).$$
This, along with expression (\ref{eq_lemma3_result1}) gives the inequality (\ref{lm51}). Utilizing the fact that $\sup_{x\in\T_d}|K(x)|<\infty$, we can obtain the first inequality in (\ref{lm53}). 

For inequality (\ref{lm52}), by the definition of the weight in expression (\ref{eq_weight}), we can write
\begin{align}
    \Big| K\Big(\frac{x - X_i}{h}\Big)/\big(h^dn\cdot g(x)\big) - w_h(x,X_i)\Big|
    &=\Bigg|w_h(x,X_i)\sum_{i=1}^n K\Big(\frac{x - X_i}{h}\Big)/\big(h^dng(x)\big) - w_h(x,X_i)\Bigg| \nonumber \\ 
    &\leq |w_h(x,X_i)|\cdot\Big|\sum_{i=1}^n K\Big(\frac{x - X_i}{h}\Big)/\big(h^dng(x)\big) -1\Big|.
\end{align}
This, together with \eqref{lm51} and the first inequality in \eqref{lm53} yields inequality (\ref{lm52}). By the similar arguments for the inequality (\ref{lm51}) and (\ref{lm52}), we can achieve the second inequality in (\ref{lm53}). 
\end{proof}

\subsection*{S.2 Proof of Theorem \ref{thm1_GA}} \label{subsec_thm1proof}

Since the proof of Theorem \ref{thm1_GA} is quite involved, we first sketch the outline of the proof strategies. For brevity, we define
\begin{equation}
    \label{eq_thm1_T}
    T_n=\sup_{x\in\T_d}\sqrt{h^dn}\big|\hat\mu(x)-\mu(x)\big|/\sigma(x).
\end{equation}
To achieve the result in Theorem \ref{thm1_GA}, we aim to derive a limit distribution for $T_n$. Since each dimension of $\T_d$ is a bounded interval in $\RR$ which is complicated to directly deal with, we shall first start with a $\delta$-net and then extend the results to $\T_d$. Specifically, for any $\delta>0$, we define a $\delta$-net 
\begin{equation}
    \label{eq_net}
    \{x_j\}_{j=1}^N \subset\T_d, 
\end{equation}
such that for any $y\in\T_d$, there exists $x\in\{x_j\}_{j=1}^N$ satisfying $|x-y|_{\infty} < \delta$. 
We define $T_{n,\delta}$ as the counterpart of $T_n$ on a $\delta$-net, that is
\begin{equation}
    \label{eq_thm1_T_discrete}
    T_{n,\delta}= \max_{1\le j\le N} \sqrt{h^dn}|\hat\mu(x_j)-\mu(x_j)\big|/\sigma(x_j).
\end{equation}
We shall first show that $T_n\approx T_{n,\delta}$ and then it suffices to work on the asymptotic properties of $T_{n,\delta}$. Further, we define
\begin{equation}
    \label{eq_thm1_I_epsilon_discrete}
    I_{\epsilon}=\max_{1\le j\le N} \sqrt{h^dn}\Big|\sum_{i=1}^nw_h(x_j,X_i)\sigma(X_i)\epsilon_i\Big|/\sigma(x_j).
\end{equation}
We would prove that $T_{n,\delta}$ and $I_{\epsilon}$ are close with high chance. Therefore, to study $T_{n,\delta}$, we only need to investigate $I_{\epsilon}$. 
For the innovations $\epsilon_i$ defined in expression (\ref{eq_epsilon}), we define a truncated version of $\epsilon_i$ as
$$\epsilon_{i,m} = \sum_{k=0}^{m-1} a_k \eta_{i-k},$$ 
for some integer $m>0$. We consider the $m$-dependent approximation $I_{\epsilon,m}$ of $I_{\epsilon}$, where $I_{\epsilon,m}$ is $I_{\epsilon}$ with $\epsilon_i$ therein replaced by $\epsilon_{i,m}$.
Then, we shall expect that for large $m$, $I_{\epsilon} \approx I_{\epsilon,m}$. Set $I_{z,m}$ to be $I_{\epsilon,m}$ with $\eta_k$ therein replaced by $z_k$, where $z_k\in\RR$, $k\in\ZZ$, are i.i.d. Gaussian random variables with mean zero and unit variance. Note that when the covariates $X_1,\ldots,X_n$ are observed, $I_{\epsilon,m}$ can be written into the maximum of a sequence of independent random variables. Hence, by the Gaussian approximation result in Proposition 2.1 by \textcite{chernozhukov_central_2017}, we could approximate the distribution of $I_{\epsilon,m}$ by the one of $I_{z,m}$. Then, we would prove that the distributions of $I_{z,m}$ and $\max_{1\le j\le N}|\Z_{t_j}|$ are close. Finally, we complete the proof by showing the continuity of the maximum of a non-centered Gaussian distribution and extending all the results to the region $\T_d$.

\begin{proof}[Proof of Theorem \ref{thm1_GA}]
We now provide the rigorous proof of Theorem \ref{thm1_GA}. First, recall that $\Z_t$, $t\in\T_d$, is a centered Gaussian random field with conditional covariance matrix $Q = (Q_{t,s})_{t,s\in\T_d}$, where $Q_{t,s}$ is defined in (\ref{eq_cov_Z}). Let $\Z_{t_j}$, $1\le j\le N$, be the discretized version of $\Z_t$ on a $\delta$-net as defined in Lemma \ref{lemma_deltanet}, with mean zero and the conditional covariance matrix $Q^{(L)} = (Q_{j_1,j_2}^{(L)})_{1\le j_1,j_2\le N}$, where $Q_{t,s}^{(L)}$ is defined in (\ref{eq_cov_Z_truncate}). Also, recall the definition of the test statistic $T_n$ in expression (\ref{eq_thm1_T}). 
Let $m=n/2$. Then, for any $\alpha_1>0$, we have
\begin{align}
    \PP\big(T_n\le u\big) & \le  \PP\big( I_{\epsilon,m}\le u+\alpha_1\big) + \PP\big(|T_n-I_{\epsilon,m}|\ge \alpha_1\big) \nonumber \\
    & = \PP\big(\max_{1\le j\le N}|\Z_{t_j}|\le u+\alpha_1\big) + \Big[\PP\big(I_{\epsilon,m}\le u+\alpha_1\big) - \PP\big(\max_{1\le j\le N}|\Z_{t_j}^{\dagger}|\le u+\alpha_1\big)\Big] \nonumber \\
    & \quad  + \Big[\PP\big(\max_{1\le j\le N}|\Z_{t_j}^{\dagger}| \le u\big) - \PP\big(\max_{1\le j\le N}|\Z_{t_j}| \le u\big)\Big] + \PP\big(|T_n-I_{\epsilon,m}|\ge \alpha_1\big).
\end{align}
Let $\Z_{t_j}^\dagger$ be $\Z_{t_j}$ with conditional covariance matrix replaced by $Q = (Q_{j_1,j_2})_{1\le j_1,j_2\le N}$ (i.e., the non-truncated, but discretized version), where $Q_{t,s}$ is defined in (\ref{eq_cov_Z}). Then, we further note that, for any $\alpha_2>0$, 
\begin{align}
    \PP\big(\max_{1\le j\le N}|\Z_{t_j}|\le u+\alpha_1\big) 
    & \le \PP\big(\max_{1\le j\le N}|\Z_{t_j}|\le u-\alpha_2\big) + \PP\Big(\big|\max_{1\le j\le N}|\Z_{t_j}|-u\big|\le \alpha_1+\alpha_2\Big) \nonumber \\
    & \le \PP\big(\sup_{t\in\T_d}|\Z_t|\le u\big) +\PP\Big(\big|\sup_{t\in\T_d}|\Z_t|-\max_{1\le j\le N}|\Z_{t_j}|\big|\ge \alpha_2\Big) \nonumber \\
    & \quad + \PP\Big(\big|\max_{1\le j\le N}|\Z_{t_j}|-u\big|\le \alpha_1+\alpha_2\Big).
\end{align}
Hence, by combining the two inequalities above, we obtain, for any $\alpha_3>0$,
\begin{align}
    \label{eq_thm1_goal2}
    & \quad \sup_{u\in\RR}\Big[\PP\big(T_n\le u\big) - \PP\big(\sup_{t\in\T_d}|\Z_t|\le u\big) \Big] \nonumber \\
    & \le \PP\big(|T_n-I_{\epsilon,m}|\ge \alpha_1\big) + \sup_{u\in\RR}\Big|\PP\big(I_{\epsilon,m}\le u\big) - \PP\big(I_{z,m}\le u\big)\Big| + \PP\Big(\big|I_{z,m}-\max_{1\le j\le N}|\Z_{t_j}^{\dagger}|\big|\ge \alpha_3\Big) \nonumber \\
    & \quad +\PP\Big(\big|\sup_{t\in\T_d}|\Z_t|-\max_{1\le j\le N}|\Z_{t_j}^{\dagger}|\big|\ge \alpha_2\Big)  + \sup\Big|\PP\big(\max_{1\le j\le N}|\Z_{t_j}^{\dagger}| \le u\big) - \PP\big(\max_{1\le j\le N}|\Z_{t_j}| \le u\big)\Big| \nonumber \\
    & \quad + \PP\Big(\big|\max_{1\le j\le N}|\Z_{t_j}|-u\big|\le \sum_{s=1}^3\alpha_s\Big) =: \sum_{k=1}^6\III_k.
\end{align}
We shall investigate the parts $\III_1$-$\III_6$ separately. First, we study the $\III_1$ part. Note that 
\begin{equation}
    \label{eq_thm1_part1_3parts}
    |T_n-I_{\epsilon,m}| \le |T_n-T_{n,\delta}| + |T_{n,\delta}-I_{\epsilon}| + |I_{\epsilon}-I_{\epsilon,m}|.
\end{equation}
Let $\alpha_{11}=ch$, where $c>0$ is a constant. Then, as a direct consequence of Lemma \ref{lemma_deltanet}, we achieve
\begin{equation}
    \label{eq_thm1_cont2}
    \PP\big(|T_n-T_{n,\delta}|\ge \alpha_{11} \big)\lesssim 1/n + \PP(\A_n^{\text{c}}),
\end{equation}
where $\PP(\A_n^{\text{c}})=O\big(n^{(-q/2+1)\vee(-\xi q)}\big)$ as indicated by Lemma \ref{lemma_weight}. Since we consider $\delta=h^{3d/2+1}n^{-(1/2+1/q)}$, we have $N\lesssim 1/\delta=O(h^{-3d/2-1}n^{1/2+1/q})$. Let $\alpha_{12}=c_1'h\sqrt{\log(n)}+c_2'h^2\sqrt{h^dn}$, where $c_1',c_2'>0$ are some large constants. Then, by Lemma \ref{lemma_TN_Iepsilon}, we obtain
\begin{equation}
    \label{eq_thm1_approx_mu}
    \PP\big(|T_{n,\delta}-I_{\epsilon}| \ge \alpha_{12}\big) \lesssim 1/n.
\end{equation}
Now we define $\alpha_{13}=c''n^{-\zeta}\log^{1/2}(N)$, where the constant $c''>0$. It follows from Lemma \ref{lemma_m} that
\begin{equation}
    \label{eq_lemma_m_conclusion}
    \PP\big(|I_{\epsilon}-I_{\epsilon,m}|\ge \alpha_{13}\big)\lesssim (h^dn)^{-q/2}n\log^q(n).
\end{equation}
Hence, for $\alpha_1=\alpha_{11}+\alpha_{12}+\alpha_{13}$, by inserting the results in expressions (\ref{eq_thm1_cont2}), (\ref{eq_thm1_approx_mu}) and (\ref{eq_lemma_m_conclusion}) into expression (\ref{eq_thm1_part1_3parts}), we have
\begin{equation}
    \label{eq_thm1_part1}
    \III_1 = O\big\{ (h^dn)^{-q/2}n\log^q(n)+1/n\big\}.
\end{equation}
For the $\III_2$ part, as a direct consequence of Lemma \ref{lemma_m_GA}, we obtain
\begin{equation}
    \label{eq_thm1_part21}
    \III_2\lesssim (h^dn)^{-1/6} \log^{7/6} (Nn) + (n^{2/q}/(h^dn))^{1/3} \log(Nn).
\end{equation}
Concerning the part $\III_3$, we let $\alpha_3=c_3n^{-\zeta}\log^{1/2}(N)$ where the constant $c_3>0$. It follows from a similar argument
in expression (\ref{eq_lemma_m_conclusion}) that $\III_3=O\big\{(h^dn)^{-q/2}n\log^q(n)\big\}$. Similar to expression (\ref{eq_thm1_cont2}), we let $\alpha_2=c_2h^{d/2}$ with some constant $c_2>0$. Then, we could achieve $\III_4=O(1/n)$. 

Finally, for the parts $\III_5$ and $\III_6$, recall the centered Gaussian process $\Z_{t_j}$ and $\Z_{t_j}^\dagger$ defined in the first paragraph of this proof. We shall first show that the distributions of $\Z_{t_j}^\dagger$ and $\Z_{t_j}$ are close and bound $\III_5$. To this end, we note that
\begin{align}
    \label{eq_thm1_longrun}
    |Q_{j_1,j_2}-Q_{j_1,j_2}^{(L)}| & = h^dn\Big|\sum_{|k|\ge L}\sum_{i=1\vee(1-k)}^{n\wedge(n-k)}c_{j_1,j_2,i,k}w_h(x_{j_1},X_i)w_h(x_{j_2},X_{i+k})\gamma(k)\Big| \nonumber \\
    & \lesssim h^dn\max_{k}\big|w_h(x_{j_2},X_k)\sigma(X_k)/\sigma(x_{j_2})\big|\cdot\Big|\sum_{k\ge L}\gamma(k)\sum_{i=1}^{n-k}w_h(x_{j_1},X_i)\sigma(X_i)/\sigma(x_{j_1})\Big| \nonumber \\
    & \lesssim L^{-\zeta}.
\end{align}
Then, it follows from Lemma \ref{lemma_comparison} that 
$$\III_5 = \sup_{u\in\RR}\Big|\PP\big(\max_{1\le j\le N}|\Z_{t_j}^\dagger|\le u\big) - \PP\big(\max_{1\le j\le N} |\Z_{t_j}|\le u\big)\Big| \lesssim L^{-\zeta/3}\log^{2/3}(n).$$

Secondly, we show a continuity result for $\Z_{t_j}$ and bound $\III_6$. We refer to the definition that $M_{ji}=w_h(x_j,X_i)-\EE(w_h(x_j,X_i)\mid\F_{i-1})$ in expression (\ref{eq_lm5_w_decompose}), and we shall apply a similar decomposition technique here. In particular, we have
\begin{align}
    \label{eq_thm1_part5}
    & \sum_{i_1=1}^nw_h(x_j,X_{i_1})\sum_{1\le i_2<i_1}^{n}w_h(x_j,X_{i_2})\gamma(i_1-i_2) \nonumber \\
    = & \sum_{i_1=1}^n\big[M_{ji_1}+O(1/n)\big]\sum_{1\le i_2<i_1}^{n}\big[M_{ji_2}+O(1/n)\big]\gamma(i_1-i_2) \nonumber \\
    = & \sum_{i_1=1}^n\sum_{1\le i_2<i_1}^{n}M_{ji_1}M_{ji_2}\gamma(i_1-i_2) + O(1/n),
\end{align}
which along with expression (\ref{eq_cov_Z}) gives
\begin{align}
    \label{eq_thm1_part5_Q}
    Q_{j,j} & =   h^dn\sum_{i_1=1}^n\Big(c_{j,j,i_1,i_1}\EE_0(M_{ji_1}^2)\gamma(0)+2\sum_{1\le i_2<i_1}c_{j,j,i_1,i_2}M_{ji_1}M_{ji_2}\gamma(i_1-i_2)\Big) \nonumber \\
    & \quad + h^dn\sum_{i_1=1}^nc_{j,j,i_1,i_1}\EE M_{ji_1}^2\gamma(0) + O(h^d) \nonumber \\ 
    & =: h^dn\sum_{i_1=1}^n\tilde D_{ji_1}+ h^dn\sum_{i_1=1}^nc_{j,j,i_1,i_1}\EE M_{ji_1}^2\gamma(0) + O(h^d).
\end{align}
Since the long-run covariance of $\{\epsilon_i\}$ is bounded, by Lemma \ref{lemma_weight} and Assumption \ref{asm_sigma}, we shall achieve that $h^dn\sum_{i_1=1}^n\EE c_{j,j,i_1,i_1}M_{ji_1}^2\gamma(0) \asymp c$, for some constant $c>0$.
Note that $\{\tilde D_{ji_1}\}$ are martingale differences with respect to the filtration $F_{i_1-1}$, which along with Freedman's inequality in Lemma \ref{lemma_freedman} yields
\begin{align}
    \label{eq_thm1_part5_freedman}
    \PP\Big(h^dn\sum_{i_1=1}^n\tilde D_{ji_1} \ge z\Big) \le e^{-z^2/(zv+u)},
\end{align}
where $u$ is the upper bound for $h^dn|\tilde D_{ji_1}|= O\{1/(h^dn)\}$ and $v$ is the upper bound for $\sum_{i_1=1}^{n}\EE[(h^dn\tilde D_{ji_1})^2 \mid \F_{i_1-1}]=O\{1/(h^dn)\}$. Therefore, we have $\max_{1\le j\le N}h^dn\sum_{i_1=1}^n\tilde D_{ji_1} = O\{\sqrt{\log(n)/(h^dn)}\}$ with probability greater than $1-n^{-q}$. This, along with expression (\ref{eq_thm1_part5_Q}) gives
\begin{align}
    \label{eq_thm1_part5_result}
    \min_{1\le j\le N} Q_{j,j} \gtrsim c,
\end{align}
with probability greater than $1-n^{-q}$. Hence, by applying the anti-concentration inequality in Lemma \ref{lemma_nazarov}, we achieve
\begin{align}
    \label{eq_thm1_part4}
    \III_6 & \lesssim  (\alpha_1+\alpha_2+\alpha_3)\sqrt{\log(N)}+n^{-q}  \nonumber \\
    & \lesssim \Big(h\sqrt{\log(n)}+h^2\sqrt{h^dn}+n^{-\zeta}\sqrt{\log(n)} \Big)\sqrt{\log(n)}+n^{-q}.
\end{align}
By combining the results from $\III_1$--$\III_6$ and a similar argument for the other side of the inequality (\ref{eq_thm1_goal2}), we complete the proof.
\end{proof}

\begin{lemma}[$\delta$-net approximation]
    \label{lemma_deltanet}
    For a compact region $\T_d\subset\RR^d$ and a $\delta$-net $\{x_j\}_{j=1}^N \subset\T_d$ such that for any $y\in\T_d$, there exists $x\in\{x_j\}_{j=1}^N$ satisfying $|x-y|_{\infty} < \delta$, we have, for some constant $c>0$,
    
    $$\PP\big(|T_n-T_{n,\delta}| \ge ch \big)= O(1/n)+ \PP(\A_n^{\text{c}}),$$
    where $\PP(\A_n^{\text{c}})=O\big(n^{(-q/2+1)\vee(-\xi q)}\big)$ as indicated by Lemma \ref{lemma_weight}.
\end{lemma}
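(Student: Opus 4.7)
The plan is to control $|T_n - T_{n,\delta}|$ by a uniform-modulus-of-continuity argument for the random field $f(x) := \sqrt{h^dn}\,|\hat\mu(x) - \mu(x)|/\sigma(x)$. For each $y \in \T_d$, pick a nearest net point $x(y)$ with $|y-x(y)|_\infty < \delta$; since $\bigl||a|-|b|\bigr|\le|a-b|$,
\[
|T_n - T_{n,\delta}| \le \sup_{y \in \T_d}\bigl|f(y)-f(x(y))\bigr|.
\]
The triangle inequality splits the right-hand side into (a) $\sqrt{h^dn}\,|\hat\mu(y)-\hat\mu(x)|/\sigma(y)$, (b) $\sqrt{h^dn}\,|\mu(y)-\mu(x)|/\sigma(y)$, and (c) $\sqrt{h^dn}\,|\hat\mu(x)-\mu(x)|\cdot|1/\sigma(y)-1/\sigma(x)|$.

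Terms (b) and (c) are routine. By Assumption \ref{asm_sigma}, $\sigma \ge c_\sigma > 0$ and both $\mu$ and $\sigma$ are Lipschitz; combined with the standard uniform rate $\sup_x|\hat\mu(x)-\mu(x)| = O_\PP(h + \sqrt{\log n/(h^dn)})$ that follows from the Nadaraya--Watson bias/variance decomposition and the density bounds of Lemma \ref{lemma_weight} on the event $\A_n$, (b) and (c) together contribute at most $C\,\delta\sqrt{h^dn}\,\sqrt{\log n}$, which is negligible compared to $h$ for the choice of $\delta$ used in the proof of Theorem \ref{thm1_GA}.

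The delicate summand is (a). I would expand
\[
\hat\mu(y)-\hat\mu(x) = \sum_{i=1}^n \bigl[w_h(y,X_i)-w_h(x,X_i)\bigr]\bigl(Y_i - Z_i^\top\bbeta\bigr),
\]
and use that on the event $\A_n$ of Lemma \ref{lemma_weight} the density estimate $\hat g$ is uniformly bounded away from $0$. The $C^1$-regularity of $K$ (Assumption \ref{asm_kernel}) then yields the Lipschitz-type bound $|w_h(y,X_i) - w_h(x,X_i)| \lesssim |y-x|_\infty/(h^{d+1}n)$, with only $O(h^d n)$ non-zero contributions since $K$ is supported on $[-1,1]^d$. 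A Markov tail bound via Assumption \ref{asm_moment} controls $\max_{1\le i\le n}|Y_i - Z_i^\top\bbeta| = O_\PP(n^{2/q})$ at the cost of an exceptional probability $O(1/n)$, and substituting $\delta = h^{3d/2+1}n^{-(1/2+1/q)}$ (as used just after (\ref{eq_thm1_cont2})) would then give $\sqrt{h^dn}\,|\hat\mu(y)-\hat\mu(x)| \lesssim h$ up to a slowly growing factor.

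The main obstacle is this last step: the crude deterministic Lipschitz treatment of (a) through $\max_i|Y_i-Z_i^\top\bbeta|$ only narrowly meets the $ch$ budget once $\delta$ is shrunk aggressively, and a cleaner approach---one I would pursue to tighten the exponents---recasts $\sum_i[w_h(y,X_i)-w_h(x,X_i)]\sigma(X_i)\epsilon_i$ as a martingale-style sum in $i$ (after a short-memory truncation of the $\epsilon_i$) and invokes the Freedman inequality of Lemma \ref{lemma_freedman} together with a chaining argument over the net of cardinality $N\lesssim \delta^{-d}$, replacing the polynomial $n^{2/q}$ factor by $\sqrt{\log N}$. Collecting the exceptional probabilities from $\A_n^c$, the moment truncation of $\max_i|\epsilon_i|$, and the Freedman/chaining step then produces the advertised tail bound $O(1/n)+\PP(\A_n^c)$.
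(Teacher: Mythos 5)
Your proposal follows essentially the same route as the paper: split $|T_n-T_{n,\delta}|$ into the $\hat\mu$-increment and the $\mu$-increment, bound the weight differences via the $C^1$ bound on $K$ from Assumption \ref{asm_kernel}, control $\max_i|\epsilon_i|$ by a Markov/union bound at exceptional probability $O(1/n)$, and let the aggressive choice of $\delta$ absorb the $h^{-2d}n^{2/q}$ factors. The ``main obstacle'' you flag is not one in the paper's argument --- the crude deterministic bound closes exactly, since $\sqrt{h^dn}\,\delta h^{-2d}n^{2/q}=h$ for the stated $\delta$, so the Freedman-plus-chaining refinement you sketch is unnecessary (though valid).
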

\begin{proof}[Proof of Lemma \ref{lemma_deltanet}]
By the definition of $\delta$-net and Assumption \ref{asm_sigma}, we have
\begin{align}
    \label{eq_thm1_part11_twoparts}
    |T_n-T_{n,\delta}| & = \sqrt{h^dn}\Big|\sup_{x\in\T_d}\big|\hat\mu(x)-\mu(x)\big|/\sigma(x) - \max_{1\le j\le N}\big|\hat\mu(x_j)-\mu(x_j)\big|/\sigma(x_j)\Big| \nonumber \\
    & \lesssim  \sqrt{h^dn}\max_{1\le j\le N}\sup_{|x-x_j|_{\infty}<\delta}\Big(\big|\hat\mu(x)-\hat\mu(x_j)\big|+\big|\mu(x)-\mu(x_j)\big|\Big).
\end{align}
Since the smooth function $\mu(\cdot)$ is Lipschitz continuous, it follows that
\begin{equation}
    \label{eq_thm1_part11_mu}
    \max_{1\le j\le N}\sup_{|x-x_j|_{\infty}<\delta}\big|\mu(x)-\mu(x_j)\big|\lesssim \delta.
\end{equation}
Also note that on the set $\A_n$ defined in Lemma \ref{lemma_weight}, $\hat\mu(x)$ can be written into
\begin{align}
    \label{eq_thm1_muhat}
    \hat\mu(x)\One_{\A_n}= & \sum_{i=1}^nw_h(x,X_i)\One_{\A_n}\big(Y_i-Z_i^{\top}\bbeta\big) \nonumber \\
    = & \sum_{i=1}^nw_h(x,X_i)\One_{\A_n}\mu(X_i) +  \sum_{i=1}^nw_h(x,X_i)\One_{\A_n}\sigma(X_i)\epsilon_i.
\end{align}
Therefore, we can decompose the first part in expression (\ref{eq_thm1_part11_twoparts}) into
\begin{align}
    \label{eq_thm1_part11_muhat}
    & \quad \max_{1\le j\le N}\sup_{|x-x_j|_{\infty}<\delta}\big|\hat\mu(x)-\hat\mu(x_j)\big|\One_{\A_n} \nonumber \\
    & = \max_{1\le j\le N}\sup_{|x-x_j|_{\infty}<\delta} \Big|\sum_{i=1}^n\big(w_h(x,X_i)-w_h(x_j,X_i)\big)\One_{\A_n}\mu(X_i) \nonumber \\
    & \quad + \sum_{i=1}^n\big(w_h(x,X_i)-w_h(x_j,X_i)\big)\One_{\A_n}\sigma(X_i)\epsilon_i\Big| \nonumber \\
    & \lesssim  \delta h^{-2d} \cdot \max_{1\le i\le n}|\epsilon_i|,
\end{align}
where the last inequality holds since $\mu(\cdot)$ is Lipschitz continuous, and Assumptions \ref{asm_sigma} and \ref{asm_kernel}, as well as expression (\ref{lm53}) in Lemma \ref{lemma_weight} yield
\begin{align}
    & \quad \max_{1\le j\le N}\sup_{|x-x_j|_{\infty}<\delta} \Big|\sum_{i=1}^n\big(w_h(x,X_i)-w_h(x_j,X_i)\big)\One_{\A_n}\Big| \nonumber \\
    & \lesssim  \max_{1\le j\le N}\sup_{|x-x_j|_{\infty}<\delta}\frac{1}{h^dn}\sum_{i=1}^n\max_{1\le k\le d}\Big|\partial K\Big(\frac{x_j-X_i}{h}\Big)/\partial x_{jk}\Big|\cdot|x-x_j|_{\infty} \nonumber \\
    & \lesssim \delta h^{-2d}.
\end{align}
Here, $x_{jk}$ is the $k$-th coordinate of $x_j$, $1\le k\le d$. Note that for $q\ge4$, $\EE|\epsilon_i|^q<\infty$ by Assumption \ref{asm_moment}. Then, since $\delta=h^{3d/2+1}n^{-(1/2+2/q)}$ and $\PP\big(\max_{1\le i\le n}|\epsilon_i|>n^{2/q}\big)\lesssim 1/n$, we achieve the desired result by inserting expressions (\ref{eq_thm1_part11_mu}) and (\ref{eq_thm1_part11_muhat}) into expression (\ref{eq_thm1_part11_twoparts}).

\end{proof}

\begin{lemma}[Tail probability of $T_{n,\delta}$]
    \label{lemma_TN_Iepsilon}
    For two statistics $T_{n,\delta}$ and $I_{\epsilon}$ as defined in expressions (\ref{eq_thm1_T_discrete}) and (\ref{eq_thm1_I_epsilon_discrete}), we have
    $$\PP\big(|T_{n,\delta}-I_{\epsilon}| \ge z+c'h^{d/2+2}n^{1/2}\big)\lesssim 2N\exp\Bigg\{\frac{-z^2}{zh/\sqrt{h^dn}+h^2}\Bigg\} + \PP(\A_n^{\text{c}}),$$
    where $c'>0$ is some constant, $h>0$ is the bandwidth parameter and $\PP(\A_n^{\text{c}})=O\big(n^{(-q/2+1)\vee(-\xi q)}\big)$ as indicated by Lemma \ref{lemma_weight}.
\end{lemma}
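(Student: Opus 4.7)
The plan is to isolate the bias of $\hat\mu$ as the sole source of discrepancy between $T_{n,\delta}$ and $I_\epsilon$, and then control that bias with a Bernstein-type martingale deviation estimate whose almost-sure bound is $h/\sqrt{h^dn}$ and whose variance proxy is $h^2$.

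First I work on the good event $\A_n$ of Lemma \ref{lemma_weight} (off of which we pay the $\PP(\A_n^{\mathrm c})$ term in the statement). Using $\sum_{i=1}^n w_h(x_j,X_i)=1$ together with the representation (\ref{eq_thm1_muhat}), I decompose
$$
\hat\mu(x_j)-\mu(x_j)=\underbrace{\sum_{i=1}^n w_h(x_j,X_i)\sigma(X_i)\epsilon_i}_{V_j}+\underbrace{\sum_{i=1}^n w_h(x_j,X_i)\bigl(\mu(X_i)-\mu(x_j)\bigr)}_{B_j}.
$$
Since $I_\epsilon$ is the scaled maximum of $V_j/\sigma(x_j)$, the reverse-triangle inequality inside the max gives $|T_{n,\delta}-I_\epsilon|\le \max_{1\le j\le N}\sqrt{h^dn}\,|B_j|/\sigma(x_j)$, and Assumption \ref{asm_sigma}(ii) lets me drop the $\sigma(x_j)$. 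The task reduces to a tail bound for $\max_j \sqrt{h^dn}\,|B_j|$.

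Next I split $B_j=\EE B_j+(B_j-\EE B_j)$. For the deterministic piece I approximate $w_h$ by $K_h/(ng(x_j))$ on $\A_n$ via (\ref{lm52}), make the change of variables $u=(x_j-y)/h$, and Taylor-expand $\mu$ and $g$ at $x_j$; the kernel/density regularity of Assumptions \ref{asm_kernel} and \ref{asm_smooth}, combined with the normalisation $\sum_i w_h=1$ which cancels the leading linear-in-$h$ contribution, yields $|\EE B_j|\lesssim h^2$ uniformly in $j$. After scaling this is exactly the deterministic offset $c'h^{d/2+2}n^{1/2}$ in the lemma. For the centered part I invoke the M/R decomposition from Lemma \ref{lemma_weight}, writing $B_j-\EE B_j=\sum_i M_{j,i}+\sum_i R_{j,i}$ with $(M_{j,i})_i$ an $\F_i$-martingale difference sequence and $R_{j,i}$ a predictable shift. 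On $\A_n$, $|w_h(x_j,X_i)|\lesssim (h^dn)^{-1}$ by (\ref{lm53}); the compact kernel support plus Lipschitz continuity of $\mu$ gives $|\mu(X_i)-\mu(x_j)|\lesssim h$ whenever the summand is nonzero; hence $\sqrt{h^dn}\,|M_{j,i}|\lesssim h/\sqrt{h^dn}$, and a standard second-moment computation produces $\sum_i\EE[M_{j,i}^2\mid\F_{i-1}]\lesssim h^2/(h^dn)$, i.e.\ $h^2$ after scaling.

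Applying Freedman's inequality (Lemma \ref{lemma_freedman}) coordinate-by-coordinate and taking a union bound over the $N$ grid points yields
$$
\PP\!\Bigl(\max_{1\le j\le N}\sqrt{h^dn}\,|B_j-\EE B_j|\ge z\Bigr)\lesssim 2N\exp\!\left\{-\frac{z^2}{zh/\sqrt{h^dn}+h^2}\right\},
$$
while a parallel estimate for $R_{j,i}$ (using the conditional density moment bound in Assumption \ref{asm_dep}) is absorbed into the same exponential tail. Combining with the deterministic $c'h^{d/2+2}n^{1/2}$ offset and the off-$\A_n$ probability delivers the stated inequality. The main obstacle is step two: under only Lipschitz $\mu$ the naive bias is $O(h)$, and squeezing out the extra factor of $h$ requires the Nadaraya–Watson normalisation to cancel the first-order term in the Taylor expansion, which in turn forces the uniform approximation $w_h\approx K_h/(ng)$ on $\A_n$ to be controlled carefully — this is precisely where Lemma \ref{lemma_weight} is used in a nontrivial way.
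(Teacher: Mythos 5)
Your proposal follows essentially the same route as the paper: isolate the bias term $B_j=\sum_i w_h(x_j,X_i)(\mu(X_i)-\mu(x_j))$, extract an $O(h^2)$ systematic part that becomes the offset $c'h^{d/2+2}n^{1/2}$ after scaling by $\sqrt{h^dn}$, and apply Freedman's inequality with almost-sure bound $h/\sqrt{h^dn}$ and variance proxy $h^2$ to the fluctuation. The one place you deviate is the centering: the paper centers each summand at its \emph{conditional} expectation $\EE[\gamma_i(x_j)\mid\F_{i-1}]$, so the remainder is exactly a martingale and the conditional-expectation sum is bounded pathwise by $O(h^2)$ via the Taylor expansion in (\ref{eq_thm1_mu_E}); you center at the unconditional expectation and are then left with the predictable shift $\sum_i R_{j,i}=\sum_i(\EE[\gamma_i\mid\F_{i-1}]-\EE\gamma_i)$, which you claim is ``absorbed into the same exponential tail.'' That claim is imprecise: on $\A_n$ each $R_{j,i}$ is only $O(h^2/n)$ pathwise, so after scaling the sum is of order $h^{d/2+2}n^{1/2}$ --- the same size as the deterministic offset, not smaller than the Freedman tail --- and it should be folded into the $c'h^{d/2+2}n^{1/2}$ term (or controlled by a separate concentration argument as in (\ref{eq_lemma3_part21})) rather than into the exponential. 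With that bookkeeping corrected the argument closes and matches the paper's.
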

\begin{proof}[Proof of Lemma \ref{lemma_TN_Iepsilon}]
We denote the first part in expression (\ref{eq_thm1_muhat}) on the $\delta$-net by
\begin{equation}
    \label{eq_thm1_mu_tilde}
    \tilde \mu(x_j)=\sum_{i=1}^nw_h(x_j,X_i)\mu(X_i).
\end{equation}
Then, by the definition of $I_{\epsilon}$ in expression (\ref{eq_thm1_I_epsilon_discrete}) and the fact that $\sum_{i=1}^nw_h(x_j,X_i)=1$, we have
\begin{align}
    \label{eq_thm1_T_I}
    |T_{n,\delta}-I_{\epsilon}| 
    & \le \sqrt{h^dn}\max_{1\le j\le N}\big|\mu(x_j)-\tilde\mu(x_j)\big|/\sigma(x_j) \nonumber \\
    & = \sqrt{h^dn}\max_{1\le j\le N}\Big|\sum_{i=1}^nw_h(x_j,X_i)\big(\mu(X_i)-\mu(x_j)\big)\Big|/\sigma(x_j).
\end{align}
On the set $\A_n$ defined in Lemma \ref{lemma_weight}, we let $\gamma_i(x_j)=w_h(x_j,X_i)\One_{\A_n}\big(\mu(X_i)-\mu(x_j)\big)/\sigma(x_j)$. We apply the same decomposition technique in expression (\ref{eq_lm5_twoparts}) in Lemma \ref{lemma_m_GA} to expression (\ref{eq_thm1_T_I}), that is
\begin{align}
    \label{eq_thm1_mu_decompose}
    |T_{n,\delta}-I_{\epsilon}|\One_{\A_n} \le \sqrt{h^dn}\max_{1\le j\le N}\Big|\sum_{i=1}^n\big\{\gamma_i(x_j)-\EE[\gamma_i(x_j)\mid\F_{i-1}]\big\}\Big| + \sqrt{h^dn}\max_{1\le j\le N}\Big|\sum_{i=1}^n\EE[\gamma_i(x_j)\mid\F_{i-1}]\Big|.
\end{align}
For the expectation part in expression (\ref{eq_thm1_mu_decompose}), by Assumptions \ref{asm_sigma}, \ref{asm_kernel} and Lemma \ref{lemma_weight}, we have
\begin{align}
    \label{eq_thm1_mu_E}
    \EE[\gamma_i(x_j)\mid\F_{i-1}] & \lesssim  \frac{1}{h^dn}\int_{\T_d}K\Big(\frac{x_j-y}{h}\Big)\big[\mu(y)-\mu(x_j)\big]g(y\mid\F_{i-1})dy \nonumber \\
    & = \frac{1}{n}\int_{\T_d}K(z)\big[\mu(x_j-hz)-\mu(x_j)\big]g(x_j-hz\mid\F_{i-1})dz \nonumber \\
    & = \frac{1}{n}\int_{\T_d}K(z)\big[-hz\mu'(x_j) +h^2z^2\mu''(x_j)\big]\big[g(x_j\mid\F_{i-1}) - hzg'(x_j\mid\F_{i-1})\big]dz \nonumber \\
    & = O(h^2).
\end{align}
Since the sequence $\{\gamma_i(x_j)-\EE[\gamma_i(x_j)\mid\F_{i-1}]\}_i$ are martingale differences with respect to the filtration $\F_i$, by Freedman's inequality in Lemma \ref{lemma_freedman}, for any $z>0$,
\begin{equation}
    \label{eq_thm1_mu_E_approx}
    \PP\Big(\max_{1\le j\le N}\sqrt{h^dn}\Big|\sum_{i=1}^n\big\{\gamma_i(x_j)-\EE[\gamma_i(x_j)\mid\F_{i-1}]\big\}\Big| \ge z\Big) 
    \le  2Ne^{-z^2/(2zu+2v)},
\end{equation}
where $v$ is the upper bound for
\begin{align}
    \label{eq_thm1_mu_var}
    \sigma_{\gamma}^2 = & \max_{1\le j\le N}h^dn\sum_{i=1}^n\EE\big[\big(\gamma_i(x_j)-\EE[\gamma_i(x_j)\mid\F_{i-1}]\big)^2\mid \F_n\big] \nonumber \\
    = & \max_{1\le j\le N}h^dn\sum_{i=1}^n\EE\big[w_h^2(x_j,X_i)\One_{\A_n}\big(\mu(X_i)-\mu(x_j)\big)^2\mid \F_n\big](1+o(1)) \lesssim h^2,
\end{align}
and $u$ is the upper bound for
\begin{equation}
    \label{eq_thm1_mu_bd}
    \max_{1\le j\le N}\sqrt{h^dn}\big|\gamma_i(x_j)-\EE[\gamma_i(x_j)\mid\F_{i-1}]\big| \lesssim h/\sqrt{h^dn}.
\end{equation}
Hence, by expressions (\ref{eq_thm1_T_I}) and (\ref{eq_thm1_mu_E}) and Assumption \ref{asm_sigma}, we have
\begin{equation}
    \label{eq_thm1_mu_E_approx_result}
    \PP\Big(\max_{1\le j\le N}\sqrt{h^dn}\Big|\sum_{i=1}^n\big\{\gamma_i(x_j)-\EE[\gamma_i(x_j)\mid\F_{i-1}]\big\}\Big| \ge z\Big) \le 2N\exp\Bigg\{\frac{-z^2}{zh/\sqrt{h^dn}+h^2}\Bigg\}.
\end{equation}
This, together with expressions (\ref{eq_thm1_mu_decompose}) and (\ref{eq_thm1_mu_E}) completes the proof.
\end{proof}

\begin{lemma}[$m$-dependent approximation]
    \label{lemma_m}
    Under the conditions in Theorem \ref{thm1_GA}, for some $0< m\le n$, we have 
    $$\PP\big(|I_{\epsilon}-I_{\epsilon,m}|\ge z\big)\le n\log^q(n)z^{-q}(h^dn)^{-q/2}m^{-q\zeta}+ N e^{-z^2m^{2\zeta}} + \PP(\A_n^{\text{c}}),$$
    where $\PP(\A_n^{\text{c}})=O\big(n^{(-q/2+1)\vee(-\xi q)}\big)$ as indicated by Lemma \ref{lemma_weight}.
\end{lemma}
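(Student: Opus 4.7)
\noindent My plan is to apply Freedman's inequality (Lemma~\ref{lemma_freedman}) to a martingale expansion in the innovations $\eta_l$, conditional on the covariates $X_1,\ldots,X_n$. First, observe that
\[
|I_{\epsilon}-I_{\epsilon,m}|\le \max_{1\le j\le N}\sqrt{h^dn}\,\Big|\sum_{i=1}^n w_h(x_j,X_i)\sigma(X_i)\tilde\epsilon_i\Big|/\sigma(x_j),
\]
where $\tilde\epsilon_i:=\epsilon_i-\epsilon_{i,m}=\sum_{k\ge m}a_k\eta_{i-k}$. Swapping the order of summation writes the inner sum as $\sum_l c_{l,j}\eta_l$ with $c_{l,j}:=\sum_{i:\,1\le i\le n,\,i-l\ge m}w_h(x_j,X_i)\sigma(X_i)a_{i-l}$. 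Since the $\{\eta_l\}$ are i.i.d., independent of $\{X_i\}$, with mean zero and unit variance, the terms $\xi_{j,l}:=\sqrt{h^dn}\,c_{l,j}\eta_l/\sigma(x_j)$ form a martingale-difference array with respect to $\G_l:=\sigma(X_1,\ldots,X_n,\eta_s:s\le l)$, which reduces the problem to bounding $\PP\big(\max_{1\le j\le N}\big|\sum_l \xi_{j,l}\big|\ge z\big)$.

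Next, I would derive two bounds on the good event $\A_n$ from Lemma~\ref{lemma_weight}. Using Assumption~\ref{asm_dep_epsilon} one has $\sum_{k\ge m}|a_k|\lesssim m^{-\zeta}$, so by \eqref{lm53},
\[
\max_{1\le j\le N}|\xi_{j,l}|\lesssim \sqrt{h^dn}\cdot\frac{m^{-\zeta}}{h^dn}|\eta_l|=\frac{m^{-\zeta}}{\sqrt{h^dn}}|\eta_l|.
\]
For the conditional quadratic variation, $c_{l,j}$ is $\G_0$-measurable, so
\[
V_j:=\sum_l\EE[\xi_{j,l}^2\mid \G_{l-1}]=\frac{h^dn}{\sigma^2(x_j)}\,\mathrm{Var}\!\Big(\sum_i w_h(x_j,X_i)\sigma(X_i)\tilde\epsilon_i\,\Big|\,X\Big).
\]
Expanding the variance and invoking the autocovariance bound $\sum_k|\tilde\gamma(k)|\le\big(\sum_{k\ge m}|a_k|\big)^2\lesssim m^{-2\zeta}$ jointly with \eqref{lm53} and Assumption~\ref{asm_sigma} will give $\max_{1\le j\le N}V_j\lesssim m^{-2\zeta}$ on $\A_n$.

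Finally, I would feed these into Lemma~\ref{lemma_freedman} with the choices $u:=cz/\log(n)$ and $v:=c'm^{-2\zeta}$ for suitable constants. The variance-tail piece $2\PP(\max_j V_j\ge v)$ is absorbed into $\PP(\A_n^{\mathrm{c}})$. The exponential piece $2N\exp\{-z^2/(2zu+2v)\}$ collapses to $Ne^{-z^2m^{2\zeta}}$ (up to constants) once $2zu\lesssim m^{-2\zeta}$; in the complementary regime where $zu$ dominates, $\exp\{-z^2/(2zu)\}=\exp\{-\log(n)/(2c)\}$ is already negligible and dominated by the other terms. Summing Markov bounds for $\max_j|\xi_{j,l}|\ge u$ via $\PP(|\eta_l|\ge t)\le\|\eta_1\|_q^q/t^q$ (Assumption~\ref{asm_moment}) over the effective range of $l$ of size $O(n)$ delivers
\[
\sum_l\PP\big(\max_j|\xi_{j,l}|\ge u\big)\lesssim \frac{n\log^q(n)}{z^q(h^dn)^{q/2}m^{q\zeta}},
\]
which is the first term in the statement. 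The main obstacle will be calibrating $u$ so the three contributions balance exactly as claimed: too small a $u$ inflates the Markov term, while too large a $u$ spoils the exponential. A secondary technicality is that one must control the contributions from $l<2-m$ using the polynomial decay of $|a_k|$, showing they are dominated by the $O(n)$ summands in the range $2-m\le l\le n-m$ so the apparent infinite sum collapses to $O(n)$ effective terms.
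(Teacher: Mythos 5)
Your proposal follows essentially the same route as the paper's proof: the same interchange of summation turning $I_\epsilon-I_{\epsilon,m}$ into a weighted sum $\sum_l b_{j,l}\eta_l$ of the innovations, the same martingale-difference structure conditional on $\F_n$, Freedman's inequality with increment bound $m^{-\zeta}/\sqrt{h^dn}$ and quadratic-variation bound $m^{-2\zeta}$, the choice $u=z/\log(n)$, Markov/moment bounds summed over the $O(n)$ effective indices, and the polynomially decaying tail for $l\le -m$. The only cosmetic difference is that you bound the conditional variance via autocovariance summability while the paper uses the H\"older-type bound $\sum_l b_{j,l}^2\le(\sum_l|b_{j,l}|)\max_l|b_{j,l}|$; both give $m^{-2\zeta}$.
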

\begin{proof}[Proof of Lemma \ref{lemma_m}]
Recall the definition of $\epsilon_i$ in expression (\ref{eq_epsilon}). Note that on the set $\A_n$ defined in Lemma \ref{lemma_weight}, we can bound $|I_{\epsilon} - I_{\epsilon,m}|$ by
\begin{align}
    \label{eq_thm1_m}
    |I_{\epsilon} - I_{\epsilon,m}|\One_{\A_n} & \le \sqrt{h^dn}\max_{1\le j\le N}\Big|\sum_{i=1}^{n} w_h(x_j,X_i)\One_{\A_n}\sigma(X_i)(\epsilon_i-\epsilon_{i,m})\Big|/\sigma(x_j) \nonumber \\
    & =: \sqrt{h^dn}\max_{1\le j\le N} \Big|\sum_{l\le n-m}b_{j,l}\eta_{l}\Big|,
\end{align}
where 
\begin{equation}
    \label{eq_thm1_b}
    b_{j,l} = \Big(\sum_{i=1 \vee (l+m)}^{n} w_h(x_j,X_i)\One_{\A_n}\sigma(X_i)a_{i-l}\Big)/\sigma(x_j).
\end{equation}
Recall the filtration $\F_i = (\ldots, v_{i-1},v_{i})$. It can be shown that $b_{j,l}\eta_l$ are martingale differences for different $l$ with respect to $\G_i = (\F_n$, $\eta_l, l\leq i)$. Then, by Freedman's inequality in Lemma \ref{lemma_freedman}, we have, for any $u>0$, 
\begin{align}
    \label{lm5main}
    & \PP\Big(\sqrt{h^dn}\max_{1\le j\le N} \Big|\sum_{l\le n-m}b_{j,l}\eta_{l}\Big|\ge z\Big) \nonumber \\
    \le & \sum_{l \le n-m} \PP\big(\max_{1\le j\le N} \sqrt{h^dn}|b_{j,l}\eta_l|\ge u\big) + 2N e^{-z^2/(2zu+2v)},
\end{align}
where $v$ is the upper bound for 
$$ \theta^2 = \max_{1 \leq j \leq N}\sum_{l \leq n-m} h^dn\EE \big(|b_{j,l} \eta_l|^2 \mid \G_{l-1}\big).$$
Since $\eta_{l}$ are independent for different $l$, we have
\begin{align}
    \label{lm5ineq1}
    \theta^2 
    = \max_{1 \leq j \leq N} h^dn \sum_{l \leq n-m} b_{j,l}^2 
    \leq \max_{1 \leq j \leq N} h^dn \Big(\sum_{l \leq n-m} |b_{j,l}|\Big) \cdot \max_{l\leq n-m} |b_{j,l}|.
\end{align}
By expression (\ref{lm53}) in Lemma \ref{lemma_weight}, the definition of $b_{l,j}$ in expression (\ref{eq_thm1_b}) and Assumptions \ref{asm_sigma} and \ref{asm_dep_epsilon}, we obtain 
\begin{align}
    \label{lm5ineq11}
    \max_{l\leq n-m} |b_{j,l}| \lesssim \max_{l\leq n-m} \Big( \sum_{i=1 \vee (l+m)}^n w_h(x_j,X_i)\One_{\A_n}|a_{i-l}| \Big) \lesssim \frac{1}{h^dn} \max_{l\leq n-m} \sum_{i=l+m}^n |a_{i-l}| \lesssim \frac{1}{h^dn} m^{-\zeta}.
\end{align}
By Lemma \ref{lemma_weight}, we have $\sum_{i=1}^n w_h(x_j,X_i)\One_{\A_n} = O(1)$. Then, by Assumption \ref{asm_dep_epsilon}, we obtain 
\begin{align}
    \label{lm5ineq12}
    \max_{1 \leq j \leq N} \Big|\sum_{l \leq n-m} b_{j,l}\Big| \lesssim \max_{1 \leq j \leq N}\Big( \sum_{i=1}^n w_h(x_j,X_i)\One_{\A_n} \sum_{l\leq i-m}|a_{i-l}|\Big)\lesssim m^{-\zeta}.
\end{align}
Implementing inequality \eqref{lm5ineq11} and \eqref{lm5ineq12} to inequality \eqref{lm5ineq1} leads to
\begin{equation}
    \label{lm5ineq3}
    \theta^2\lesssim m^{-2\zeta}.
\end{equation}
Note that by expression (\ref{lm53}) in Lemma \ref{lemma_weight}, we obtain
\begin{equation}
\max_{1 \leq j \leq N} (h^dn)^{1/2} |b_{j,l}|  \lesssim (h^dn)^{-1/2} \big(m^{-\zeta}\One_{-m+1\leq l \leq n-m} + (1-l)^{-\zeta}\One_{l\leq -m}\big).
\end{equation}
This along with Markov's inequality gives
\begin{align}
    \label{lm5ineq2}
    \PP\Big(\max_{1 \leq j \leq N} (h^dn)^{1/2}|b_{j,l} \eta_l| \geq u\Big) &\leq \EE\Big(\max_{1 \leq j \leq N} (h^dn)^{q/2}|b_{j,l} \eta_l|^q \Big)/u^q \nonumber\\
    &\lesssim  (h^dn)^{-q/2} \big(m^{-q\zeta}\mathbf{1}_{-m+1\leq l \leq n-m} + (1-l)^{-q\zeta}\One_{l\leq -m}\big)/u^q.
\end{align}
By expressions \eqref{lm5ineq3} and \eqref{lm5ineq2}, for any $u>0$, we can rewrite inequality \eqref{lm5main} as follows
\begin{align*}
    & \quad \PP\Big((h^dn)^{1/2}\max_{1\le j\le N} \Big|\sum_{l\le n-m}b_{j,l}\eta_{l}\Big| \geq z\Big) \nonumber \\
    & \lesssim \sum_{l=-m+1}^{n-m} (h^dn)^{-q/2} m^{-q\zeta}/ u^q + \sum_{l\le -m} (h^dn)^{-q/2} (1-l)^{-q\zeta}/ u^q  + N e^{-z^2m^{2\zeta}}\\
    & \lesssim n(h^dn)^{-q/2}m^{-q\zeta}/u^q + N e^{-z^2/(2zu+2m^{-2\zeta})}.
\end{align*}
Taking $u=z/\log(n)$ yields
$$\PP\Big((h^dn)^{1/2}\max_{1\le j\le N} \Big|\sum_{l\le n-m}b_{j,l}\eta_{l}\Big| \geq z\Big) \lesssim  n\log^q(n)z^{-q}(h^dn)^{-q/2}m^{-q\zeta}+ N e^{-z^2m^{2\zeta}}.$$
By this and Lemma \ref{lemma_weight}, the desired result is achieved.
\end{proof}

\begin{lemma}[Gaussian approximation]
    \label{lemma_m_GA}
    Under the conditions in Theorem \ref{thm1_GA}, we have
    $$\sup_{u\in\RR}\big|\PP\big(I_{\epsilon,m}\le u\big)-\PP\big(I_{z,m}\le u\big) \big|\lesssim (h^dn)^{-1/6} \log^{7/6} (Nn) + (n^{2/q}/(h^dn))^{1/3} \log(Nn) + \PP(\A_n^{\text{c}}),$$
    where $\PP(\A_n^{\text{c}})=O\big(n^{(-q/2+1)\vee(-\xi q)}\big)$ as indicated by Lemma \ref{lemma_weight}.
\end{lemma}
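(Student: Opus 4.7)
The plan is to reduce $I_{\epsilon,m}$ to the maximum of sums of independent random variables by conditioning on the covariates, and then invoke the high-dimensional Gaussian approximation of \textcite{chernozhukov_central_2017} (Proposition 2.1 therein). I will first condition on $\F_n = \sigma(X_1,\ldots,X_n)$ and exploit the fact that $\epsilon_{i,m} = \sum_{k=0}^{m-1}a_k\eta_{i-k}$ is a finite linear combination of the i.i.d.\ innovations $\{\eta_l\}$. Exchanging the order of summation on the event $\A_n$ of Lemma \ref{lemma_weight}, I rewrite, for each $1\le j\le N$,
$$\sqrt{h^dn}\,\frac{1}{\sigma(x_j)}\sum_{i=1}^n w_h(x_j,X_i)\sigma(X_i)\epsilon_{i,m} \;=\; \sum_{l=2-m}^{n} c_{j,l}\,\eta_l,$$
where $c_{j,l}:= \sqrt{h^dn}\,\sigma(x_j)^{-1}\sum_{i=(l\vee 1)}^{(l+m-1)\wedge n} w_h(x_j,X_i)\sigma(X_i)\,a_{i-l}$ is $\F_n$-measurable. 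Writing $|x|=\max\{x,-x\}$ and doubling the index set with signs to $\{1,\ldots,2N\}$, $I_{\epsilon,m}$ becomes, conditional on $\F_n$, a maximum over $2N$ sums of independent centered random variables; by construction, $I_{z,m}$ is the corresponding maximum obtained by replacing each $\eta_l$ with an i.i.d.\ standard Gaussian $z_l$.

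Next, I would verify on $\A_n$ the ingredients for the conditional CCK Gaussian approximation bound. The $L^\infty$ control $\max_{j,l}|c_{j,l}| \lesssim (h^dn)^{-1/2}$ follows from $h^dn\,|w_h(x_j,X_i)|\lesssim 1$ in Lemma \ref{lemma_weight}, boundedness of $\sigma$ in Assumption \ref{asm_sigma}, and $\sum_{k\ge 0}|a_k|\lesssim S$ from Assumption \ref{asm_dep_epsilon}. The $q$-th moment hypothesis on $\eta_l$ in Assumption \ref{asm_moment} feeds the tail input and produces the term $(n^{2/q}/(h^dn))^{1/3}\log(Nn)$ via truncation of the innovations at level $n^{1/q}$ together with a union bound. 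The conditional variance lower bound $\sum_l c_{j,l}^2 \gtrsim c>0$ uniformly in $j$ is the more delicate point: expanding the square and using $\sum_l a_{i_1-l}a_{i_2-l}\approx \gamma(i_1-i_2)$ recovers a quadratic form in the kernel weights with the same structure as $Q_{j,j}^{(L)}$ in (\ref{eq_cov_Z_truncate}), so the bound reduces to the argument leading to (\ref{eq_thm1_part5_result}): an $M/R$ decomposition combined with Freedman's inequality (Lemma \ref{lemma_freedman}), boundedness of $\sigma$ away from zero, and positivity of the long-run variance $S^2$.

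With these checks, Proposition 2.1 of \textcite{chernozhukov_central_2017} applied conditionally on $\F_n$ yields, on $\A_n$,
$$\sup_{u\in\RR}\Big|\PP\big(I_{\epsilon,m}\le u\mid\F_n\big) - \PP\big(I_{z,m}\le u\mid\F_n\big)\Big| \lesssim (h^dn)^{-1/6}\log^{7/6}(Nn) + (n^{2/q}/(h^dn))^{1/3}\log(Nn),$$
where the replacement of $\log(2N)$ by $\log(Nn)$ is absorbed by constants since $N \lesssim h^{-3d/2-1}n^{1/2+1/q}$. Taking expectation in $\F_n$, using $\sup_u|\PP(A)-\PP(B)|\le\EE[\sup_u|\PP(A\mid\F_n)-\PP(B\mid\F_n)|\One_{\A_n}] + \PP(\A_n^{\text{c}})$, delivers the stated bound. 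The main obstacle is the uniform conditional variance lower bound above: the quantity $\sum_l c_{j,l}^2$ is a random quadratic form in the $X_i$'s, so one needs both a concentration inequality (Freedman) for its fluctuations and a positive deterministic lower bound on its expectation, the latter driven by $S^2>0$ and $\inf\sigma>0$. The secondary bookkeeping, namely the exact $\log^{7/6}(Nn)$ factor and the truncation producing the $n^{2/q}$ term, is a standard consequence of CCK once the variance and $L^\infty$ bounds are in hand.
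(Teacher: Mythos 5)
Your proposal is correct and follows essentially the same route as the paper: rewrite $I_{\epsilon,m}$ on $\A_n$ as $\sqrt{h^dn}\max_j|\sum_l D_{j,l}\eta_l|$ with $\F_n$-measurable coefficients (your $c_{j,l}$ are exactly the paper's $\sqrt{h^dn}D_{j,l}$), verify the uniform conditional variance lower bound via the $M/R$ decomposition of the weights together with Freedman's inequality and the positivity of the long-run variance, check the moment/$L^\infty$ conditions from Lemma \ref{lemma_weight}, and conclude by applying Proposition 2.1 of \textcite{chernozhukov_central_2017} conditionally on $\F_n$ before integrating out and adding $\PP(\A_n^{\text{c}})$. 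The only point treated more carefully in the paper is the fluctuation of the conditional-expectation part of the quadratic form (the $\III_{j,2}$ term), which is not a martingale sum and is controlled via Lemma 5.8 of \textcite{zhang2015gaussian} rather than Freedman alone, but this does not change the substance of your argument.
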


\begin{proof}[Proof of Lemma \ref{lemma_m_GA}]
Recall the filtration $\F_i = (\ldots,v_{i-1},v_{i})$ and the set $\A_n$ defined in Lemma \ref{lemma_weight}. We denote 
\begin{equation}
    \label{eq_lm5_D}
    D_{j,l} = \Big(\sum_{i=1\vee l}^{n \wedge (l+m-1)} w_h(x_j,X_i)\One_{\A_n}\sigma(X_i)a_{i-l}\Big)/\sigma(x_j) ,
\end{equation}
and then, on the set $\A_n$, we can rewrite $I_{\epsilon,m}$ into
\begin{equation}
    \label{eq_lm5_I_epsilon_m}
    I_{\epsilon,m}\One_{\A_n} = \sqrt{h^dn}\max_{1 \leq j \leq N} \Big|\sum_{l=2-m}^n D_{j,l} \eta_{l}\Big|,
\end{equation}
where, for all $1\le j\le N$, conditioned on $\F_n$, $D_{j,l}\eta_{l}$ are independent for different $l$. We aim to apply the Gaussian approximation theorem on expression (\ref{eq_lm5_I_epsilon_m}). Since the proof is quite involved, we shall proceed the proof with two main steps.

\noindent\textbf{Step 1.} The goal of this first step is to show that, for some constant $c_0>0$, with probability greater than $1-O(n^{(-q/2+1)\wedge (-\xi  q)})$,
we have
$$ \min_{1 \leq j \leq N} \sum_{l=2-m}^n \EE \Big\{\big(\sqrt{h^dn} D_{j,l}\eta_{l} \big)^2 \mid \F_n\Big\}\geq c_0 .$$
To see this, it shall be noted that $D_{j,l}$ is $\F_{n}$-measurable. Since $\EE(\eta_l\mid \F_n)=0$ and $\EE(\eta_l^2\mid \F_n)=1$, we have
\begin{align}
    \label{eq_thm1_e1_step1}
    \sum_{l=2-m}^n \EE \Big\{\big(\sqrt{h^dn} D_{j,l}\eta_{l} \big)^2 \mid \F_n\Big\} & = h^dn\sum_{l=2-m}^n D_{j,l}^2 \nonumber \\
    & \ge h^dn \sum_{l=1}^{n-m}  \Big(\sum_{i= l}^{n \wedge (l+m-1)} w_h(x_j,X_i)\One_{\A_n}\sigma(X_i)a_{i-l}/\sigma(x_j) \Big)^2,
\end{align}
where we could decompose $w_h(x_j,X_i)$ into two parts, namely
\begin{equation}
    \label{eq_lm5_w_decompose}
    M_{ji}:=\Big(w_h(x_j,X_i)- \EE\big\{w_h(x_j,X_i)\mid\F_{i-1} \big\}\Big) \quad \text{and}\quad R_{ji}:=\EE\big\{w_h(x_j,X_i)\mid\F_{i-1} \big\}.
\end{equation}
We insert expression (\ref{eq_lm5_w_decompose}) back into (\ref{eq_thm1_e1_step1}) and define
\begin{equation}
    \label{eq_thm1_G}
    \G_{j,l}:=\Big( \sum_{i= l}^{l+m-1} a_{i-l}\sigma(X_i)M_{ji}\One_{\A_n}/\sigma(x_j)\Big)^2.
\end{equation}
Since the density function $g(\cdot)$ is bounded, by expression (\ref{lm52}) in Lemma \ref{lemma_weight}, we obtain
\begin{align}
    \label{lm6ineq}
    |R_{ji}\One_{\A_n}| \lesssim \Big|\int_{\II^d} n^{-1} K(z)g\big(x_j - hz\mid \F_{i-1}\big)dz\Big| \lesssim n^{-1},
\end{align}
which along with expressions (\ref{eq_thm1_e1_step1}), (\ref{eq_thm1_G}), Assumption \ref{asm_sigma} and the Cauchy-Schwarz inequality yields
\begin{align}
    \label{lm61main}
    \sum_{l=2-m}^n \EE \Big\{\big(\sqrt{h^dn} D_{j,l}\eta_{l} \big)^2 \mid \F_n\Big\} & \ge  h^dn \sum_{l=1}^{n-m} \Big( \G_{j,l}^{1/2}-O(1/n) \Big)^2 \nonumber\\
    & \ge h^dn \sum_{l=1}^{n-m}\G_{j,l} - O\Big(h^dn \Big(\sum_{l=1}^{n-m}\G_{j,l}/n\Big)^{1/2}\Big).
\end{align}
We shall investigate the two parts in expression (\ref{lm61main}) respectively. For the first part, note that if $i\neq i'$,
\begin{equation}
    \EE(M_{ji}M_{ji'})
    =\EE\big\{\EE\big(M_{ji}M_{ji'}\mid\F_n\big)\big\}=0.
\end{equation}
Therefore, the cross terms in $\EE\G_{j,l}$ is zero, which along with Assumptions \ref{asm_dep_epsilon}, \ref{asm_kernel} and expression (\ref{lm53}) gives, for some constant $c>0$,
\begin{align}
    \label{E_gil}
    h^dn \sum_{l=1}^{n-m} \EE \G_{j,l}
    & = h^dn\sum_{l=1}^{n-m}\sum_{i=l}^{l+m-1}|a_{i-l}|^2\One_{\A_n}\sigma^2(X_i) \EE(M_{ji}^2)\One_{\A_n}/\sigma^2(x_j) \nonumber\\
    & =  \Bigg(h^dn\sum_{l=1}^{n-m}\sum_{i=l}^{l+m-1} |a_{i-l}|^2\sigma^2(X_i)  \int_{\II}\Big(\frac{K(z)}{h^dng(x_j)}\Big)^2 g(x_j-hz) h^ddz /\sigma^2(x_j) \Bigg)(1+o(1)) \nonumber \\
    & \asymp c.
\end{align}
Next, we shall bound the difference between $\G_{j,l}$ and $\EE\G_{j,l}$, which can be decomposed into two parts, that is
\begin{align}
    \label{eq_lm5_twoparts}
    h^dn\sum_{l=1}^{n-m}\big( \G_{j,l} - \EE\G_{j,l}\big) & = h^dn\sum_{l=1}^{n-m} \big(\G_{j,l}-\EE(\G_{j,l}\mid\F_{l-1})\big) + h^dn \sum_{l=1}^{n-m}\big(\EE(\G_{j,l}\mid\F_{l-1})-\EE \G_{j,l}\big) \nonumber \\
    & =: h^dn (\III_{j,1} + \III_{j,2}).
\end{align}
The first part $\III_{j,1}$ is a sum of martingale differences with respect to $\F_l$. Note that by Assumption \ref{asm_dep_epsilon}, we have $\sum_{k=0}^{m-1}|a_{i,l}|=O(1)$, and it follows from expression (\ref{lm53}) in Lemma \ref{lemma_weight} that $\One_{\A_n}w_h(x_j,X_i)\lesssim 1/(h^dn)$. Therefore, by the definition of $\G_{j,l}$ in expression (\ref{eq_thm1_G}),
\begin{equation}
    \label{eq_lm5_fd_bd}
    \max_{1\le j\le N,1\le l\le n-m}h^dn\big|\G_{j,l}-\EE(\G_{j,l}\mid\F_{l-1})\big| \lesssim 1/(h^dn),
\end{equation}
and 
\begin{equation}
    \label{eq_lm5_fd_var}
    \max_{1\le j\le N}\sum_{l=1}^{n-m}\text{Var}\big(h^dn\G_{j,l}\mid\F_{l-1}\big) \lesssim \max_{1\le j\le N}\Big[\big(\max_{1\le l\le n-m}h^dn\G_{j,l}\big)\cdot\sum_{l=1}^{n-m}\EE(h^dn\G_{j,l}\mid\F_{l-1})\Big] \lesssim 1/(h^dn).
\end{equation}
Therefore, by the Freedman's inequality in Lemma \ref{lemma_freedman}, we obtain
\begin{equation}
    \label{eq_lm5_part1bd}
    \PP\big(\max_{1\leq j\leq N}h^dn|\III_{j,1}|\geq z\big) \lesssim N\exp\Big(-\frac{z^2}{z/(h^dn)+1/(h^dn)}\Big).
\end{equation}
Hence, with probability greater than $1-n^{-q}$, we have $\max_{1\le j\le N}h^dn|\III_{j,1}|\lesssim \sqrt{\log(N)/(h^dn)}$. For the part $\III_{j,2}$, recall the definition of $\G_{j,l}$ in expression (\ref{eq_thm1_G}) and let $$L_{j,i,l}=a_{i-l}\One_{\A_n}\sigma(X_i)M_{ji}/\sigma(x_j).$$ 
Then, we can rewrite $\G_{j,l}$ into
\begin{equation}
    \label{gil_decompose}
    \G_{j,l}=\sum_{i=1\vee l}^{n\wedge (l+m-1)}\Big(L_{j,i,l}^2+2L_{j,i,l}\sum_{s<l}L_{j,i,s}\Big).
\end{equation}
Note that $\EE\big(L_{j,i,l}\mid\F_{l-1}\big)=0$. Thus, we have
\begin{equation}
    \label{condE_gil_part2}
    \EE\big(\G_{j,l}\mid \F_{l-1}\big) = \sum_{i=l}^{n\wedge(l+m-1)}\EE\big(L_{j,i,l}^2\mid\F_{l-1}\big),
\end{equation}
and it follows from Lemma \ref{lemma_weight} that
\begin{align}
    \label{condE_gil_part1}
    \EE\Big(\sum_{i=l}^{l+m-1}L_{j,i,l}^2\mid\F_{l-1}\Big)
    = & \Bigg(\sum_{i=l}^{l+m-1}\frac{1}{h^d}|a_{i-l}|^2\sigma^2(X_i)\int_{\mathbb{I}} \frac{K^2(z)}{n^2g^2(x_j)}g(x_j-hz|\mathcal{F}_{l-1})dz\Bigg)/\sigma^2(x_j).
\end{align}
Then, it follows from Lemma 5.8 in \textcite{zhang2015gaussian} that, for $\xi >0$, we have
\begin{equation}
    \label{eq_lm5_part2_result}
    \PP\big(\max_{1\leq j \leq N} |\III_{j,2}|\geq z/(h^dn^2)\big) \lesssim z^{-q} n^{1\vee(q/2-\xi  q)} \log^{q/2}(N)+e^{-z^2/n}.
\end{equation}
Hence, with probability greater than $1-O(n^{-(\xi q)\wedge(-q/2+1)})$, we have $\max_{1\le j\le N}h^dn|\III_{j,2}|\lesssim \sqrt{\log(N)/n}$. Finally, by combining the results of $I_{j,1}$ and $I_{j,2}$, we achieve that, with probability greater than $1-O(n^{-(\xi q)\wedge(-q/2+1)})$,
$$h^dn\Big|\sum_{l=1}^{n-m}\big(\G_{j,l}-\EE\G_{j,l}\big)\Big| \lesssim \sqrt{\log(N)/(h^dn)}.$$
This, together with expression (\ref{E_gil}) gives, with probability greater than $1-O(n^{(-q/2+1)\wedge (-\xi  q)})$,
$$\min_{1\le j\le N}h^dn \sum_{l=1}^{n-m}\mathcal{G}_{j,l} \gtrsim 1,$$
which along with expression (\ref{lm61main}) completes the proof of Step 1.

\noindent\textbf{Step 2.} In this step we aim to show that, for $B_n=ch^{-d/2}$, some constant $c>0$large enough, $k=1,2$ and $\xi >0$, with probability greater than $1-O(n^{(-q/2+1)\wedge (-\xi  q)})$, we have
$$ \max_{1 \leq j \leq N}n^{(2+k)/2-1} \sum_{l=2-m}^n \EE \Big(\big|\sqrt{h^dn} D_{j,l}\eta_{l}\big|^{2+k}\mid\F_n\Big)\le B_n^{k}.$$
To prove this, we first recall that when conditioned on $\F_n$, $D_{j,l}\eta_l$ are independent over $l$, and by Lemma \ref{lemma_weight}, $\EE\big(|w_h(x_j,X_i)|\mid \F_{i-1}\big) \lesssim n^{-1}$. This, along with the decomposition of $w_h(x_j,X_i)$ in expression (\ref{eq_lm5_w_decompose}) and the definition of $\G_{j,l}$ in expression (\ref{eq_thm1_G}) gives, for $k=1,2$, 
\begin{align}
    \label{eq_lm5_m2_goal}
    & \quad n^{k/2}\sum_{l=2-m}^n \EE\Big(\big|\sqrt{h^dn}D_{j,l} \eta_l\big|^{2+k}\mid\F_n\Big) \nonumber \\
    & \lesssim n^{k/2}(h^dn)^{(2+k)/2}\sum_{l=2-m}^n|D_{j,l}|^{2+k} \nonumber \\
    & = n^{k/2}(h^dn)^{(2+k)/2}\sum_{l=2-m}^n \Big|\sum_{i=l}^{n\wedge (l+m-1)}w_h(x_j,X_i) a_{i-l}\One_{\A_n}\sigma(X_i)/\sigma(x_j)\Big|^{2+k}.
\end{align}
It follows from expression (\ref{lm53}) in Lemma \ref{lemma_weight} that $w_h(x_j,X_i)\One_{\A_n}\lesssim 1/(h^dn)$. Therefore, for $1\le l\le n-m$, we have
\begin{equation}
    \label{eq_lm5_m2}
    \sum_{l=2-m}^n\Big|\sum_{i=l}^{l+m-1}w_h(x_j,X_i) a_{i-l}\One_{\A_n}\sigma(X_i)/\sigma(x_j)\Big|^{2+k}
    \lesssim 1/(h^dn)^{1+k},
\end{equation}
which along with expression (\ref{eq_lm5_m2_goal}) yields
\begin{equation}
    \label{eq_lm5_m2_result}
    n^{k/2}\sum_{l=2-m}^n \EE\Big(\big|\sqrt{h^dn}D_{j,l} \eta_l\big|^{2+k}\mid\F_n\Big) \lesssim 1/(h^d)^{k/2}.
\end{equation}
Since $\EE\Big(\big|\sqrt{h^dn}D_{j,l} \eta_l\big|^q\mid\F_n\Big)\lesssim 1/(h^dn)^{(q-1)/2}$, for $q>2$, it follows from $w_h(x_j,X_i)\One_{\A_n}\lesssim 1/(h^dn)$ that
\begin{align}
    \label{eq_lm5_e2}
    \EE\Big(\max_{1\le j\le N}n^{q/2}\big|\sqrt{h^dn}D_{j,l}\eta_l\big|^q \mid \F_n\Big) & \le  n^{(q-2)/2}\sum_{1\le j\le N}\EE\Big(\big|\sqrt{h^dn}D_{j,l}\eta_l\big|^q \mid \F_n\Big)  \lesssim (h^d)^{-q/2}.
\end{align}
For any $r<q$, we denote
\begin{equation}
    \label{eq_lm5_Bn_step1}
    \tilde M_r:=\max_{1\le j\le N}\Big[n^{r/2}\sum_{l=2-m}^n\EE\Big(\big|\sqrt{h^dn}D_{j,l} \eta_l\big|^r\mid\F_n\Big)\Big]^{1/r},
\end{equation}
and let
\begin{equation}
    \label{eq_lm5_Bn}
    B_n=\max\Big\{\tilde M_3^3,\tilde M_4^2,\max_l\EE\Big(\max_j\big|\sqrt{h^dn}D_{j,l} \eta_l\big|^q\mid\F_n\Big) \Big\} \lesssim h^{-d}.
\end{equation}
The desired result of Step 2 is achieved.

Finally, we combine the results of Steps 1 and 2 and complete the proof by Proposition 2.1 in \textcite{chernozhukov_central_2017}.
\end{proof}

\subsection*{S.3 Proof of Proposition \ref{prop1}}

\begin{proof}[Proof of Proposition \ref{prop1}]
Recall that the estimator for the coefficient vector $\bbeta$ provided by (\ref{eq_betahat}): 
$$\hat\bbeta = \Big(\sum_{i=1}^n(Z_i-\tilde Z_i)(Z_i-\tilde Z_i)^{\top}\One_{X_i\in\T_d}\Big)^{-1}\Big(\sum_{i=1}^n(Y_i-\tilde Y_i)(Z_i-\tilde Z_i)\One_{X_i\in\T_d}\Big).$$
We aim to bound the estimation error $|\hat\bbeta-\bbeta|_{\infty}$, which can be written into
\begin{align}
    \label{eq_prop1_goal}
    |\hat\bbeta-\bbeta|_{\infty} \le & \Big| \Big(\sum_{i=1}^n(Z_i-\tilde Z_i)(Z_i-\tilde Z_i)^{\top}\One_{X_i\in\T_d}\Big)^{-1}\Big|_{\infty} \cdot \Bigg\{ \Big|\sum_{i=1}^n\big[\mu(X_i)-\tilde \mu(X_i)\big](Z_i-\tilde Z_i)\One_{X_i\in\T_d}\Big|_{\infty} \nonumber \\ 
    & + \Big|\sum_{i=1}^n\Big[\sigma(X_i)\epsilon_i-\sum_{t=1}^nw_h(X_i,X_t)\sigma(X_t)\epsilon_t\Big](Z_i-\tilde Z_i)\One_{X_i\in\T_d} \Big|_{\infty}\Bigg\} \nonumber \\
    =: & \Big| \Big(\sum_{i=1}^n(Z_i-\tilde Z_i)(Z_i-\tilde Z_i)^{\top}\One_{X_i\in\T_d}\Big)^{-1}\Big|_{\infty} \cdot \big(|\III_1|_{\infty}+|\III_2|_{\infty}\big).
\end{align}
Recall the definition of $\tilde\mu(X_i)$ in expression (\ref{eq_thm1_mu_tilde}). Similarly, we define
$$\tilde h(X_i) = \sum_{t=1}^nw_h(X_i,X_t)h(X_t),\quad \tilde u_i = \sum_{t=1}^nw_h(X_i,X_t)u_t.$$
For the part $\III_1$, by Assumption \ref{asm_iden}, we have
\begin{align}
    \label{eq_prop1_part1}
    |\III_1|_{\infty}
    \le & \Big|\sum_{i=1}^n\big[\mu(X_i)-\tilde \mu(X_i)\big]\big(h(X_i)-\tilde h(X_i)\big)\One_{X_i\in\T_d}\Big|_{\infty} \nonumber \\
    & + \Big|\sum_{i=1}^n\big[\mu(X_i)-\tilde \mu(X_i)\big]\big(u_i-\tilde u_i\big)\One_{X_i\in\T_d}\Big|_{\infty} =: |\III_{11}|_{\infty} + |\III_{12}|_{\infty}.
\end{align}
Since $w_h(X_i,X_t)=0$ when $|X_i-X_t|_{\infty}>h$, $\mu(\cdot)$ and $h(\cdot)$ are Lipschitz continuous, it follows that
\begin{equation}
    \label{eq_prop1_tildemu}
    \big|\mu(X_i)-\tilde\mu(X_i)\big| = \Big|\sum_{t=1}^nw_h(X_i,X_t)\big[\mu(X_i)-\mu(X_t)\big]\Big| \lesssim h,
\end{equation}
which together with the same result on $h(\cdot)$ part 
yields $|\III_{11}|_{\infty}/n=O(h^2)$. For $\III_{12}$, note that by the definition of $u_i$ in Assumption \ref{asm_iden}, we have $\EE(u_i\mid X_i)=0$. Then, by the weak dependency of $u_i$ imposed in Assumption \ref{asm_iden}, we have
\begin{align*}
    \EE\Big(\sum_{i=1}^n\big[\mu(X_i)-\tilde \mu(X_i)\big]u_i\One_{X_i\in\T_d}\mid \F_n\Big)^2 \lesssim \sum_{i=1}^n\big[\mu(X_i)-\tilde \mu(X_i)\big]^2\One_{X_i\in\T_d} \lesssim nh^2,
\end{align*}
which together with a similar argument for the other part in $\III_{12}$ gives $|\III_{12}|_{\infty}/n=O_{\PP}(h/\sqrt{n})$. Therefore, $|\III_1|_{\infty}/n=O_{\PP}\{h^2 + h/\sqrt{n}\}$.

For the part $\III_2$, we aim to give an upper bound for $\Big\lVert\sum_{i=1\vee k}^n\sigma(X_i)\epsilon_i(Z_i-\tilde Z_i)\One_{X_i\in\T_d}\Big\rVert_q^2$, for $q\ge2$. To this end, we shall note that, by the continuity of $h(X_i)$ and Assumptions \ref{asm_iden} and \ref{asm_sigma},
\begin{align}
    \label{eq_prop1_part21}
    & \quad \Big\lVert\sum_{i=1\vee k}^n\sigma(X_i)\epsilon_i\sum_{t=1}^nw_h(X_i,X_t)(h(X_i)-h(X_t))\One_{X_i\in\T_d}\Big\rVert_q^2 \nonumber \\
    & \lesssim \sum_{1\le k\le n}\Big\lVert\sum_{i=k}^n\P_k(\epsilon_i)\sigma(X_i)\sum_{t=1}^nw_h(X_i,X_t)(h(X_i)-h(X_t))\One_{X_i\in\T_d}\Big\rVert_q^2 \nonumber \\
    & \lesssim \sum_{1\le k\le n} h^2\cdot(k-1)^{-2\zeta+2} \nonumber \\
    & = O(nh^2),
\end{align}
where $\P_k(\cdot)=\EE(\cdot\mid\F_k)-\EE(\cdot\mid\F_{k-1})$ is the projection operator introduced by \textcite{Wu:2005}. Since $u_i$ is independent of $X_i$ and $\epsilon_i$, it follows from Assumptions \ref{asm_sigma}, \ref{asm_dep_epsilon}, \ref{asm_dep} and \ref{asm_iden} that,
\begin{align}
    \label{eq_prop1_part22}
    & \quad \Big\lVert\sum_{i=1\vee k}^n\sigma(X_i)\epsilon_i\sum_{t=1}^nw_h(X_i,X_t)u_i\One_{X_i\in\T_d}\Big\rVert_q^2 \nonumber \\
    & \lesssim \sum_{1\le k\le n}\Big\lVert\sum_{i=k}^n\P_k(\epsilon_i)\sigma(X_i)u_i\One_{X_i\in\T_d}\Big\rVert_q^2 \nonumber \\
    & = o(n),
\end{align}
which along with expression (\ref{eq_prop1_part21}) and a similar argument on the part involving $\tilde u_i$ gives
\begin{equation}
    \label{eq_prop1_part2_result}
    \Big\lVert\sum_{i=1\vee k}^n\sigma(X_i)\epsilon_i(Z_i-\tilde Z_i)\One_{X_i\in\T_d}\Big\rVert_q^2 =o(n).
\end{equation}
This, together with similar arguments for the remaining terms in the part $\III_2$ gives $|\III_2|_{\infty}/n=o(1/\sqrt{n})$.

Now we derive the upper bound for $\big| \big(\sum_{i=1}^n(Z_i-\tilde Z_i)(Z_i-\tilde Z_i)^{\top}\One_{X_i\in\T_d}/n\big)^{-1}\big|_{\infty}$. Note that by the definition of $Z_i$ in Assumption \ref{asm_iden},
\begin{align}
    \label{eq_prop1_zmatgoal}
    \sum_{i=1}^n(Z_i-\tilde Z_i)(Z_i-\tilde Z_i)^{\top} & = \sum_{i=1}^n\big(h(X_i)-\tilde h(X_i)\big)\big(h(X_i)-\tilde h(X_i)\big)^{\top} \nonumber \\
    & \quad + 2\sum_{i=1}^n\big(h(X_i)-\tilde h(X_i)\big)(u_i-\tilde u_i)^{\top}
    + \sum_{i=1}^n(u_i-\tilde u_i)(u_i-\tilde u_i)^{\top} .
\end{align}
For the first part in (\ref{eq_prop1_zmatgoal}), by the continuity of $h(\cdot)$ and Lemma \ref{lemma_weight}, we have, 
\begin{align}
    \label{eq_prop1_zmat1}
    \Big|\sum_{i=1}^n\big(h(X_i)-\tilde h(X_i)\big)\big(h(X_i)-\tilde h(X_i)\big)^{\top}\One_{X_i\in\T_d}\Big|_{\infty} = O_{\PP}(nh^2).
\end{align}
Then, it follows from the dependence decay of $u_i$ in Assumption \ref{asm_iden} that, with high chance,
\begin{align}
    \label{eq_prop1_zmat2}
    \Big|\sum_{i=1}^n\big(h(X_i)-\tilde h(X_i)\big)(u_i-\tilde u_i)^{\top}\One_{X_i\in\T_d}\Big|_{\infty} = O_{\PP}(\sqrt{n}h).
\end{align}
Note that
\begin{align}
    \label{eq_prop1_zmat3}
    & \quad \Big|\sum_{i=1}^n\EE_0\big[(u_i-\tilde u_i)(u_i-\tilde u_i)^{\top} \big]\One_{X_i\in\T_d}\Big|_{\infty} \nonumber \\
    & \lesssim  \Big|\sum_{i=1}^n\EE_0
    (u_i u_i^{\top})\One_{X_i\in\T_d} + \sum_{i=1}^n\EE_0
    (u_i\tilde u_i^{\top})\One_{X_i\in\T_d} + \sum_{i=1}^n\EE_0
    (\tilde u_i\tilde u_i^{\top})\One_{X_i\in\T_d}\Big|_{\infty}.
\end{align}
Due to the weak dependence of $u_i$ over $i$ in Assumption \ref{asm_iden}, we obtain, for $p\ge 4$, $1\le j_1,j_2\le l$,
\begin{align}
    \label{eq_prop1_zmat_proj}
    \Big\|\sum_{i=1}^n\EE_0(u_{i j_1} u_{i j_2})\One_{X_i\in\T_d}\Big\|_{p/2} & \le \sum_{ k\ge 0}\Big\|\sum_{i=1}^n\P_{i-k}\big(\EE_0(u_{i j_1} u_{i j_2})\big)\One_{X_i\in\T_d}\Big\|_{p/2} \nonumber \\
    & \lesssim \sum_{ k\ge 0}\Big(\sum_{i=1}^n\big\|\P_{i-k}\big(\EE_0(u_{i j_1} u_{i j_2})\big)\One_{X_i\in\T_d}\big\|_{p/2}^2\Big)^{1/2} \nonumber \\
    & = O(\sqrt{n}),
\end{align}
where $u_{ij}$ is the $j$-th element in $u_i$. This, together with a similar argument on the other two parts in expression (\ref{eq_prop1_zmat3}) gives
\begin{equation}
    \label{eq_prop1_zmat_E0}
    \Big|\sum_{i=1}^n\EE_0\big[(u_i-\tilde u_i)(u_i-\tilde u_i)^{\top} \big]\One_{X_i\in\T_d}\Big|_{\infty} = O_{\PP}(\sqrt{n}).
\end{equation}
Also, note that by the dependence decay of $u_i$ in Assumption \ref{asm_iden} and properties of weights in Lemma \ref{lemma_weight}, with probability $\PP(\A_n)$,
\begin{align}
    \EE(u_{i j_1}\tilde u_{i j_2})\One_{X_i\in\T_d} = \sum_{t=1}^nw_h(X_i,X_t)\EE(u_{i j_1}u_{t j_2})\One_{X_i\in\T_d} \lesssim 1/(h^dn),
\end{align}
for $1\le j_1,j_2\le n$, which gives
\begin{align}
    \label{eq_prop1_zmat_E}
    & \Big|\sum_{i=1}^n\EE\big[(u_i-\tilde u_i)(u_i-\tilde u_i)^{\top}\big]\One_{X_i\in\T_d} -n\Sigma_u\Big|_{\infty} = O_{\PP}\{1/(h^dn)\big\}. 
\end{align}
Since $\lambda_{\text{min}}(\Sigma_u)$ is lower bounded by Assumption \ref{asm_iden}, by inserting expressions (\ref{eq_prop1_zmat1}), (\ref{eq_prop1_zmat2}), (\ref{eq_prop1_zmat_E0}) and (\ref{eq_prop1_zmat_E}) back to (\ref{eq_prop1_zmatgoal}), we achieve that
\begin{equation}
    \label{eq_prop1_zmat_result}
    \Big| \Big(\sum_{i=1}^n(Z_i-\tilde Z_i)(Z_i-\tilde Z_i)^{\top}\One_{X_i\in\T_d}/n\Big)^{-1}\Big|_{\infty} =O_{\PP}(1).
\end{equation}
Hence, by inserting the upper bounds of $|\III_1|_{\infty}$ and $|\III_2|_{\infty}$ into expression (\ref{eq_prop1_goal}), since $h^4n\rightarrow0$, we achieve
$$|\hat\bbeta-\bbeta|_{\infty}=O_{\PP}\big\{1/\sqrt{n}\big\}.$$
\end{proof}

\subsection*{S.4 Proof of Proposition \ref{prop2}}

\begin{proof}[Proof of Proposition \ref{prop2}]
Recall the $\delta$-net $\{x_j\}_{j=1}^N$ defined in Lemma \ref{lemma_deltanet}. For each $x_j\in\T_d$, the estimator of $\sigma^2(x_j)$ is
$$\hat\sigma^2(x_j)=\sum_{t=1}^nw_h(x_j,X_t)(Y_t-Z_t^{\top}\hat\bbeta -\hat\mu^*(x_j))^2.$$
We define $\tilde\sigma^2(x_j) = \sum_{t=1}^n w_h(x_j,X_t) \sigma^2(X_t).$ Then, for any $1\le j\le N$,
\begin{align}
    \label{sigma1}
    |\hat\sigma^2(x_j)-\sigma^2(x_j)| \le |\hat\sigma^2(x_j)-\tilde\sigma^2(x_j)| + |\tilde\sigma^2(x_j)-\sigma^2(x_j)|.
\end{align}
For the second term in expression (\ref{sigma1}), note that $w_h(x_j,X_t)=0$ when $|x_j-X_t|_\infty >h$. Then, by Lipschitz continuity of $\sigma(\cdot)$ and Assumption \ref{asm_kernel}, we have
\begin{align}
    \label{eq_prop2_term2}
    \max_{1\le j\le N,\, x_j\in\T_d}|\sigma^2(x_j) - \tilde\sigma^2(x_j)| = \max_{1\le j\le N,\, x_j\in\T_d}\Big|\sum_{t=1}^n w_h(x_j,X_t)\big[\sigma^2(x_j)-\sigma^2(X_t)\big]\Big| = O(h).
\end{align}
Now we study the first term in \eqref{sigma1}, which can be written into
\begin{align}
    \label{eq_prop2_term1}
    |\hat\sigma^2(x_j)-\tilde\sigma^2(x_j)|
    & \lesssim |\hat\bbeta-\bbeta|_2^2\Big|\sum_{t=1}^nw_h(x_j,X_t)Z_t\Big|_2^2 + \Big|\sum_{t=1}^nw_h(x_j,X_t)\big[\mu(X_t) -\hat\mu^*(x_j)\big]^2\Big| \nonumber \\
    & \quad + \Big|\sum_{t=1}^nw_h(x_j,X_t)\sigma^2(X_t)(\epsilon_t^2-1)\Big| \nonumber \\
    & =: \III_{j,1} + \III_{j,2}+\III_{j,3}.
\end{align}
For the part $\III_{j,1}$,
by Assumptions \ref{asm_iden} (i) and (ii), it follows from Freedman's inequality in Lemma \ref{lemma_freedman} that, for any $z>0$,
\begin{align}
    \label{eq_prop2_freed}
    \PP\Big(\max_{1\le j\le N,\, x_j\in\T_d}\sum_{t=1}^n|w_h(x_j,X_t)|\cdot\EE_0|Z_t|_2^2 \ge z\Big) \lesssim N\log^{q/2}(z)/(h^dnz)^{q/2} + Ne^{\frac{-z^2}{2/(h^dn)}}.
\end{align}
Let $z=\log(N)/(h^dn)$. Also, notice that by Assumption \ref{asm_iden}, $\max_i\EE\big|\sum_{t=1}^nw_h(x_j,X_t)Z_t\big|_2^2=O(1)$. Then, it follows from expression (\ref{eq_prop2_freed})
and Proposition \ref{prop1} that
\begin{equation}
    \label{eq_prop2_part1_result}
    \max_{1\le j\le N,\, x_j\in\T_d} \III_{j,1}=O_{\PP}\{1/n\}.
\end{equation}
For the part $\III_{j,2}$, since $w_h(x_j,X_t)=0$ for any $|x_j-X_t|_{\infty} >h$ and $\mu(\cdot)$ is Lipschitz continuous, it follows that
\begin{equation}
    \label{eq_prop2_part2_step1}
    \max_{1\le j\le N,\, x_j\in\T_d}\big[\hat\mu(x_j)-\mu(x_j)\big]^2= \max_{1\le j\le N,\, x_j\in\T_d}\Big(\sum_{t=1}^n w_h(x_j,X_t)\big[\mu(X_t)-\mu(x_j)\big]\Big)^2=O(h^2).
\end{equation}
Moreover, by applying a similar argument in expression (\ref{eq_lemmaT_freed}), we achieve
\begin{align}
    \label{eq_prop2_part2_step2}
    \max_{1\le j\le N,\, x_j\in\T_d}\big[\hat\mu^*(x_j)-\hat\mu(x_j)\big]^2 & = \max_{1\le j\le N,\, x_j\in\T_d}\Big(\sum_{t=1}^nw_h(x_j,X_t)Z_t^{\top}(\hat\bbeta - \bbeta)\Big)^2 \nonumber \\
    & \lesssim  \max_{1\le j\le N,\, x_j\in\T_d}|\hat\bbeta - \bbeta|_2^2\Big|\sum_{t=1}^nw_h(x_j,X_t)Z_t\Big|_2^2 \nonumber \\
    & = O_{\PP}(1/n).
\end{align}
As a direct consequence of expressions (\ref{eq_prop2_part2_step1}) and (\ref{eq_prop2_part2_step2}), we have 
$$\max_{1\le j\le N,\, x_j\in\T_d}\III_{j,2} = O\{h^2+1/n\}.$$
Recall the projection operator $\P_k(\cdot)$ in expression (\ref{eq_prop1_part21}). For the part $\III_{j,3}$, we shall note that for $q\ge4$, since $\P_k(\epsilon_t^2)=0$ when $t<k$, by Lemma \ref{lemma_weight} and Assumption \ref{asm_dep_epsilon}, we have
\begin{align}
    \label{eq_prop2_part3_goal}
    & \quad \Big\lVert\max_{1\le j\le N,\, x_j\in\T_d}h^dn\Big|\sum_{t=1}^nw_h(x_j,X_t)\sigma^2(X_t)(\epsilon_t^2-1)\Big|\Big\rVert_q^2 \nonumber \\
    & \le \Big\lVert h^dn\sum_{t=1\vee k}^n\max_{1\le j\le N,\, x_j\in\T_d}\big|w_h(x_j,X_t)\sigma^2(X_t)\P_k(\epsilon_t^2)\big|\Big\rVert_q^2 \nonumber \\
    & \lesssim \Big(h^dn\sum_{t=1\vee k}^n\max_{1\le j\le N,\, x_j\in\T_d}|w_h(x_j,X_t)|(t-k)^{-2\zeta}\Big)^2 \nonumber \\
    & = O_{\PP}(1).
\end{align}
Then, by expression (\ref{eq_prop2_part3_goal}) and Lemma 5.8 in \textcite{zhang2015gaussian}, we have
\begin{align}
    \label{eq_prop2_part3}
    \max_{1\le j\le N,\, x_j\in\T_d}\III_{j,3} =O_{\PP}\Big\{\sqrt{\log(N)/(h^dn)}\Big\}. 
\end{align}
Recall that $\sigma(\cdot)$ is Lipschitz continuous by Assumption \ref{asm_sigma} and we let $\delta=h$, $N=n$. By combining the results of $\III_{j,1}$-$\III_{j,3}$ and expression (\ref{eq_prop2_term2}), it follows from the $\delta$-net approximation in Lemma \ref{lemma_deltanet} that
\begin{equation}
    \label{eq_prop2_result}
    \sup_{x\in\T_d}\big|\hat\sigma^2(x)-\sigma^2(x)\big| \le \max_{1\le j\le N}\sup_{|x-x_j|_{\infty}<\delta}\big|\hat\sigma^2(x_j)-\sigma^2(x_j)\big|=O_{\PP}\Big\{h+\frac{1}{n}+\sqrt{\frac{\log(n)}{h^dn}}\Big\}.
\end{equation}
This completes the proof.
\end{proof}

\subsection*{S.5 Proof of Proposition \ref{prop_longrun}}


\begin{proof}[Proof of Proposition \ref{prop_longrun}]
We aim to provide an upper bound for $\max_{1\le j_1,j_2\le n}|\hat Q_{j_1,j_2}^{(L)} - Q_{j_1,j_2}|$, which can be decomposed into two parts as follows:
\begin{align}
    \max_{1\le j_1,j_2\le n}|\hat Q_{j_1,j_2}^{(L)} - Q_{j_1,j_2}| & \le  \max_{1\le j_1,j_2\le n}|\hat Q_{j_1,j_2}^{(L)} - Q_{j_1,j_2}^{(L)}| + \max_{1\le j_1,j_2\le n}|Q_{j_1,j_2}^{(L)} - Q_{j_1,j_2}| =: \III_1 + \III_2.
\end{align}
For the part $\III_2$, under the same lines in expression (\ref{eq_thm1_longrun}), it follows from Lemma \ref{lemma_weight}, Assumptions \ref{asm_sigma} and \ref{asm_dep_epsilon} that
\begin{align}
    \label{eq_prop3_residual}
    \III_2 & = \max_{1\le j_1,j_2\le n}h^dn\Big|\sum_{|k|\ge L}\sum_{i=1\vee(1-k)}^{n\wedge(n-k)}c_{j_1,j_2,i,k}w_h(X_{j_1},X_i)w_h(X_{j_2},X_{i+k})\gamma(k)\One_{X_{j_1},X_{j_2}\in\T_d}\Big| \nonumber \\
    & \lesssim \max_{1\le j_1,j_2\le n}h^dn\max_{k}\big|w_h(X_{j_2},X_k)\sigma(X_k)/\sigma(X_{j_2})\One_{X_{j_2}\in\T_d}\big| \nonumber \\
    & \quad \cdot\Big|\sum_{k\ge L}\gamma(k)\sum_{i=1}^{n-k}w_h(X_{j_1},X_i)\sigma(X_i)/\sigma(X_{j_1})\One_{X_{j_1}\in\T_d}\Big| \lesssim L^{-\zeta},
\end{align}
where $c_{j_1,j_2,i,k}$ is defined in expression (\ref{eq_cov_Z}) and the constant $\zeta>1$ is defined in Assumption \ref{asm_dep_epsilon}.

Now we study the part $\III_1$. We first note that, for any $1\le j_1,j_2\le n$,
\begin{align}
    \label{eq_prop3_Lpart}
    & \quad |\hat Q_{j_1,j_2}^{(L)} - Q_{j_1,j_2}^{(L)}| \nonumber \\
    & = h^dn\Big|\sum_{k=-L+1}^{L-1}\sum_{i=1\vee (1-k)}^{n\wedge (n-k)}w_h(X_{j_1},X_i)w_h(X_{j_2},X_{i+k})\big[c_{j_1,j_2,i,k}\gamma(k) - \hat c_{j_1,j_2,i,k}\hat\gamma(k)\big]\One_{X_{j_1},X_{j_2}\in\T_d}\Big| \nonumber \\ 
    & \lesssim h^dn\max_{k}\big|w_h(X_{j_2},X_k)\One_{X_{j_2}\in\T_d}\big|\cdot\Big|\sum_{k=0}^{L-1}\sum_{i=1}^{n-k}\big[c_{j_1,j_2,i,k}\gamma(k) - \hat c_{j_1,j_2,i,k}\hat\gamma(k)\big]w_h(X_{j_1},X_i)\One_{X_{j_1}\in\T_d}\Big| \nonumber \\
    & =: h^dn\III_{11,j_2}\cdot\III_{12,j_1,j_2},
\end{align}
where the constant in $\lesssim$ is independent of $h,n,L$. For the part $\III_{11,j_2}$, it follows from Lemma \ref{lemma_weight} that $\max_{j_2}\III_{11,j_2}\lesssim 1/(h^dn)$. Then, for the part $\III_{12,j_1,j_2}$, we consider the decomposition
\begin{align}
    c_{j_1,j_2,i,k}\gamma(k) - \hat c_{j_1,j_2,i,k}\gamma(k) = \hat c_{j_1,j_2,i,k}\big[\gamma(k) - \hat\gamma(k)\big] + \gamma(k)(c_{j_1,j_2,i,k} - \hat c_{j_1,j_2,i,k}).
\end{align}
Note that by Assumption \ref{asm_sigma} and Proposition \ref{prop2}, we have $c_{j_1,j_2,i,k}\asymp1$ and $\hat c_{j_1,j_2,i,k}\asymp1$ with probability tending to 1. Also, for any $k\in\ZZ$, $\gamma(k)$ is bounded by Assumption \ref{asm_sigma}. Therefore, to bound $\III_{12,j_1,j_2}$, it suffices to investigate the term
\begin{align}
    \label{eq_prop3_Lpart2}
    \III_{12,j}^* & := \Big|\sum_{k=0}^{L-1}\sum_{i=1}^{n-k}\big[\gamma(k) - \hat\gamma(k)\big]w_h(X_j,X_i)\One_{X_j\in\T_d}\Big| \nonumber \\
    & \le \Big|\sum_{k=0}^{L-1}\big[\gamma(k) - \hat\gamma(k)\big]\Big|\cdot\max_k\Big|\sum_{i=1}^{n-k}w_h(X_j,X_i)\One_{X_j\in\T_d}\Big| \nonumber \\
    & \lesssim \Big|\sum_{k=0}^{L-1}\big[\gamma(k) - \hat\gamma(k)\big]\Big|,
\end{align}
where the last inequality holds by Lemma \ref{lemma_weight}. To this end, recall the autocovariance of the stationary errors with lag $k$, i.e., $\gamma(k) = \EE(\epsilon_1\epsilon_{1+k})$, and its empirical estimator $\hat\gamma(k) = n^{-1}\sum_{i=1}^{n-k}(\hat\epsilon_i\hat\epsilon_{i+k})$. Then, we can decompose $\hat\gamma(k)-\gamma(k)$ into the deviation part and the expectation part, that is
\begin{align}
    \label{eq_prop3_auto}
    \hat\gamma(k) - \gamma(k) 
    & = \frac{1}{n}\sum_{i=1}^{n-k}\big[\hat\epsilon_i\hat\epsilon_{i+k} - \EE(\epsilon_i\epsilon_{i+k})\big] - \frac{k}{n}\EE(\epsilon_1\epsilon_{1+k}).
\end{align}
By the weak dependence of $\epsilon_i$ in Assumption \ref{asm_dep_epsilon}, it follows from Cauchy-Schwarz inequality that
\begin{align}
    \label{eq_prop3_autocenter}
    \Big|\sum_{k=0}^{L-1}\frac{k}{n}\EE(\epsilon_1\epsilon_{1+k})\Big| 
    & \le \sum_{k=0}^{L-1}\frac{k}{n}\Big|\EE\Big[\Big(\sum_{l=0}^{\infty}\P_{1-l}\epsilon_1\Big)\Big(\sum_{l=0}^{\infty}\P_{1+k-l}\epsilon_{1+k}\Big)\Big]\Big| \nonumber \\
    & \le \sum_{k=0}^{L-1}\frac{k}{n}\sum_{l=0}^{\infty}\Big|\EE\big[\big(\P_{1-l}\epsilon_i\big)\big(\P_{1-l}\epsilon_{1+k}\big)\big]\Big| \nonumber \\
    & \le \sum_{k=0}^{L-1}\frac{k}{n}\sum_{l=0}^{\infty}\big\|\P_{1-l}\epsilon_i\big\|_2\big\|\P_{1-l}\epsilon_{1+k}\big\|_2 \nonumber \\
    & \le \sum_{k=0}^{L-1}\frac{k}{n}\Big\|\sum_{l=0}^{\infty}|a_{1-l}|\Big\|_4\Big\|\sum_{l=k}^{\infty}|a_{1-l}|\Big\|_4 \lesssim L^{-\zeta+2}/n,
\end{align}
for $\zeta>1$. To bound the deviation part in expression (\ref{eq_prop3_auto}), consider the decomposition 
\begin{align}
    \hat\epsilon_i\hat\epsilon_{i+k} - \EE(\epsilon_i\epsilon_{i+k}) & = \big[\hat\epsilon_i\hat\epsilon_{i+k} - \epsilon_i\epsilon_{i+k}\big] + \big[\epsilon_i\epsilon_{i+k} - \EE(\epsilon_i\epsilon_{i+k})\big] \nonumber \\
    & = (\hat\epsilon_i - \epsilon_i)\hat\epsilon_{i+k} + \epsilon_i(\hat\epsilon_{i+k} - \epsilon_{i+k}) + \big[\epsilon_i\epsilon_{i+k} - \EE(\epsilon_i\epsilon_{i+k})\big].
\end{align}
Recall the operator $\EE_0(X)=X-\EE(X)$. Since the errors $(\epsilon_i)$ are stationary and $\EE|\epsilon_i|^q<\infty$ for $q\ge4$ by Assumption \ref{asm_moment}, it follows from Assumption \ref{asm_dep_epsilon} and Lemma \ref{lemma_burkholder} that
\begin{align}
    \EE\Big|\frac{1}{n}\sum_{k=0}^{L-1}\sum_{i=1}^{n-k}\big[\epsilon_i\epsilon_{i+k} - \EE(\epsilon_i\epsilon_{i+k})\big]\Big| & = \frac{1}{n}\EE\Big|\sum_{k=0}^{L-1}\sum_{i=1}^{n-k}\sum_{l_1\le i}\sum_{l_2\le i+k}a_{i-l_1}a_{i+k-l_2}\EE_0(\eta_{l_1}\eta_{l_2})\Big| \nonumber \\
    & = \frac{1}{n}\EE\Big|\sum_{k=0}^{L-1}\sum_{l_1\le n-k}\Big[\sum_{i=1\vee l_1}^{n-k}a_{i-l_1}\sum_{l_2<l_1}a_{i+k-l_2}\EE_0(\eta_{l_1}\eta_{l_2})\Big]\Big| \nonumber \\
    & =: \frac{1}{n}\EE\Big|\sum_{k=0}^{L-1}\sum_{l_1\le n-k}D_{l_1,k}\Big| \lesssim L/n,
\end{align}
where the last inequality holds since $\{D_{l_1,k}\}_{l_1}$ are martingale differences with respect to the filtration $(\ldots,\eta_{l_1-1},\eta_{l_1})$, and therefore, we can apply Lemma \ref{lemma_burkholder} and similar techniques adopted in the proof of Theorem 1 in \textcite{li_ell2_2023} to achieve the desired bound. Next, it follows from Minkowski's inequality and Cauchy-Schwarz inequality that
\begin{align}
    \label{eq_prop3_epsilon_twoparts1}
    \EE\Big|\frac{1}{n}\sum_{k=0}^{L-1}\sum_{i=1}^{n-k}\epsilon_i(\hat\epsilon_{i+k} - \epsilon_{i+k})\Big| & \le \frac{1}{n}\sum_{k=0}^{L-1}\sum_{i=1}^{n-k}\|\epsilon_i\|_2\|\hat\epsilon_i - \epsilon_i\|_2.
\end{align}
Recall that $\hat\epsilon_i=\big(Y_i-Z_i^{\top}\hat\bbeta -\hat\mu^*(X_i)\big)/\hat\sigma(X_i)$, $\hat\mu^*(X_i)=\sum_{t=1}^nw_h(X_i,X_t)(Y_t-Z_t^{\top}\hat\bbeta)$, and $\tilde Y_i$, $\tilde Z_i$ are the kernel estimators of $Y_i$ and $Z_i$ respectively. Then, we have
\begin{align}
    \label{eq_prop3_epsilon_twoparts2}
    \|\hat\epsilon_i - \epsilon_i\|_2 & \le \Big\|\big(Y_i-Z_i^{\top}\hat\bbeta -\hat\mu^*(X_i)\big)\Big(\frac{1}{\hat\sigma(X_i)} - \frac{1}{\sigma(X_i)}\Big)\Big\|_2 + \Big\|\frac{(Y_i-\tilde Y_i) - (Z_i-\tilde Z_i)^{\top}\hat\bbeta - \sigma(X_i)\epsilon_i}{\sigma(X_i)}\Big\|_2 \nonumber \\
    & = \Big\|\big[(Y_i-\tilde Y_i) - (Z_i-\tilde Z_i)^{\top}\hat\bbeta\big]\Big(\frac{1}{\hat\sigma(X_i)} - \frac{1}{\sigma(X_i)}\Big)\Big\|_2 + \Big\|\frac{\mu(X_i) -\hat\mu^*(X_i) + Z_i^{\top}(\bbeta-\hat\bbeta)}{\sigma(X_i)}\Big\|_2 \nonumber \\
    & =: \tilde\III_{i,1} + \tilde\III_{i,2}.
\end{align}
Regarding the first part $\tilde\III_{i,1}$, we note that by the similar M/R decomposition technique applied in the proof of Lemma \ref{lemma_weight} and Proposition \ref{prop2}, it follows from Freedman's inequality in Lemma \ref{lemma_freedman} that
\begin{align}
    \max_{1\le i\le n, X_i\in\T_d}\tilde\III_{i,1} \lesssim h+\sqrt{\log(n)/(h^dn)}.
\end{align}
Similarly, by Proposition \ref{prop1} and the similar arguments in expressions (\ref{eq_prop2_term1})--(\ref{eq_prop2_part2_step2}), we can bound the second part $\tilde\III_{i,2}$ as follows:
\begin{align}
    \max_{1\le i\le n,X_i\in\T_d}\tilde\III_{i,2} \lesssim h +\sqrt{\log(n)/(h^dn)}+ 1/\sqrt{n}.
\end{align}
This, along with the result of $\max_{1\le i\le n, X_i\in\T_d}\tilde\III_{i,1}$ and expressions (\ref{eq_prop3_epsilon_twoparts1}) and (\ref{eq_prop3_epsilon_twoparts2}) yields
\begin{equation}
    \label{eq_prop3_autodeviation}
    \EE\Big|\frac{1}{n}\sum_{k=0}^{L-1}\sum_{i=1}^{n-k}\epsilon_i(\hat\epsilon_{i+k}-\epsilon_{i+k})\Big| \lesssim \frac{L^2}{n}\Big(h+\sqrt{\frac{\log(n)}{h^dn}} + \frac{1}{\sqrt{n}}\Big),
\end{equation}
with high chance. Finally, by inserting expressions (\ref{eq_prop3_autocenter}) and (\ref{eq_prop3_autodeviation}) back into (\ref{eq_prop3_Lpart2}), we achieve, for any $1\le j_1,j_2\le n$, with probability tending to 1,
\begin{align}
    & \quad |\hat Q_{j_1,j_2}^{(L)} - Q_{j_1,j_2}^{(L)}| \lesssim \frac{L^2}{n}\Big(h+\sqrt{\frac{\log(n)}{h^dn}} + \frac{1}{\sqrt{n}}\Big) + L/n,
\end{align}
which together with expression (\ref{eq_prop3_residual}) further gives
\begin{equation}
    \max_{1\le j_1,j_2\le n,X_{j_1},X_{j_2}\in\T_d}|\hat Q_{j_1,j_2}^{(L)} - Q_{j_1,j_2}| = O_{\PP}\Big\{\frac{L^2}{n}\Big(h+\sqrt{\frac{\log(n)}{h^dn}} + \frac{1}{\sqrt{n}}\Big) + L/n + L^{-\zeta}\Big\}.
\end{equation}
The desired result is achieved.
\end{proof}

\subsection*{S.6 Proof of Theorem \ref{thm2_GA}}
\begin{proof}[Proof of Theorem \ref{thm2_GA}]
Recall the definition of $T_n$ in expression (\ref{eq_thm1_T}) and we similarly define
\begin{equation}
    \label{eq_thm2_Tstar}
    T_n^* = \sqrt{h^dn}\sup_{x\in\T_d}\big|\hat\mu^*(x)-\mu(x)\big|/\hat\sigma(x).
\end{equation}
Also, recall that $\hat \Z_j$ is a centered Gaussian random field with conditional covariance matrix $\hat Q^{(L)}=(\hat Q_{j,j'}^{(L)})_{1\le j,j'\le n}$, where $\hat Q_{j,j'}^{(L)}$ is defined in expression (\ref{eq_Q_hat}). For $\Delta_1$--$\Delta_3$ defined in Theorem \ref{thm1_GA}, we denote $\Delta = \Delta_1+\Delta_2+\Delta_3$. Then, it follows from Theorem \ref{thm1_GA} that, for any $\alpha_1,\alpha_2>0$,
\begin{align}
    \label{eq_thm2_goal}
    & \quad \sup_{u\in\RR}\Big[\PP\big( T_n^*\le u\big) - \PP\big(\sup_{t\in\T_d}|\hat\Z_t|\le u\big) \Big] \nonumber \\ 
    & \le \sup_{u\in\RR}\Big[\PP\big( T_n^*\le u\big) - \PP\big(\sup_{t\in\T_d}|\Z_t|\le u\big) \Big] + \sup_{u\in\RR}\Big|\PP\big( \sup_{t\in\T_d}|\Z_t|\le u\big) - \PP\big(\sup_{t\in\T_d}|\hat\Z_t|\le u\big) \Big| \nonumber \\
    & \le \PP\big(|T_n^*-T_n|\ge \alpha\big) + \PP\Big(\big|\sup_{t\in\T_d}|\Z_t|-u\big|\le \alpha\Big) + \sup_{u\in\RR}\Big|\PP\big( \sup_{t\in\T_d}|\Z_t|\le u\big) - \PP\big(\sup_{t\in\T_d}|\hat\Z_t|\le u\big) \Big| + \Delta \nonumber \\
    & =: \sum_{k=1}^3\III_k +\Delta.
\end{align}
We shall investigate the parts $\III_1$--$\III_3$ separately. First, let $\alpha=c_n\sqrt{1/\log(n)}$, for some positive constant $c_n\rightarrow0$ as $n\rightarrow\infty$. As a direct consequence of Lemma \ref{lemma_est},
\begin{align}
    \label{eq_thm2_part1}
    \III_1 = \PP\big(|T_n^*-T_n|\ge c_n\log^{-1/2}(n)\big) =o(1).
\end{align}

For the part $\III_2$, recall that by expressions (\ref{eq_thm1_longrun}) and (\ref{eq_thm1_part5_result}), each entry in the conditional covariance matrix $Q^{(L)}$ is lower bounded, that is $\min_{1\le j\le N}Q_{j,j}^{(L)}\ge c$ for some constant $c>0$. Then, by applying the anti-concentration inequality in Lemma \ref{lemma_nazarov}, we have
\begin{equation}
    \label{eq_thm2_part2}
    \III_2\lesssim \alpha\sqrt{\log(n)} = o(1),
\end{equation}
where the constant in $\lesssim$ only depends on $c$.

Regarding the part $\III_3$, recall the $\delta$-net $\{x_j\}_{j=1}^N\subset\T_d$ defined in Lemma \ref{lemma_deltanet}. We first note that, for any $\alpha_1,\alpha_2>0$, we have
\begin{align}
    \label{eq_thm2_GS}
    & \quad \sup_{u\in\RR}\Big[\PP\big( \sup_{t\in\T_d}|\Z_t|\le u\big) - \PP\big(\sup_{t\in\T_d}|\hat\Z_t|\le u\big)\Big] \nonumber \\
    & \le \PP\Big(\big|\sup_{t\in\T_d}|\Z_t| - \max_{1\le j\le N}|\Z_{t_j}|\big| \ge \alpha_1 \Big) + \sup_{u\in\RR}\Big|\PP\big( \max_{1\le j\le N}|\Z_{t_j}|\le u\big) - \PP\big(\max_{1\le j\le N}|\hat\Z_{t_j}|\le u\big)\Big| \nonumber \\
    & \quad + \PP\Big(\big|\sup_{t\in\T_d}|\hat\Z_t| - \max_{1\le j\le N}|\hat\Z_{t_j}|\big| \ge \alpha_2 \Big) \nonumber \\
    & =: \III_{31} + \III_{32} + \III_{33}.
\end{align}
For the part $\III_{31}$, by Lemma \ref{lemma_deltanet}, we let $\alpha_1 = c_1h$, for some constant $c_1>0$, and obtain
\begin{equation}
    \III_{31} \lesssim 1/n + \PP(\A_n^c),
\end{equation}
where $\PP(\A_n^c) = O\big(n^{(-q/2+1)\vee(-\xi q)}\big)$, for $q\ge4$ and $\xi>0$, as implied by Lemma \ref{lemma_weight}. Similarly, we have $\III_{33}\lesssim 1/n+\PP(\A_n^c)$. For the part $\III_{32}$, it follows from Proposition \ref{prop_longrun} that $\max_{1\le j,j'\le N}|\hat Q_{j,j'}^{(L)} - Q_{j,j'}^{(L)}| = O_{\PP}\big\{n^{-1}L^2\big(h+\sqrt{\log(n)/(h^dn)} + 1/\sqrt{n}\big) + L/n + L^{-\zeta}\big\}$, for $\zeta>1$ and some large positive integer $L$. Let $L=\sqrt{n}$. Then, as a direct consequence of Lemma \ref{lemma_comparison}, we have, with probability tending to 1,
\begin{align}
    \III_{32} & \lesssim \Big[n^{-1}L^2\Big(h+\sqrt{\log(n)/(h^dn)} + 1/\sqrt{n}\Big) + L/n + L^{-\zeta}\Big]^{1/3}\log^{2/3}(N) \nonumber \\
    & \lesssim \Big(hL^2/n+L^2\sqrt{\log(n)/(h^dn^3)} + L/n + L^{-\zeta}\Big)^{1/3}\log^{2/3}(n).
\end{align}

By the results of $\III_1$--$\III_3$ and a similar argument for the other side of the inequality in expressions (\ref{eq_thm2_GS}) and (\ref{eq_thm2_goal}), respectively, we complete the proof.
\end{proof}

\begin{lemma}[Precision of $T_n^*$]
    \label{lemma_est}
    Under the Assumptions in Theorem \ref{thm1_GA}, Propositions \ref{prop1} and \ref{prop2}, for $T_n$ and $T_n^*$ defined in expressions (\ref{eq_thm1_T}) and (\ref{eq_thm2_Tstar}) respectively, we have
    $$\sup_{x\in\T_d}|T_n^*-T_n| = o_{\PP}\big\{1/\sqrt{\log(n)}\big\}.$$
\end{lemma}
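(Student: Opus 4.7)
The plan is to control $|T_n^*-T_n|$ by the triangle inequality applied to the ratios, isolating the two sources of discrepancy: the replacement of $\bbeta$ in $\hat\mu^*$ by its estimator $\hat\bbeta$, and the replacement of the denominator $\sigma(x)$ by $\hat\sigma(x)$. Specifically, I would start from
\begin{align*}
|T_n^*-T_n| \le \sqrt{h^dn}\sup_{x\in\T_d}\frac{|\hat\mu^*(x)-\hat\mu(x)|}{\hat\sigma(x)} + \sqrt{h^dn}\sup_{x\in\T_d}|\hat\mu(x)-\mu(x)|\cdot \Big|\frac{1}{\hat\sigma(x)}-\frac{1}{\sigma(x)}\Big|,
\end{align*}
and bound the two terms separately, using Assumption \ref{asm_sigma}(ii) to keep $1/\sigma(x)$ and (with high probability) $1/\hat\sigma(x)$ bounded on $\T_d$.

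For the first term, I would use the identity $\hat\mu^*(x)-\hat\mu(x)=-\sum_{i=1}^n w_h(x,X_i)Z_i^\top(\hat\bbeta-\bbeta)$ together with H\"older's inequality to obtain $\sup_{x\in\T_d}|\hat\mu^*(x)-\hat\mu(x)|\lesssim |\hat\bbeta-\bbeta|_\infty\cdot\sup_{x\in\T_d}\big|\sum_{i=1}^n w_h(x,X_i)Z_i\big|_1$. Proposition \ref{prop1} gives $|\hat\bbeta-\bbeta|_\infty=O_\PP(1/\sqrt{n})$, and combining Assumption \ref{asm_iden} with the weight bound $\sum_i|w_h(x,X_i)|=O(1)$ from Lemma \ref{lemma_weight} shows the weighted sum of $Z_i$ is $O_\PP(1)$ uniformly. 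Hence this term is $O_\PP(\sqrt{h^dn}/\sqrt{n})=O_\PP(\sqrt{h^d})=o_\PP(1/\sqrt{\log(n)})$ under $h^d\log(n)\to 0$, which is implied by $h\log(n)\to 0$ in (\ref{thm1_o1}).

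For the second term, I would first note that by Theorem \ref{thm1_GA} together with the Gaussian maximum bound on the $\delta$-net (the supremum being approximated by a maximum over $N=O(h^{-3d/2-1}n^{1/2+1/q})$ discrete points with bounded conditional variances, thanks to expression (\ref{eq_thm1_part5_result})), one has $T_n=O_\PP(\sqrt{\log(n)})$, so $\sqrt{h^dn}\sup_{x\in\T_d}|\hat\mu(x)-\mu(x)|/\sigma(x)=O_\PP(\sqrt{\log(n)})$. For the volatility factor, Proposition \ref{prop2} combined with Assumption \ref{asm_sigma}(ii) (which keeps $\sigma$ away from zero and lets me pass from $\hat\sigma^2$ to $\hat\sigma$ via the identity $\hat\sigma-\sigma=(\hat\sigma^2-\sigma^2)/(\hat\sigma+\sigma)$) yields $\sup_{x\in\T_d}|1/\hat\sigma(x)-1/\sigma(x)|=O_\PP\{h+1/n+\sqrt{\log(n)/(h^dn)}\}$. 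Multiplying the two bounds gives this second term as $O_\PP\{\sqrt{\log(n)}(h+\sqrt{\log(n)/(h^dn)})\}$; the conditions $h\log(n)\to 0$ and $(h^d)^{2-q}n^{2-q}\log^{3q}(n)\to 0$ of (\ref{thm1_o1}) together imply $\log^2(n)/(h^dn)\to 0$ (for $q\ge 4$, the second condition forces $h^dn\gtrsim\log^{3q/(q-2)}(n)$), so this term is also $o_\PP(1/\sqrt{\log(n)})$.

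The main obstacle, in my view, is not any single bound but the bookkeeping around the set $\A_n$ on which the uniform weight bounds and Proposition \ref{prop2} hold; since $\PP(\A_n^c)=O(n^{(-q/2+1)\vee(-\xi q)})\to 0$, this only contributes a negligible additive term and does not affect the $o_\PP$ conclusion. Combining the two bounds above finishes the proof.
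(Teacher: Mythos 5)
Your proposal is correct and follows essentially the same route as the paper: the identical two-term decomposition into the $\hat\mu^*-\hat\mu$ contribution (controlled via $\hat\bbeta-\bbeta$ and Proposition \ref{prop1}) and the $(\hat\mu-\mu)\cdot(1/\hat\sigma-1/\sigma)$ contribution (controlled via Theorem \ref{thm1_GA} and Proposition \ref{prop2}), with the same rate bookkeeping under (\ref{thm1_o1}). The only point treated more lightly than in the paper is the uniform-in-$x$ control of $\sum_i w_h(x,X_i)Z_i$, which does not follow from the weight bound alone since $Z_i$ is unbounded and requires the Freedman-plus-$\delta$-net concentration step the paper carries out in (\ref{eq_lemmaT_freed}).
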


\begin{proof}[Proof of Lemma \ref{lemma_est}]
Note that by the definitions of $T_n^*$ and $T_n$, we have, for any $x\in\T_d$,
\begin{align}
    \label{eq_lemmaT_goal}
    |T_n^*-T_n|\le (h^dn)^{1/2} \frac{|\hat\mu^*(x)-\hat\mu(x)|}{\hat\sigma(x)} + (h^dn)^{1/2}\frac{|\hat\mu(x)-\mu(x)|}{\sigma(x)} \cdot \frac{|\hat\sigma(x)-\sigma(x)|}{\hat\sigma(x)}=: \III_1(x) + \III_2(x).
\end{align}
We shall investigate the two parts $\III_1(x)$ and $\III_2(x)$ separately. For the part $\III_1(x)$, by the definitions of $\hat\mu(x)$ and $\hat\mu^*(x)$ in expressions (\ref{eq_mu_hat_sol}) and (\ref{eq_mu_hat_star}), we obtain
\begin{align}
    \label{eq_lemmaT_part1_goal}
    \III_1(x) & = (h^dn)^{1/2}\Big|\sum_{i=1}^nw_h(x,X_i)Z_i^{\top}(\hat\bbeta - \bbeta)\Big|/\hat\sigma(x) \nonumber \\
    & \lesssim (h^dn)^{1/2}|\hat\bbeta - \bbeta|_2\Big|\sum_{i=1}^nw_h(x,X_i)Z_i\Big|_2/\hat\sigma(x).
\end{align}
Recall the definition of $\delta$-net $\{x_j\}_{j=1}^N$ in Lemma \ref{lemma_deltanet}. By Assumptions \ref{asm_iden} (i) and (ii), it follows from Freedman's inequality in Lemma \ref{lemma_freedman} that, for any $z>0$,
\begin{align}
    \label{eq_lemmaT_freed}
    \PP\Big(\max_{1\le j\le N ,x_j\in\T_d}\sum_{i=1}^n|w_h(x_j,X_i)|\cdot\EE_0|Z_i|_2/\hat\sigma(x_j) \ge z\Big) \lesssim n\log^{q/2}(z)/(h^dnz)^{q/2} + ne^{\frac{-z^2}{2/(h^dn)}}.
\end{align}
Let $z=\sqrt{\log(n)/(h^dn)}$. 
Also, we shall note that by Assumptions \ref{asm_sigma}, \ref{asm_iden} and Proposition \ref{prop2}, it follows that $\max_j\EE\big|\sum_{i=1}^nw_h(x_j,X_i)Z_i\big|_2/\hat\sigma(x_j)=O_{\PP}(1)$. 
Then, by applying similar arguments of the $\delta$-net approximation in Lemma \ref{lemma_deltanet}, it follows from Assumption \ref{asm_sigma}
and Proposition \ref{prop1}
that
\begin{equation}
    \label{eq_lemmaT_part1_result}
    \sup_{x\in\T_d}\III_1(x) =O_{\PP}\Big\{h+\sqrt{h^d}\Big\},
\end{equation}
where the first term results from the error due to the $\delta$-net approximation. For the part $\III_2(x)$, by Theorem \ref{thm1_GA}, we have
\begin{equation}
    \label{eq_lemmaT_part21}
    \sup_{x\in\T_d}|\hat\mu(x)-\mu(x)|/\sigma(x) = O_{\PP}\Big\{\sqrt{\log(n)/(h^dn)}\Big\}.
\end{equation}
This, along with Assumption \ref{asm_sigma} and Proposition \ref{prop2} yields
\begin{equation}
    \label{eq_lemmaT_part2_result}
    \sup_{x\in\T_d} \III_2(x) =o_{\PP}\Big\{h\sqrt{\log(n)} +\frac{\sqrt{\log(n)}}{n}+\frac{\log(n)}{\sqrt{h^dn}} \Big\}.
\end{equation}
We insert expressions (\ref{eq_lemmaT_part1_result}) and (\ref{eq_lemmaT_part2_result}) into expression (\ref{eq_lemmaT_goal}), and by the conditions in Theorem \ref{thm1_GA}, we achieve the desired result.
\end{proof}

\end{document}